\newtheorem{theorem}{Theorem}[section]
\newtheorem{proposition}[theorem]{Proposition}
\newtheorem{corollary}[theorem]{Corollary}
\newtheorem{lemma}[theorem]{Lemma}
\newtheorem{notation}[theorem]{Notation}
\theoremstyle{definition}
\newtheorem{definition}[theorem]{Definition}
\theoremstyle{remark}
\newtheorem{remark}[theorem]{Remark}
\numberwithin{equation}{section}
\newcommand{\C}{\mathbb{C}}
\newcommand{\Z}{\mathbb{Z}}
\newcommand{\p}{\mathbb{P}}
\newcommand{\Pic}{\operatorname{Pic}}
\newcommand{\Div}{\operatorname{Div}}
\newcommand{\pain}[1]{\text{P}_{\mathrm{#1}}}
\begin{document}

\title[On the geometry of a 4-dimensional extension of a $q$-Painlev\'e I equation with symmetry type $A_1^{(1)}$]{On the geometry of a 4-dimensional extension of a $q$-Painlev\'e I equation with symmetry type $A_1^{(1)}$}

\author[AS]{Alexander Stokes$^{1}$}

\address{$^1$Waseda Institute for Advanced Study, Waseda University, 1-21-1 Nishi Waseda Shinjuku-ku Tokyo 169-0051, Japan.}

\author[TT]{Tomoyuki Takenawa$^2$}

\address{$^2$Faculty of Marine Technology, Tokyo University of Marine Science and Technology, 2-1-6 Etchu-jima, Koto-Ku, Tokyo 135-8533, Japan.}

\author[ASC]{Adrian Stefan Carstea$^3$}
\address{$^3$Department of Theoretical Physics, Institute of Physics and Nuclear Engineering\\ Reactorului 15, 077125, Magurele, Bucharest, Romania}

% \address{$^{\dagger}$corresponding author}
\email{stokes@aoni.waseda.jp, 
takenawa@kaiyodai.ac.jp, 
carstea@gmail.com} 

\keywords{discrete Painlev\'e equation, birational map, pseudo-isomorphism, rational 4-fold, Cremona isometry, affine Weyl group}

\maketitle

% \begin{center}

% {\bf Alex-san, Takenawa-san, Stefan-san}\\
% \medskip

%Department of Theoretical Physics, Institute of Physics and Nuclear Engineering\\ Reactorului 15, 077125, Magurele, Bucharest, Romania\\
%E-mail: carstea@gmail.com\\[5pt] 

% \end{center}
\begin{abstract}
We present a geometric study of a four-dimensional integrable discrete dynamical system which extends the autonomous form of a $q$-Painlev\'e I equation with symmetry of type $A_1^{(1)}$. 
By resolution of singularities it is lifted to a pseudo-automorphism of a rational variety obtained from $({\mathbb P}^1)^{\times 4}$ by blowing up along 28 subvarieties and we use this to establish its integrability in terms of conserved quantities and degree growth. 
We embed this rational variety into a family which admits an action of the extended affine Weyl group $\widetilde{W}(A_1^{(1)})\times \widetilde{W}(A_1^{(1)})$ by pseudo-isomorphisms.
We use this to construct two 4-dimensional analogues of $q$-Painlev\'e equations, one of which is a deautonomisation of the original autonomous integrable map.
\end{abstract}

\section{Introduction}
Singularities play an essential role in the study of integrability. Not only the Painlev\'e property (which concerns the absence of movable critical singularities of solutions of differential equations), but also singularity confinement have proven to be useful integrability detectors for continuous and discrete dynamical systems. 
Sakai \cite{sakai} realised that singularity confinement is intimately related to the classical desingularisation (blowing-up/down) procedure in birational algebraic geometry and that the singularity confinement property of second-order discrete systems defined by birational mappings means they are turned into regular automorphisms of rational surfaces (or families of isomorphisms of families of surfaces in the non-autonomous case of discrete Painlev\'e equations). 
This was a crucial observation which gave rise to a geometric approach to the classification of all discrete Painlev\'e equations of order two in terms of generalised Halphen surfaces, which are smooth complex projective rational surfaces defined by the existence of a special effective anticanonical divisor. 
In this approach, one constructs families of surfaces obtained by blowing up nine (possibly infinitely near) points in the complex projective plane $\mathbb{P}^2(\C)$. 
The configurations of points can be generic or degenerate, and these determine different arrangements of irreducible components of an effective anticanonical divisor, encoded in a Dynkin diagram of affine type.
The orthogonal complement of the divisor classes of these irreducible components in the Picard lattice of the surface is the root lattice of an affine root system, and the associated extended affine Weyl group acts as symmetries of the family via Cremona isometries. 
Discrete Painlevé equations correspond to the translational elements of the group and they act as automorphisms of the family of surfaces. 
One of the most important results of Sakai’s approach is the discovery of the elliptic discrete Painlev\'e equations whose coefficients are written in terms of elliptic functions, these elliptic equations being the most general systems at the top of the classification.

Recently, the study of four-dimensional Painlev\'e systems in the continuous case has been proposed using isomonodromic deformation of linear equations \cite{37, 27, Kawakami-I, Kawakami-II,Kawakami-III}. 
Higher-dimensional analogues of discrete Painlev\'e equations have also been discovered through various approaches, including through reductions of discrete integrable hierarchies (e.g. \cite{Tsuda-2010, Suzuki-2015}) and via birational representations of affine Weyl groups  including those coming from the theory of cluster algebras \cite{Masuda-2015,Nakazono-2023,Nakazono-2024, Okubo-Suzuki-2020, Okubo-Suzuki-2022, Masuda-Okubo-Tsuda-2021}. 
There are also QRT-type maps in higher dimensions \cite{Tsuda-2004, Alonso-Suris-Wei}, deautonomisation of which can potentially lead to discrete Painlev\'e-type systems.
B\"acklund transformations of higher-order analogues of differential Painlev\'e equations are the ``oldest'' source of higher discrete Painlev\'e equations, together with discrete Painlev\'e hierarchies obtained via the Lax formalism \cite{Joshi-Cresswell, Joshi-Alrashdi}. 

However, a geometric approach to higher-dimensional discrete Painlev\'e systems analogous to that of Sakai has only been carried out for a few examples.
While a mechanism for associating rational varieties to higher-dimensional discrete Painlevé equations via a kind of desingularisation of birational mappings is known, this is more complicated than in the Sakai framework since it may require the blowing-up of subvarieties of codimension higher than two. 
Although some studies of higher-dimensional birational mappings related to symmetries of varieties or dynamical systems have been reported, most of them consider only the case where varieties are obtained from projective space by \emph{blowing up only points} \cite{Dolgachev-Ortland-1989, Takenawa-2004, Bedford-Kim-2004, Bedford-Kim-2008}. 
In particular, 3D analogues of elliptic discrete Painlev\'e equations were constructed from blow-ups of points in $\p^3$ \cite{Takenawa-2004}.
In addition to this, the geometry has been worked out for higher $q$-analogues of the Sasano system \cite{Tsuda-Takenawa-2009}, the 4D Fuji-Suzuki-Tsuda system \cite{Takenawa-2021}, and B\"acklund transformations of the 4D Garnier system \cite{Takenawa-2024}.
The central problem in the geometric approach to higher-dimensional analogues of discrete Painlev\'e equations at present is that it is not yet clear what the general definition of higher-dimensional analogues of generalised Halphen surfaces should be -- we need more examples to inform the development of a general theory. 

To this end, in this paper we establish the geometric picture for the autonomous 4D system defined by the birational mapping  
\begin{equation}
\label{Map1}
\begin{aligned}
    \varphi: \left(\p^1\right)^{\times 4} &\rightarrow \left(\p^1 \right)^{\times 4}, \\
    (q_1,p_1,q_2,p_2)&\mapsto(\bar{q}_1,\bar{p}_1,\bar{q}_2,\bar{p}_2),
\end{aligned}
\qquad 
\left\{\begin{array}{rcl}
\bar{q}_1&=&\frac{1+bq_2}{q_1q_2p_2},\\
\bar{p}_1&=&q_2, \\
\bar{q}_2&=&\frac{1+bq_1}{q_1p_1q_2},\\
\bar{p}_2&=&q_1, 
\end{array} \right.
\end{equation}
where $\p^1=\p^1(\C)$ is the complex projective line, $q_1,p_1,q_2,p_2$ are affine coordinates, and $b\in \C^*$ is a parameter.
The motivation for considering the mapping \eqref{Map1} is that two of the authors (ASC and TT) studied in \cite{stefan} a four-dimensional two-component version of the autonomous version of an additive-type discrete Painlev\'e I equation of affine Weyl group symmetry type $A_2^{(1)}$. 
This led to four-dimensional discrete Painlev\'e systems with symmetries of type $A_5^{(1)}$. 
Accordingly, it is natural to see what happens in the case of a \emph{multiplicative-type} mapping under the same kind of two-component extension, and we have chosen a simple example of the autonomous version of a $q$-Painlev\'e I equation with symmetry type $A_1^{(1)}$:
\begin{equation}\label{qp1}
x_{n+1}x_{n-1}=\frac{a+bx_n}{x_n^2},
\end{equation}
where $a,b \in \C^*$.
This can be recast as a birational mapping 
\begin{equation} \label{qp1mapping}
\begin{aligned}
    \left(\p^1\right)^{\times 2} &\rightarrow \left(\p^1\right)^{\times 2}, \\
    (x,y)&\mapsto(\bar{x},\bar{y}),
\end{aligned}
\qquad 
\left\{\begin{array}{rcl}
\bar{x}&=&\frac{a+b x}{x^2y},\\
\bar{y}&=&x, 
\end{array} \right.
\end{equation}
by setting $x_{n-1}= y$, $x_n=x = \bar{y}$ and $x_{n+1}=\bar{x}$, 
Performing a similar procedure as in \cite{stefan} to obtain a two-component version of \eqref{qp1mapping} leads to the mapping \eqref{Map1}.

\begin{remark}
    The non-autonomous version of equation \eqref{qp1} (obtained by replacing the constant $b$ with a multiplicative-affine function of $n$, i.e. $b q^n$ where $b,q\in \C$ with $0 < |\text{q}|<1$) is usually dubbed a $q$-Painlev\'e I equation. 
    This is due to it admitting a continuum limit to the first Painlev\'e equation. 
    Indeed, one can see immediately that setting $t=\epsilon n$, $x_n=1+\epsilon^2 w(t)$, $b=4q^n$, $a=-3$, and $q=1-\epsilon^5/4$ and taking formally the continuum limit as $\epsilon \to 0$ in \eqref{qp1} leads to the classical Painlev\'e I equation $\tfrac{d^2w}{dt^2}=6w^2+t.$
\end{remark} 

In this paper we construct a rational variety $X$ to which the mapping \eqref{Map1} lifts to a pseudo-automorphism, obtained from $\left(\p^1\right)^{\times 4}$ by blowing up along 28 subvarieties of codimensions 2 or 3. 
From this we establish the integrability of the mapping \eqref{Map1}, by finding conserved quantities via the anticanonical linear system of $X$ and showing that it exhibits quadratic degree growth by computing the action of the mapping on cohomology.
We then deform this $X$ into a family of varieties $X_{\boldsymbol{b}}$, where $\boldsymbol{b}$ is a tuple of parameters, which admits an action of an extended affine Weyl group $\widetilde{W}(A_1^{(1)})\times \widetilde{W}(A_1^{(1)})$ by pseudo-automorphisms of the family, i.e. for each $w \in \widetilde{W}(A_1^{(1)}) \times \widetilde{W}(A_1^{(1)}) $ we have a pseudo-automorphism $w : X_{\boldsymbol{b}}\rightarrow X_{w(\boldsymbol{b})}$, where $\boldsymbol{b}\mapsto w(\boldsymbol{b})$ is an appropriate action on parameters.  
We use this to obtain two four-dimensional analogues of discrete Painlev\'e equations, one of which is a deautonomisation of the original autonomous mapping and the other corresponds to a different, non-conjugate, symmetry.

Note that if we extend equation \eqref{qp1} in a more straightforward manner, i.e. just by taking the trivial coupling of two copies of it, as
\begin{equation}
\label{Map1_alt}
\left\{
\begin{array}{rcl}
\bar{q}_1 &=& \frac{1+b q_1}{q_1^2 p_1},\\
\bar{p}_1 &=& q_1, \\[6pt]
\bar{q}_2 &=& \frac{1+b q_2}{q_2^2 p_2},\\
\bar{p}_2 &=& q_2,
\end{array}
\right.
\end{equation}
instead of \eqref{Map1}, the resulting system also leads to a family of rational varieties (essentially the direct product of the rational surfaces from the $q$-Painlev\'e I equation \eqref{qp1}) which admits the symmetry $\widetilde{W}(A_1^{(1)}) \times \widetilde{W}(A_1^{(1)})$. 
However, the construction of \eqref{Map1} is more intricate, making its symmetry much less apparent compared to that of \eqref{Map1_alt} from the trivial coupling.

\section{Blow-ups, pseudo-automorphisms and algebraic stability} \label{sec:blow-ups,pseudo-automorphisms}

We begin by recalling some definitions and facts about dynamical systems defined by birational mappings. 
For the sake of brevity we will include only those required to state our results and keep discussion to a minimum, but refer the reader to \cite{stefan, Dolgachev-Ortland-1989, Tsuda-Takenawa-2009} for more details and examples.
We work over $\C$ and all varieties will be smooth, projective and irreducible. 
% We use dashed arrows to indicate rational maps and solid arrows for morphisms.

% \subsection{Algebraic stability and pseudo-automorphisms}
Let us consider a rational map $f:\p^N \rightarrow \p^N$, given by a tuple of $N+1$ homogeneous polynomials having the same degree and without common polynomial factor: $f : {\bf x} \mapsto f({\bf x})=[ f_0({\bf x}), \dots ,f_{N+1}({\bf x}) ] \in \p^N$. 
The \emph{indeterminacy set} of $f$ is given by
\begin{equation*}
I(f)=\{{\bf x}\in\p^n~|~f_0({\bf x})=\dots=f_n({\bf x})=0\},
\end{equation*}
which is a subvariety of codimension 2 at least, away from which $f$ defines a holomorphic mapping $f:\p^N\setminus I(f)\rightarrow \p^N$. 
More generally, one can define the indeterminacy set $I(f)$ of a dominant rational map $f:X\rightarrow Y$ of varieties as the smallest Zariski-closed subset such that $f : X\setminus I(f) \rightarrow Y$ is a morphism.

Recall that for rational varieties $X$ and $Y$, a rational mapping $f: X \rightarrow Y$ induces by pull-back the map $f^*: \C(Y)\to \C(X)$ between the corresponding rational function fields, which then gives the maps $f^*: \Div(Y)\to \Div(X)$ of groups of divisors and $f^*: \Pic(Y)\to \Pic(X)$ of Picard groups. 
We also have the usual push-forward map $f_*:\Div(X)\to\Div(Y)$ and when $f$ is birational we have $f_*=(f^{-1})^{*}$. 
For a divisor $D$ on $Y$, it is important to distinguish between the \emph{total transform} $f^*(D)$ of $D$ under $f$ and the \emph{proper transform} (or \emph{strict transform}), which is the Zariski closure in $X$ of the set $\{f^{-1} (x)~|~x \text{ is a generic point in } D\}$.

The main idea we will use in treating rational maps is to regularise them using blow-ups, after which we can analyse the dynamics more easily via the pull-backs (respectively push-forwards) of the regularised maps, since these are linear. 
The notion of regularisation required for such an approach to be effective is provided by the concept of \emph{algebraic stability} which, roughly speaking, means that pull-back commutes with iteration. 
We have the following fundamental fact. 

\begin{proposition} \label{prop:nocontraction}
Let $f:X\to Y$ and $g:Y\to Z$ be dominant rational maps of varieties. 
Then
$f^*\circ g^* =  (g\circ f)^*$ 
holds if and only if there does not exist a prime divisor $D$ on $X$ such that $f(D\setminus I(f))\subset I(g)$.
\end{proposition}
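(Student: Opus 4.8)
The plan is to regularise both $f$ and $g\circ f$ simultaneously by a single resolution and then reduce the claimed identity to a comparison of two divisor pull-backs on the resolved variety. First I would choose a smooth projective variety $\Gamma$ together with a birational morphism $\pi:\Gamma\to X$ (a composition of blow-ups) such that both $\eta:=f\circ\pi:\Gamma\to Y$ and $\psi:=(g\circ f)\circ\pi:\Gamma\to Z$ are \emph{morphisms}; such a $\Gamma$ exists by resolving the indeterminacy locus of $f$ and then that of $g\circ f$ over $X$. By the definition of pull-back for rational maps via resolutions we have $f^*=\pi_*\circ\eta^*$ and $(g\circ f)^*=\pi_*\circ\psi^*$ on Picard groups, and since $g^*$ is computed from its own resolution it suffices, by linearity, to compare $\pi_*(\psi^*C)$ with $\pi_*(\eta^*(g^*C))=f^*(g^*C)$ for an arbitrary (very ample) divisor $C$ on $Z$.

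The key observation is that $\psi=g\circ\eta$ as rational maps, so on the open set $U:=\Gamma\setminus\eta^{-1}(I(g))$ the morphism $\eta$ avoids $I(g)$ and genuine functoriality of pull-back along morphisms gives $\psi^*C|_U=\eta^*(g^*C)|_U$. Hence $\psi^*C-\eta^*(g^*C)$ is supported on $\eta^{-1}(I(g))$, i.e. on prime divisors $D'\subset\Gamma$ with $\eta(D')\subseteq I(g)$. Applying $\pi_*$ annihilates every $\pi$-exceptional such $D'$, so
\[
(g\circ f)^*C-f^*(g^*C)=\sum_{D'} c_{D'}(C)\,\pi_*D',\qquad c_{D'}(C)\in\Z,
\]
the sum running over non-$\pi$-exceptional prime divisors $D'$ with $\eta(D')\subseteq I(g)$. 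For such a $D'$ the image $D:=\pi(D')$ is a prime divisor on $X$, and since $\pi|_{D'}$ is birational onto $D$ and $D\not\subseteq I(f)$ (as $I(f)$ has codimension at least two), a generic point $x\in D$ satisfies $f(x)=\eta(\pi^{-1}(x))\in\eta(D')\subseteq I(g)$, whence $f(D\setminus I(f))\subseteq I(g)$. This settles the ``if'' direction: if no prime divisor $D$ on $X$ has $f(D\setminus I(f))\subseteq I(g)$, then every $D'$ above is $\pi$-exceptional, the sum vanishes, and $f^*\circ g^*=(g\circ f)^*$.

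For the converse I would argue contrapositively: given a prime divisor $D$ on $X$ with $f(D\setminus I(f))\subseteq I(g)$, take $D'$ to be its proper transform in $\Gamma$, which is non-$\pi$-exceptional with $\eta(D')\subseteq I(g)$, and exhibit a $C$ for which $c_{D'}(C)\neq0$. For a general member $C$ of a very ample system, $g^*C$ is effective and, being of pure codimension one, is computed at generic points lying outside $I(g)$, whereas $\psi^*C$ genuinely ``sees'' the contracted divisor $D'$ since $\psi$ is a morphism collapsing $D'$ into the image of $I(g)$. I expect the main obstacle to lie precisely here: one must carry out a local multiplicity computation near the generic point of $D'$ to show $\operatorname{ord}_{D'}(\psi^*C)\neq\operatorname{ord}_{D'}(\eta^*(g^*C))$, ensuring that the general $C$ does not force the two multiplicities to coincide. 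An efficient way to detect the discrepancy is to intersect both sides with a general complete-intersection curve $\gamma\subset X$ meeting $D$ transversally at a generic point $p\in D\setminus I(f)$ and avoiding the other contracted divisors; then $f(\gamma)$ necessarily passes through $I(g)$, and comparing $(g\circ f)^*C\cdot\gamma$ with $f^*(g^*C)\cdot\gamma$ reduces the statement to the fact that a curve through a point of indeterminacy of $g$ acquires strictly positive excess intersection with $g^*C$, which is the crux of the argument.
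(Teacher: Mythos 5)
First, note that the paper does not prove this proposition itself but defers to \cite{dillerfavre,bayraktar,roeder,Bedford-Kim-2008}, so your argument has to stand on its own. Your framework is the standard one and your ``if'' direction is essentially complete: resolving $f$ and $g\circ f$ simultaneously by $\pi:\Gamma\to X$, writing $f^*=\pi_*\circ\eta^*$ and $(g\circ f)^*=\pi_*\circ\psi^*$, observing that $\psi^*C-\eta^*(g^*C)$ is supported on prime divisors $D'\subseteq\eta^{-1}(I(g))$, and noting that $\pi_*$ kills exactly the $\pi$-exceptional ones. Since $\pi$ is birational, distinct non-exceptional $D'$ push forward to distinct prime divisors with multiplicity one, so there is no cancellation, and the absence of a contracted $D$ forces the difference to vanish.

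The ``only if'' direction, however, has a genuine gap, which you partly acknowledge. The missing ingredient is not an ``excess intersection'' computation along an auxiliary curve but the following fact: if $|C|$ is a base-point-free (e.g.\ very ample) linear system on $Z$, then \emph{every} member of the pulled-back system $g^*|C|$ contains $I(g)$ in its support, i.e.\ $I(g)\subseteq\operatorname{Bs}(g^*|C|)$. Indeed, writing $t_i=g^{\#}s_i$ for a basis $(s_i)$ of sections with common factors removed, the assignment $y\mapsto[t_0(y):\cdots:t_N(y)]$ already defines $g$ wherever some $t_i$ does not vanish, so $I(g)\subseteq\bigcap_i\{t_i=0\}\subseteq\operatorname{Supp}(g^*C)$ for all $C$. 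Granting this, for general $C$ one gets $\operatorname{ord}_{D'}(\eta^*(g^*C))\geq 1$ (because $\overline{\eta(D')}\subseteq I(g)\subseteq\operatorname{Supp}(g^*C)$) while $\operatorname{ord}_{D'}(\psi^*C)=0$ (because $\psi$ is a morphism and a general member of a free system does not contain $\psi(D')$), whence $c_{D'}(C)\leq -1$ and the two classes differ. Note that this is the \emph{opposite} of your heuristic: it is $\eta^*(g^*C)$, not $\psi^*C$, that ``sees'' the contracted divisor. In the model example where $g$ is the standard Cremona involution of $\p^2$ and $f=g$, one has $\psi=\pi$, so $\psi^*C$ is a general line class containing no contracted divisor, whereas $\eta^*(g^*C)$ picks up each of the three contracted lines with multiplicity one, accounting for $f^*g^*H=4H\neq H=(g\circ f)^*H$. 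Without the base-locus fact your final reduction is circular, since ``a curve through a point of indeterminacy of $g$ acquires strictly positive excess intersection with $g^*C$'' is precisely the statement that needs proof.
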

For proofs, see \cite{dillerfavre} for the case when $X$, $Y$ and $Z$ are two-dimensional, and \cite{bayraktar,roeder,Bedford-Kim-2008} for the general case.
The following definition is due to Fornaess and Sibony \cite{FS1995}, which we make in the case relevant to this paper, i.e. when $X$ is a smooth projective variety.
\begin{definition}
    A rational map $\varphi$ from a variety $X$ to itself is called \emph{algebraically stable} or \emph{1-regular} if the pull-back $\varphi^*:\Pic(X)\rightarrow \Pic(X)$ satisfies
$(\varphi^*)^n=(\varphi^n)^*$. 
\end{definition}
Proposition \ref{prop:nocontraction} implies the following characterisation of algebraic stability.

\begin{proposition} \label{prop:nocontractionalgstable}
A dominant rational map $\varphi$ from a variety $X$ to itself is algebraically stable if and only if there does not exist a positive integer $k$ and a prime divisor $D$ on $X$ such that $f(D\setminus I(f))\subset I(f^k)$.
\end{proposition}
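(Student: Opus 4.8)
The plan is to deduce the statement directly from Proposition~\ref{prop:nocontraction} applied to the iterates of $\varphi$, combined with a short induction; essentially all of the genuine content is already contained in Proposition~\ref{prop:nocontraction}, which we take as given. Note first that since $\varphi$ is dominant, every iterate $\varphi^k$ is dominant as well, so Proposition~\ref{prop:nocontraction} applies to the pairs of maps considered below, and all pull-backs are understood as maps $\Pic(X)\to\Pic(X)$.

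First I would recast the geometric condition. The negation of the assertion ``there exist $k\geq 1$ and a prime divisor $D$ on $X$ with $\varphi(D\setminus I(\varphi))\subset I(\varphi^k)$'' is precisely the statement that for \emph{every} $k\geq 1$ there is no such prime divisor $D$. Applying Proposition~\ref{prop:nocontraction} with $f=\varphi$ and $g=\varphi^k$, so that $g\circ f=\varphi^{k+1}$, the absence of such a $D$ is equivalent, for each fixed $k$, to the identity
\begin{equation}\label{star}
\varphi^*\circ(\varphi^k)^*=(\varphi^{k+1})^*.
\end{equation}
Thus the condition in the statement is equivalent to \eqref{star} holding for all $k\geq 1$, and it remains only to show that this is in turn equivalent to algebraic stability.

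For the implication that \eqref{star} for all $k$ yields algebraic stability, I would prove $(\varphi^*)^n=(\varphi^n)^*$ by induction on $n$. The case $n=1$ is trivial, and assuming $(\varphi^*)^n=(\varphi^n)^*$ one computes $(\varphi^*)^{n+1}=\varphi^*\circ(\varphi^*)^n=\varphi^*\circ(\varphi^n)^*=(\varphi^{n+1})^*$, the last step being \eqref{star} with $k=n$. Conversely, if $\varphi$ is algebraically stable then for every $k$ one has $\varphi^*\circ(\varphi^k)^*=\varphi^*\circ(\varphi^*)^k=(\varphi^*)^{k+1}=(\varphi^{k+1})^*$, which is exactly \eqref{star}. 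Together with the reformulation above, these two implications give the claimed equivalence.

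Rather than a genuine obstacle, the main point to handle carefully is the interplay of quantifiers: the statement asserts the non-existence of a single bad pair $(k,D)$, whereas algebraic stability is a family of identities indexed by $n$, and the bridge between them is the family of identities \eqref{star} indexed by $k$. The one subtlety demanding attention is the order of composition when invoking Proposition~\ref{prop:nocontraction}: one must factor $\varphi^{k+1}=\varphi^k\circ\varphi$, taking $\varphi$ as the inner map $f$, since it is precisely this factorization that simultaneously produces the contraction condition $\varphi(D\setminus I(\varphi))\subset I(\varphi^k)$ matching the hypothesis and the pull-back identity \eqref{star} that drives the induction.
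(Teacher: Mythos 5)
Your proof is correct and follows exactly the route the paper intends: the paper states this proposition as an immediate consequence of Proposition~\ref{prop:nocontraction} without writing out the details, and your argument (applying that proposition to the factorisation $\varphi^{k+1}=\varphi^k\circ\varphi$ for each $k$, then bridging to algebraic stability by induction) is precisely the omitted derivation. You also correctly read the $f$'s in the statement as $\varphi$'s, which is a typo in the paper.
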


The results of Diller and Favre \cite{dillerfavre} imply that any birational self-map of a surface (i.e. a two-dimensional variety) $X$ can be birationally conjugated to an algebraically stable map.
That is, given $\varphi : X \rightarrow X$, one can find a surface $\widetilde{X}$ and birational map $\pi : \widetilde{X} \rightarrow X$, which is a composition of blow-ups and blow-downs, such that $\tilde{\varphi}:= \pi \circ \varphi \circ \pi^{-1}: \widetilde{X} \rightarrow \widetilde{X}$ is algebraically stable.

While a conjugation of a birational map $\varphi : X \rightarrow X$ to an algebraically stable one is not guaranteed to exist in higher dimensions (and does not in general by the results of \cite{Bell-Diller-Jonsson-Krieger-2024}, since if it did then the examples therein would not have transcendental dynamical degrees), for many integrable cases it has proven possible to obtain a conjugation under which $\varphi$ becomes not just algebraically stable, but a pseudo-automorphism as defined as follows.
\begin{definition}
    A birational mapping $f : X \rightarrow Y$ of varieties is called a pseudo-isomorphism (or pseudo-automorphism if $X=Y$) if it restricts to an isomorphism between Zariski open subsets whose complements are of codimension at least 2. That is, there exist Zariski open subsets $U\subset X$ and $V\subset Y$ such that 
    \begin{enumerate}
        \item the codimensions of $X\setminus U$ and $Y\setminus V$ are at least 2,
        \item $f|_{U} : U \rightarrow V$ is an isomorphism.
    \end{enumerate}
\end{definition}

Note that in dimension 2, the notion of pseudo-isomorphism coincides with that of isomorphism, but this is not true in higher dimensions.
Also note that a pseudo-automorphism is algebraically stable but the converse is not true in general.

In the case when, given a birational mapping, there exists a variety of which this becomes a pseudo-automorphism, the variety is sometimes referred to as a \emph{space of initial conditions}, by analogy with the algebraic surfaces on which the usual differential and discrete Painlev\'e equations are regularised in the work of Okamoto and Sakai respectively \cite{OKAMOTO1979,sakai}.
The strategy by which spaces of initial conditions for birational mappings in dimension greater than two are constructed in \cite{algebraicallystablevariety,stefan,Takenawa-2021,Takenawa-2024} (which will be adopted when we regularise the mapping \eqref{Map1} as a pseudo-automorphism in 
Section \ref{sec:regularisation}) is to blow-up along subvarieties of codimension $>1$ which are the images of codimension 1 subvarieties under $\varphi$ or $\varphi^{-1}$.

% \begin{itemize}

% \item To lift $\varphi$ to a pseudo-isomorphism: blow-up subvarieties of codimension >1 which are the images of codimension 1 subvarieties under $\varphi$ or $\varphi^{-1}$.
% \item In complex dimension 4, we can blow-up subvarieties of codimension 2 (e.g. planes), codimension 3 (curves), and codimension 4 (points).
% \item Think of the blow-up as a surgery that replaces a subvariety with a divisor (codimension 1), in a way that separates subvarieties passing through it.
% \end{itemize}

Let us briefly recall the process of blowing up along a subvariety in dimension $\geq 2$.
Consider a variety of dimension $N$, regarded as a complex manifold, with a local coordinate chart $U\subset \C^N$. 
Let $V$ be a subvariety of dimension $N-k$, $k\geq 2$, written in coordinates as 
\begin{equation*}
    V = \left\{ \left(x_1,\dots,x_N\right)\in U ~|~ x_1-h_1(x_{k+1},\dots x_N)=\dots=x_k-h_k(x_{k+1},\dots x_N)=0\right\},
\end{equation*}
for some regular functions $h_1,\dots,h_k$ on $U$. 
Blowing up along $V$ gives an open variety $X$, given as the gluing of the $k$ affine charts
\begin{align*}
U_i=\{(u_1^{(i)},\dots, u_k^{(i)},x_{k+1},\dots x_N )\in \C^N\}\qquad i=1,\dots,k,
\end{align*}
according to the birational morphism $\pi:X\to U$ defined by $\pi: U_i \to U$: 
\begin{align*}
(x_1,\dots,x_N) &= (u_1^{(i)}u_i^{(i)}+h_1, \dots, u_{i-1}^{(i)}u_i^{(i)}+h_{i-1},
u_i^{(i)}+h_i, \\
&\qquad u_{i+1}^{(i)}u_i^{(i)}+h_{i+1}, \dots, u_k^{(i)}u_i^{(i)}+h_k,x_{k+1},\dots,x_N).
\end{align*} 
It is convenient to write the coordinates of $U_i$ as
$$\left(\frac{x_1-h_1}{x_i-h_i},\dots,\frac{x_{i-1}-h_{i-1}}{x_i-h_i},x_i-h_i,
\frac{x_{i+1}-h_{i+1}}{x_i-h_i},\dots,\frac{x_k-h_k}{x_i-h_i},x_{k+1},\dots x_N \right).$$
We will sometimes refer to the subvariety $V$ as the \emph{centre of the blow-up}, and the birational morphism $\pi$ as the \emph{blow-up projection}.
The exceptional divisor $E=\pi^{-1}(V)$ of the blow-up is written as $u_i=0$ in $U_i$ and each point in the centre of the blow-up corresponds to a subvariety isomorphic to $\p^{k-1}$, points of which are parametrised as
$(x_1-h_1: \dots: x_{k-1}-h_k)$. 
Hence $E$ is locally a direct product $V \times \p^{k-1}$.
We call the copy of $\p^{k-1}$ over a point in $V$ a fibre of the exceptional divisor. 

If we begin with $(\p^1)^{\times N}$ and perform a sequence of $K$ blow-ups along $K$ subvarieties we obtain a rational variety $X$. 
The Picard group $\Pic(X)\simeq H^2(X,\Z)$ and its Poincar\'e dual $\simeq H_2(X,\Z)$ form the \emph{N\'eron-Severi bi-lattice} of $X$, i.e. the pair of free $\mathbb{Z}$-modules 
\begin{align}\label{NSbasis}
H^2(X,\Z)=\bigoplus_{i=1}^N \Z H_i \oplus \bigoplus_{k=1}^{K} \Z E_k,\quad 
H_2(X,\Z)=\bigoplus_{i=1}^N \Z h_i \oplus \bigoplus_{k=1}^{K} \Z e_k,
\end{align}
equipped with the bilinear intersection form given by
\begin{equation*}
\langle H_i, h_j\rangle = \delta_{ij},\quad 
\langle E_k, e_l\rangle = -\delta_{kl},\quad
\langle H_i, e_l\rangle = \langle E_k, h_j \rangle =0,
\end{equation*} 
where $\delta_{ij}$ is the Kronecker delta.
Here $H_i$ corresponds to the total transform under the blow-ups of a hyperplane given by the fibre of the projection from $(\p^1)^{\times N}$ onto the $i$-th factor.
Note that we abuse notation in using the same symbol $E_k$ for the exceptional divisor of the blow-up and for the corresponding element of $\Pic(X)$ or $H^2(X,\Z)$.

Let $\varphi$ be a pseudo-automorphism of $X$, and let $A$ and $B$ be matrices representing its actions by push-forward 
$$\varphi_*: H^2(X,\Z)\to  H^2(X,\Z)\quad {\rm and}\quad  \varphi_*: H_2(X,\Z)\to  H_2(X,\Z)$$
on cohomology and homology respectively in terms of the bases \eqref{NSbasis}.  
The push-forwards of a pseudo-automorphism preserve the intersection form, so for any ${\bf f}, {\bf g}\in\Z^{N+K}$ representing elements of $f\in H^2(X,\Z)$ and $g \in H_2(X,\Z)$ respectively in the bases \eqref{NSbasis}, it holds that
\begin{align*}
 \langle f,  g\rangle= {\bf f}^T J  {\bf g}, \quad
J=\begin{bmatrix}
I_N&0\\0& -I_K
\end{bmatrix},
\end{align*}
where ${*}^T$ denotes transpose and $I_m$ denotes the identity matrix of size $m$. 
Thus, $\langle \varphi_*{ f}, \varphi_*{g}\rangle= \langle {f}, { g}\rangle$ yields $A^T J B = J$ and, 
hence, 
\begin{align}\label{H2H2}
B= J (A^{-1})^T J,
\end{align} 
which is a formula for computing the push-forward action on $H_2(X,\Z)$ from that on $H^2(X,\Z)$.

In complex dimension 4, we can blow-up along subvarieties of codimension 2 (surfaces), codimension 3 (curves), and codimension 4 (points).
In an important departure from the two-dimensional case, we may blow-up along subvarieties which are distinct but whose intersection is non-empty.
Different choices of the order in which such subvarieties are blown up along will lead to varieties which are pseudo-isomorphic, but may not be isomorphic.
We will discuss this problem in Appendix~\ref{order_blow-ups} with an example.

\section{Singularities and their confinement}
One tool commonly used as a detector of integrability in discrete systems is the so-called singularity confinement criterion.
% , which was initially discovered while studying a fully discrete version of the Korteweg-de Vries equation \textcolor{blue}{[AS:cite \cite{GRP-1991}, or something else? I wasn't aware of this historical fact about the study of discrete KdV]}\textcolor{red}{[TT: The refrerence is OK, but that paper studies mainly on 2nd order recurrence equations, not only the discrete KdV]}\textcolor{blue}{[AS: Stefan-san wrote this bit about the discrete KdV, so maybe he could suggest a reference for this fact?]}.
Roughly speaking, this criterion requires that a singularity appearing spontaneously while iterating a discrete system (in the sense that the inverse of the evolution ceases to be well-defined, which is interpreted to mean that memory of the initial conditions is lost) can propagate for only a \emph{finite} number of iterations, after which this the memory of initial data is recovered and the singularity is said to be confined.

Of course, one has to clarify the notion of singularity here. 
For a discrete system defined by a birational map, the singularity confinement property (meaning that all singularities are confined) is intimately related to the ability to regularise the map as a pseudo-automorphism.
Before explaining this, we make the important remark that the singularity confinement property is neither necessary nor sufficient for integrability. 
This is because, for instance, there exist linearisable birational maps with unconfined singularities and also examples with the singularity confinement property but chaotic dynamics in the sense of exponential degree growth \cite{hietarintaviallet}, i.e. non-zero algebraic entropy \cite{Bellon-Viallet}.

Consider a smooth projective variety $X$ obtained as some compactification of $\C^N$, i.e. containing $\C^N$ as an open subset, and a birational map $f : X \rightarrow X$. 
Suppose there is some hypersurface $D \subset X$ which is {\it contracted} by $f$ to a lower-dimensional subvariety. 
This contraction constitutes the kind of loss of memory of initial data referred to above, and as such it is called a \emph{singularity} of $f$. 
In the language of birational geometry, this is nothing but the hypersurface $D$ being blown down by $f$ to the subvariety $f(D)$. 

Let us introduce the set of contracted hypersurfaces
$${\mathcal E}(f)=\{D \subset  X \text{ a hypersurface}~|~ \operatorname{det}({\partial f}/{\partial x})|_{D}=0\},$$
where the vanishing of the Jacobian is indicative of contraction to a lower dimensional subvariety. 
The singularity consituted by contraction of $D\in \mathcal{E}(f)$ is said to be \emph{confined} if there exists an integer $n\geq 2$ such that $f^n(D)$ is of codimension 1.
Then the coordinates of $D$ are recovered in those of another hypersurface $f^n(D)$, and in this sense the memory of initial conditions is recovered. 
If the singularity corresponding to the contraction of $D$ is confined for every $D$ in ${\mathcal E}(f)$, we say that $f$ satisfies the singularity confinement criterion. 
Note that the existence of a confined singularity of $f : X\rightarrow X$ implies, by Proposition \ref{prop:nocontractionalgstable}, that $f$ is not algebraically stable. 
In order to achieve algebraic stability, the varieties to which hypersurfaces are contracted must be blown up and $f$ conjugated by the corresponding blow-up projection.

\subsection{Singularities of the mapping \eqref{Map1}}
Let us consider the system \eqref{Map1}, which we reproduce here for convenience: 
\begin{equation}
 \label{Map1sec3}
\begin{aligned}
    \varphi: \left(\p^1\right)^{\times 4} &\rightarrow \left(\p^1 \right)^{\times 4}, \\
    (q_1,p_1,q_2,p_2)&\mapsto(\bar{q}_1,\bar{p}_1,\bar{q}_2,\bar{p}_2),
\end{aligned}
\qquad 
\left\{\begin{array}{rcl}
\bar{q}_1&=&\frac{1+bq_2}{q_1q_2p_2},\\
\bar{p}_1&=&q_2, \\
\bar{q}_2&=&\frac{1+bq_1}{q_1p_1q_2},\\
\bar{p}_2&=&q_1. 
\end{array} \right.
\end{equation}
It has the symmetry $(q_1,p_1)\leftrightarrow (q_2,p_2)$, which greatly alleviates the computations required to verify the singularity confinement property of the mapping \eqref{Map1sec3}. 
We will work with $\left(\mathbb{P}^1\right)^{\times 4}$ with the usual charts, i.e. pairs $x$, $\frac{1}{x}$ of complex coordinates to cover each copy of $\mathbb{P}^1$, $x\in\{q_1,p_1,q_2,p_2 \}$, leading to an atlas for $\left(\mathbb{P}^1\right)^{\times 4}$ made up of $2^4=16$ charts.
The Jacobian determinant of $\varphi$ (in the coordinates $(q_1,p_1,q_2,p_2)$) is given by
$$J=\frac{\partial(\bar{q}_1,\bar{p}_1,\bar{q}_2,\bar{p}_2)}{\partial(q_1,p_1,q_2,p_2)}=\frac{(1+bq_1)(1+bq_2)}{(q_1q_2p_1p_2)^2}.$$
One can see immediately that $\{q_{1}=-1/b\}$ and $\{q_2=-1/b\}$ both give hypersurfaces contracted by $\varphi$. 
The way in which singularity confinement is verified is, for example in the case of the hypersurface $\{q_1=-1/b\}$, is to consider an initial condition $(q_1,p_1,q_2,p_2) = (-1/b+ \varepsilon,u,v,w)$ for $u,v,w\in \p^1$ generic, and compute the iterates in the limit as $\varepsilon \to 0$. 
The computations are straightforward but best performed using computer algebra. 
The return of the hypersurface $\{q_1=-1/b\}$ to codimension one under $\varphi^5$, after which its codimension will not increase again, means that this singularity is confined. 

We summarise the sequence of subvarieties through which the hypersurfaces $\{q_{1}=-1/b\}$ and $\{q_2=-1/b\}$ move under successive iterations of $\varphi$ in the \emph{singularity  patterns}
\begin{equation} \label{confiningpattern1}
\left(
\begin{array}{c}
-1/b\\
*\\
*\\
*
\end{array}\right)
\to
\left(
\begin{array}{c}
*\\
*\\
0^1\\
-1/b
\end{array}\right)\to
\left(
\begin{array}{c}
\infty^1\\
0^1\\
\infty^1\\
*
\end{array}\right)\to
\left(
\begin{array}{c}
0^1\\
\infty^1\\
*\\
\infty^1
\end{array}\right)\to
\left(
\begin{array}{c}
*\\
*\\
-1/b\\
0^1
\end{array}\right)\to
\left(
\begin{array}{c}
*\\
-1/b\\
*\\
*
\end{array}\right),
\end{equation}
and 
\begin{equation} \label{confiningpattern2}
\left(
\begin{array}{c}
*\\
*\\
-1/b\\
*
\end{array}\right)
\to
\left(
\begin{array}{c}
0^1\\
-1/b\\
*\\
*
\end{array}\right)\to
\left(
\begin{array}{c}
\infty^1\\
*\\
\infty^1\\
0^1
\end{array}\right)\to
\left(
\begin{array}{c}
*\\
\infty^1\\
0^1\\
\infty^1
\end{array}\right)\to
\left(
\begin{array}{c}
-1/b\\
0^1\\
*\\
*
\end{array}\right)\to
\left(
\begin{array}{c}
*\\
*\\
*\\
-1/b
\end{array}\right).
\end{equation}
This notation is, by now, common in the singularity analysis of birational mappings in the context of discrete integrable systems, and represents the iterates $\varphi^{k}(q_1,p_1,q_2,p_2)$.
In particular $*$ indicates a generic value, while $0^{\ell}$ and $\infty^{\ell}$ indicate that the corresponding variables are of order $O(\varepsilon^{\ell})$ and $O(\varepsilon^{-\ell})$ respectively as $\varepsilon\to0$.

Patterns corresponding to confined singularities such as in \eqref{confiningpattern1} and \eqref{confiningpattern2} are often referred to as \emph{strictly confining} patterns. 
However, these are not the only singularities we encounter. 
There are also two examples of so-called \emph{cyclic patterns}, corresponding to hypersurfaces which are periodic under $\varphi$, namely
\begin{equation} \label{cyclicpattern1}
\begin{aligned}
&\left(
\begin{array}{c}
0^1\\
*\\
*\\
*
\end{array}\right)
\to
\left(
\begin{array}{c}
\infty^1\\
*\\
\infty^1\\
0^1
\end{array}\right)\to
\left(
\begin{array}{c}
*\\
\infty^1\\
0^1\\
\infty^1
\end{array}\right)\to
\left(
\begin{array}{c}
*\\
0^1\\
*\\
*
\end{array}\right)\to
\left(
\begin{array}{c}
*\\
*\\
\infty^1\\
*
\end{array}\right)\to
\left(
\begin{array}{c}
*\\
\infty^1\\
0^1\\
*
\end{array}\right)\to
\\
&\left(
\begin{array}{c}
\infty^1\\
0^1\\
*\\
*
\end{array}\right)
\to
\left(
\begin{array}{c}
0^1\\
*\\
\infty^1\\
\infty^1
\end{array}\right)\to
\left(
\begin{array}{c}
*\\
\infty^1\\
*\\
0^1
\end{array}\right)\to
\left(
\begin{array}{c}
\infty^1\\
*\\
0^1\\
*
\end{array}\right)\to
\left(
\begin{array}{c}
*\\
0^1\\
\infty^1\\
\infty^1
\end{array}\right)\to
\left(
\begin{array}{c}
0^1\\
\infty^1\\
*\\
*
\end{array}\right)\to
\\
&\left(
\begin{array}{c}
\infty^1\\
*\\
*\\
0^1
\end{array}\right)\to
\left(
\begin{array}{c}
*\\
*\\
*\\
\infty^1
\end{array}\right)\to
\left(
\begin{array}{c}
0^1\\
*\\
*\\
*
\end{array}\right),
\end{aligned}
\end{equation}
and 
\begin{equation} \label{cyclicpattern2}
\begin{aligned}
&\left(
\begin{array}{c}
*\\
*\\
0^1\\
*
\end{array}\right)
\to
\left(
\begin{array}{c}
\infty^1\\
0^1\\
\infty^1\\
*
\end{array}\right)\to
\left(
\begin{array}{c}
0^1\\
\infty^1\\
*\\
\infty^1
\end{array}\right)\to
\left(
\begin{array}{c}
*\\
*\\
*\\
0^1
\end{array}\right)\to
\left(
\begin{array}{c}
\infty^1\\
*\\
*\\
*
\end{array}\right)\to
\left(
\begin{array}{c}
0^1\\
*\\
*\\
\infty^1
\end{array}\right)\to
\\
&\left(
\begin{array}{c}
*\\
*\\
\infty^1\\
0^1
\end{array}\right)
\to
\left(
\begin{array}{c}
\infty^1\\
\infty^1\\
0^1\\
*
\end{array}\right)\to
\left(
\begin{array}{c}
*\\
0^1\\
*\\
\infty^1
\end{array}\right)\to
\left(
\begin{array}{c}
0^1\\
*\\
\infty^1\\
*
\end{array}\right)\to
\left(
\begin{array}{c}
\infty^1\\
\infty^1\\
*\\
0^1
\end{array}\right)\to
\left(
\begin{array}{c}
*\\
*\\
0^1\\
\infty^1
\end{array}\right)\to
\\
&\left(
\begin{array}{c}
*\\
0^1\\
\infty^1\\
*
\end{array}\right)\to
\left(
\begin{array}{c}
*\\
\infty^1\\
*\\
*
\end{array}\right)\to
\left(
\begin{array}{c}
*\\
*\\
0^1\\
*
\end{array}\right).
\end{aligned}
\end{equation}

\section{Regularisation of mapping \eqref{Map1} as a pseudo-automorphism} \label{sec:regularisation}

The results of computations of singularity confinement behaviour above serve as a guide as to what blow-ups to perform in order to regularise the mapping \eqref{Map1} as a pseudo-automorphism.
We shall prove by computation the following.

\begin{proposition} \label{prop:pseudoautomorphism}
We have a rational variety $X$ and birational morphism $\pi : X \rightarrow (\p^1)^{\times 4}$ such that $\tilde{\varphi} = \pi^{-1} \circ \varphi \circ \pi$ is a pseudo-automorphism of $X$:
\begin{equation*}
\begin{tikzcd}[ampersand replacement=\&]
 X \arrow[r, "\tilde{\varphi}"]  \arrow[d, "\pi"] 		\& X  \arrow[d, "\pi"]  \\
(\p^1)^{\times 4} \arrow[r, "\varphi"] 					\&(\p^1)^{\times 4} .
\end{tikzcd}
\end{equation*}
The rational variety $X$ is obtained from $(\mathbb{P}^1)^{\times 4}$ through a sequence of 28 blow-ups along subvarieties of codimensions either two or three, with $\pi$ being the composition of the corresponding blow-up projections.
\end{proposition}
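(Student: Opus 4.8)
The plan is to read the centres of blow-up directly off the singularity patterns \eqref{confiningpattern1}--\eqref{cyclicpattern2} and then verify, by explicit computation in charts, that the conjugated map contracts no prime divisor in either direction. Following the strategy recalled in Section~\ref{sec:blow-ups,pseudo-automorphisms}, every entry of a pattern in which at least two of the four coordinates take a non-generic value (one of $0$, $\infty$ or $-1/b$) is a subvariety of codimension $2$ or $3$ that must be blown up, whereas the codimension-one entries are the contracted hypersurfaces themselves together with their images under $\varphi^{-1}$, and are not blown up. Counting such entries gives $4$ from each of the two confining patterns \eqref{confiningpattern1}, \eqref{confiningpattern2} and $10$ from each of the two cyclic patterns \eqref{cyclicpattern1}, \eqref{cyclicpattern2}, for a total of $28$ centres, which one checks to be pairwise distinct. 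The involution $(q_1,p_1)\leftrightarrow(q_2,p_2)$ exchanges the two confining patterns and the two cyclic patterns, so it suffices to treat one pattern of each type explicitly and obtain the rest by symmetry.

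First I would write down each of the $28$ centres as an explicit subvariety of $(\p^1)^{\times 4}$ in the appropriate chart, reading its defining equations off the corresponding pattern entry: for instance the entry $(\infty^1,0^1,\infty^1,*)$ of \eqref{confiningpattern1} yields the codimension-three curve $\{1/q_1=p_1=1/q_2=0\}$, while the entry $(*,*,0^1,-1/b)$ yields the codimension-two surface $\{q_2=0,\,p_2=-1/b\}$. Several of these centres are not disjoint, a codimension-three curve typically lying in the closure of, or meeting, one of the codimension-two surfaces, so I would fix an order of blow-up in which each centre is replaced by its proper transform under the blow-ups already performed, using the local model of blowing up recalled in Section~\ref{sec:blow-ups,pseudo-automorphisms}. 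Different admissible orders produce pseudo-isomorphic but possibly non-isomorphic varieties, as discussed in Appendix~\ref{order_blow-ups}; fixing one such order defines $X$ and $\pi$ unambiguously.

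With $X$ and $\pi$ in hand, I would show that $\tilde{\varphi}=\pi^{-1}\circ\varphi\circ\pi$ contracts no prime divisor and that neither does $\tilde{\varphi}^{-1}$; for a birational self-map of a smooth projective variety this is exactly the condition of being an isomorphism in codimension one, i.e.\ a pseudo-automorphism. Concretely, for each confining pattern the contracted hypersurface (the first entry) now has proper transform mapped by $\tilde{\varphi}$ onto the exceptional divisor over the second entry, which is mapped onto the exceptional divisor over the third, and so on along the chain, until the last exceptional divisor is mapped onto the proper transform of the codimension-one hypersurface in the final entry; the chain is thereby resolved and no contraction survives. For each cyclic pattern the ten exceptional divisors over the blown-up centres, together with the proper transforms of the coordinate hypersurfaces appearing as the four codimension-one entries, are permuted by $\tilde{\varphi}$ in a single cycle of length $14$, again with no contraction. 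Carrying this out reduces to computing the image of each exceptional divisor and each relevant coordinate hypersurface in the charts of the blow-up and checking that the map is dominant onto a divisor (equivalently that the pertinent Jacobian no longer vanishes identically), the symmetry halving this labour.

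The main obstacle will be the bookkeeping of these chart computations rather than any single hard step: there are $16$ charts on $(\p^1)^{\times 4}$, each replaced by several charts under the iterated blow-ups, and one must track proper transforms of divisors and exceptional divisors through $\tilde{\varphi}$ consistently across all of them, with particular care at the intersecting centres where the chosen order of blow-up matters. These computations are elementary but voluminous and are best organised and executed with computer algebra; the role of the singularity analysis of the preceding section is precisely to reduce what would otherwise be an unguided search to the finite verification that the $28$ indicated blow-ups suffice.
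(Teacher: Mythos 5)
There is a genuine gap in your identification of the $28$ centres. Your count $4+4+10+10=28$ double-counts: the entries $(\infty^1,0^1,\infty^1,*)$ and $(0^1,\infty^1,*,\infty^1)$ occur both in the confining pattern \eqref{confiningpattern1} and in the cyclic pattern \eqref{cyclicpattern2}, and likewise $(\infty^1,*,\infty^1,0^1)$ and $(*,\infty^1,0^1,\infty^1)$ occur both in \eqref{confiningpattern2} and in \eqref{cyclicpattern1}. So the subvarieties of $(\p^1)^{\times 4}$ readable from the patterns in the way you describe are \emph{not} pairwise distinct; there are only $24$ of them, and blowing up only those does not regularise the map. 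The four remaining centres in the paper's construction ($V_3$, $V_{10}$, $V_{17}$, $V_{24}$ in Figures \ref{fig:firstsevenblow-ups}--\ref{fig:fourthsevenblow-ups}) are \emph{infinitely near}: they are curves contained in the exceptional divisors of the blow-ups along $V_2$, $V_9$, $V_{16}$, $V_{23}$, with local equations such as $q_2+\tfrac{b}{q_1p_1}=0$ on the chart $(r_3,s_3,t_3,q_2)$ over $V_2$. This data depends on the parameter $b$ and records the distinct directions along which the two colliding hypersurface orbits (one from a confining pattern, one from a cyclic pattern) pass through the shared codimension-three centre; it is invisible in the leading-order $0^{\ell}/\infty^{\ell}$ bookkeeping of the singularity patterns, so your procedure of ``reading the centres directly off the patterns'' cannot produce it. In practice one only discovers these extra centres during the verification step, when the proper transform of, say, $\{q_1=-1/b\}$ is found to still be contracted after the first round of blow-ups.

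The rest of your plan --- choosing an order of blow-up at intersecting centres and replacing later centres by proper transforms, verifying divisor-by-divisor in local charts that neither $\tilde{\varphi}$ nor $\tilde{\varphi}^{-1}$ contracts a prime divisor, and using the involution $(q_1,p_1)\leftrightarrow(q_2,p_2)$ (the paper additionally uses $(q_i,p_i)\mapsto(p_i,q_i)$) to cut the labour --- is exactly the paper's approach, which likewise reduces the proof to a long chart computation. But as written your proposal fixes a wrong variety $X$ at the outset, and the claim that the $28$ centres are pairwise distinct subvarieties of $(\p^1)^{\times 4}$ of codimension $2$ or $3$ is false; you need to amend the list to include the four infinitely near curves and explain where their defining equations come from.
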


% The construction of the birational morphism $\pi$ and the proof of Proposition \ref{prop:pseudoautomorphism} are computational.
Before giving the subvarieties to blow up along to obtain $X$, let us illustrate some of the extra considerations that arise in this kind of calculation in dimension greater than two. 
These come from the fact that the subvarieties to be blown up may have inclusion relations among them, which does not occur in the two-dimensional case in which only points are blown up.

Inspection of the cyclic pattern in \eqref{cyclicpattern2} indicates the need to blow up, in particular, the subvarieties $\{q_1=0,\,q_2=\infty\}$ and $\{q_1=0,\, p_2=\infty\}$, both of codimension two, since these are in the orbit of the hypersurface $\{q_2=0\}$ under $\varphi$.
On the other hand, from the other cyclic pattern \eqref{cyclicpattern1} we see that the codimension three subvariety $\{q_1=0,\,q_2=\infty,\,p_2=\infty\}$ is in the orbit of $\{q_1=0\}$.
This is the intersection of the pair of codimension two subvarieties noted above from \eqref{cyclicpattern2}, and we must choose in which order to perform the three blow-ups along these. 
We make some remarks about the effect of changing orders of blow-ups in Appendix \ref{order_blow-ups}, but for now we choose, in case when we need to blow up two surfaces as well as the curve given by their intersection, to blow up first along the curve and then along the proper transforms of the surfaces under this.

For example, the subvarieties described above are denoted (according to the numbering given below in section \ref{subsec:blow-ups}) by
\begin{equation*}
V_1 : \left(\begin{array}{c} q_1 \\ p_1 \\ q_2 \\ p_2  \end{array}\right)=\left(\begin{array}{c} 0 \\ * \\ \infty \\ \infty  \end{array}\right),
\qquad 
V_4 : \left(\begin{array}{c} q_1 \\ p_1 \\ q_2 \\ p_2  \end{array}\right)=\left(\begin{array}{c} 0 \\ * \\ \infty \\ *  \end{array}\right), 
\qquad V_5 : \left(\begin{array}{c} q_1 \\ p_1 \\ q_2 \\ p_2  \end{array}\right)=\left(\begin{array}{c} 0 \\ * \\ * \\ \infty  \end{array}\right),
\end{equation*}
so $V_1 = V_4 \cap V_5$.
We blow up along the line $V_1$ first, denoting the exceptional divisor by $E_1$.
Then we consider $V_4$ and $V_5$ as the proper transforms under the blow-up along $V_1$, and proceed to blow up along these.
We illustrate this in Figure \ref{fig:blow-upalongintersection}.

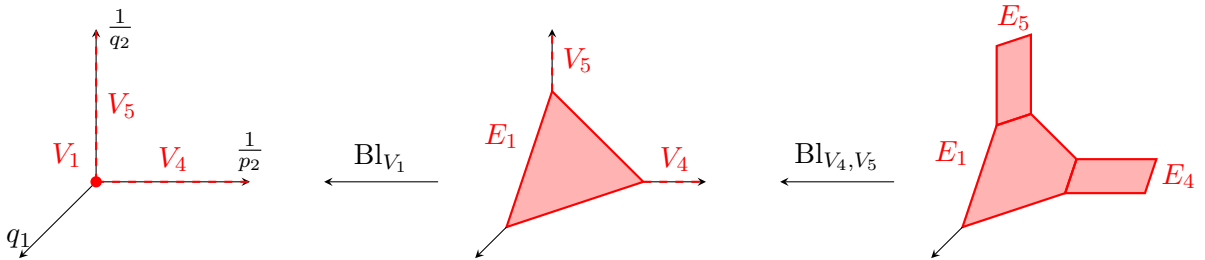
\begin{figure}[htb]
    \centering
        \begin{tikzpicture}[scale=.3,x=0.5cm,y=0.5cm,z=0.25cm,>=stealth]
% The axes
\draw[->] (xyz cs:x=0) -- (xyz cs:x=13.5) node[above] {$\frac{1}{p_2}$};
\draw[->] (xyz cs:y=0) -- (xyz cs:y=13.5) node[right] {$\frac{1}{q_2}$};
\draw[->] (xyz cs:z=+0) -- (xyz cs:z=-13.5) node[above] {$q_1$};
%\draw[->,thick,blue] (xyz cs:x=0) -- (xyz cs:x=5) node[above] {$t_3= x-a$};
%\draw[->,thick,blue] (xyz cs:y=0) -- (xyz cs:y=5) node[right] {$t_1=z-c$};
%\draw[->,thick,blue] (xyz cs:z=+0) -- (xyz cs:z=-5) node[left] {$t_2=y-b$};
\draw[red, thick, dashed] (xyz cs:x=0) -- (xyz cs:x=13.5) node[midway, above] {$V_4$};
\draw[red, thick, dashed] (xyz cs:y=0) -- (xyz cs:y=13.5) node[midway, right] {$V_5$};

\node[fill,red,circle,inner sep=1.5pt,label={above left:{\color{red} $V_1$}}] at (xyz cs:z=+0) {};

\draw[->] (xyz cs:x=30) -- (xyz cs:x=20) node[midway, above] {$\operatorname{Bl}_{V_1}$};

\begin{scope}[xshift=20cm]
\draw[->] (xyz cs:x=30) -- (xyz cs:x=20) node[midway, above] {$\operatorname{Bl}_{V_4, V_5}$};

\draw[thick,red] (xyz cs:x=8) -- (xyz cs:y=8)-- node[midway, above left] {$E_1$} (xyz cs:z=-8)--(xyz cs:x=8)  ;
\filldraw[red,opacity=.3] (xyz cs:x=8) -- (xyz cs:y=8)-- (xyz cs:z=-8)--(xyz cs:x=8)  ;
\draw[->] (xyz cs:x=8) -- (xyz cs:x=13.5) 
%node[above] {$\frac{1}{p_2}$}
;
%\draw[->,thick,blue] (xyz cs:x=8) -- (xyz cs:x=11) node[above] {$t_3$};
%\draw[->,thick,blue] (xyz cs:x=8) -- (xyz cs:x=6,y=0,z=-2) node[above] {$r_3$};
%\draw[->,thick,blue] (xyz cs:x=8) -- (xyz cs:x=6,y=2,z=0) node[right] {$s_3$};
%%
\draw[->] (xyz cs:y=8) -- (xyz cs:y=13.5) 
%node[above right] {$\frac{1}{q_2}$}
;
%\draw[->,thick,blue] (xyz cs:y=8) -- (xyz cs:y=11) node[right] {$t_1$};
%\draw[->,thick,blue] (xyz cs:y=8) -- (xyz cs:y=6,x=0,z=-2) node[left] {$s_1$};
%\draw[->,thick,blue] (xyz cs:y=8) -- (xyz cs:y=6,x=2,z=0) node[above right] {$r_1$};
%%
\draw[->] (xyz cs:z=-8) -- (xyz cs:z=-13.5) 
%node[left] {$q_1$}
;
%\draw[->,thick,blue] (xyz cs:z=-8) -- (xyz cs:z=-11) node[right] {$t_2$};
%\draw[->,thick,blue] (xyz cs:z=-8) -- (xyz cs:z=-6,x=0,y=2) node[left] {$s_2$};
%\draw[->,thick,blue] (xyz cs:z=-8) -- (xyz cs:z=-6,x=2,y=0) node[above] {$r_2$};

\draw[red, thick, dashed] (xyz cs:x=8) -- (xyz cs:x=13.5) node[midway, above] {$V_4$};
\draw[red, thick, dashed] (xyz cs:y=8) -- (xyz cs:y=13.5) node[midway, right] {$V_5$};

\end{scope}

\begin{scope}[xshift=40cm]

\draw[thick, red] (xyz cs:x=6,y=2,z=0) -- (xyz cs:x=2,y=6,z=0) --  (xyz cs:x=0,y=6,z=-2) -- (xyz cs:x=0,y=0,z=-8) node[midway, above left] {$E_1$} -- (xyz cs:x=0,y=0,z=-8) -- (xyz cs:x=6,y=0,z=-2) --(xyz cs:x=6,y=2,z=0)  ;
\filldraw[thick,red,opacity=0.3] (xyz cs:x=6,y=2,z=0) -- (xyz cs:x=2,y=6,z=0) --  (xyz cs:x=0,y=6,z=-2) -- (xyz cs:x=0,y=0,z=-8)-- (xyz cs:x=0,y=0,z=-8) -- (xyz cs:x=6,y=0,z=-2) --(xyz cs:x=6,y=2,z=0)  ;
%\draw[thick, red] (xyz cs:x=6,y=2,z=0) -- (xyz cs:x=13,y=2,z=0);
%\draw[thick, red] (xyz cs:x=2,y=6,z=0) -- (xyz cs:x=2,y=13,z=0);
%\draw[thick, red] (xyz cs:x=0,y=6,z=-2) -- (xyz cs:x=0,y=13,z=-2);
%\draw[thick, red] (xyz cs:x=6,y=0,z=-2) -- (xyz cs:x=13,y=0,z=-2);
%\draw[thick, red] (xyz cs:x=2,y=6,z=0) -- (xyz cs:x=2,y=13,z=0);
%\draw[thick, red] (xyz cs:x=0,y=6,z=-2) -- (xyz cs:x=0,y=13,z=-2);

\draw[thick, red] (xyz cs:x=6,y=2,z=0) -- (xyz cs:x=13,y=2,z=0)  --  (xyz cs:x=13,y=0,z=-2) node[midway, right] {$E_4$} -- (xyz cs:x=6,y=0,z=-2) -- (xyz cs:x=6,y=2,z=0);
\filldraw[thick, red, opacity=0.3] (xyz cs:x=6,y=2,z=0) -- (xyz cs:x=13,y=2,z=0)  -- (xyz cs:x=13,y=0,z=-2) -- (xyz cs:x=6,y=0,z=-2) -- (xyz cs:x=6,y=2,z=0);
\draw[thick, red] (xyz cs:x=2,y=6,z=0)  -- (xyz cs:x=2,y=13,z=0) --   (xyz cs:x=0,y=13,z=-2)node[midway, above] {$E_5$}  -- (xyz cs:x=0,y=6,z=-2) -- (xyz cs:x=2,y=6,z=0) ;
\filldraw[thick, red, opacity=0.3] (xyz cs:x=2,y=6,z=0)  -- (xyz cs:x=2,y=13,z=0) -- (xyz cs:x=0,y=13,z=-2) -- (xyz cs:x=0,y=6,z=-2) -- (xyz cs:x=2,y=6,z=0) ;

%%
%\draw[->] (xyz cs:x=8) -- (xyz cs:x=13.5) node[above] {$x$};
%\draw[->,thick,blue] (xyz cs:x=8) -- (xyz cs:x=11) node[above] {$t_3$};
%\draw[->,thick,blue] (xyz cs:x=8) -- (xyz cs:x=6,y=0,z=-2) node[above] {$r_3$};
%\draw[->,thick,blue] (xyz cs:x=8) -- (xyz cs:x=6,y=2,z=0) node[right] {$s_3$};
%%
%\draw[->] (xyz cs:y=8) -- (xyz cs:y=13.5) 
%node[right] {$z$}
;
%\draw[->,thick,blue] (xyz cs:y=8) -- (xyz cs:y=11) node[right] {$t_1$};
%\draw[->,thick,blue] (xyz cs:y=8) -- (xyz cs:y=6,x=0,z=-2) node[left] {$s_1$};
%\draw[->,thick,blue] (xyz cs:y=8) -- (xyz cs:y=6,x=2,z=0) node[above right] {$r_1$};
%%
\draw[->] (xyz cs:z=-8) -- (xyz cs:z=-13.5) 
%node[left] {$y$}
;
%\draw[->,thick,blue] (xyz cs:z=-8) -- (xyz cs:z=-11) node[right] {$t_2$};
%\draw[->,thick,blue] (xyz cs:z=-8) -- (xyz cs:z=-6,x=0,y=2) node[left] {$s_2$};
%\draw[->,thick,blue] (xyz cs:z=-8) -- (xyz cs:z=-6,x=2,y=0) node[above] {$r_2$};

\end{scope}

\end{tikzpicture}
    \caption{Order of blow-ups along two codimension-2 subvarieties and along their intersection.}
    \label{fig:blow-upalongintersection}
\end{figure}

\subsection{The sequence of blow-ups} \label{subsec:blow-ups}

The birational morphism $\pi : X\rightarrow (\p^1)^{\times 4}$ in Proposition \ref{prop:pseudoautomorphism} is a composition of a sequence of 28 blow-ups. 
These can be grouped into 4 groups of 7, which are related by the transformations $(q_1,p_1,q_2,p_2)\rightarrow(q_2,p_2,q_1,p_1)$ and $(q_1,p_1,q_2,p_2)\rightarrow(p_1,q_1,p_2,q_2)$. 
The first group is given in Figure \ref{fig:firstsevenblow-ups} below, where we have indicated inclusion relations among the subvarieties and an arrow $V_2 \leftarrow V_3$ indicates that the subvariety $V_3$ lies inside the exceptional divisor of the blow-up of $V_2$.

\begin{figure}[h]
    \centering
    \begin{tikzpicture}    
 \node at (0,0)
 {
 \begin{tikzcd}
    [column sep=-0.25in]
        % &       &       &       &  V_3 \arrow[d, phantom, sloped, "\longrightarrow"] &  & & \\
        &       &V_1 : \left(\begin{array}{c} 0 \\ * \\ \infty \\ \infty  \end{array}\right)   &       & V_2 : \left(\begin{array}{c} 0 \\ \infty \\ * \\ \infty  \end{array}\right)  &  & \quad \arrow[ll, phantom, "\leftarrow"] & \\
        &V_4 : \left(\begin{array}{c} 0 \\ * \\ \infty \\ *  \end{array}\right) \arrow[ur, phantom,sloped, "\supset"]  &       &V_5 : \left(\begin{array}{c} 0 \\ * \\ * \\ \infty  \end{array}\right) \arrow[ul, phantom,sloped, "\subset"] \arrow[ur, phantom,sloped, "\supset"] &       &V_6 : \left(\begin{array}{c} 0 \\ \infty \\ * \\ *  \end{array}\right)\arrow[ul, phantom,sloped, "\subset"] &  \qquad \qquad & V_7: \left(\begin{array}{c} 0 \\ -b^{-1} \\ * \\ * \end{array}\right)
    \end{tikzcd}
    };
    \node at (5,1.4) 
    {
    $V_3 : \left(\begin{array}{c} q_2 + \frac{b}{q_1 p_1}  \\ \frac{1}{q_1 p_2} \\ q_1 \\ q_2  \end{array}\right) = \left(\begin{array}{c} 0 \\ * \\ 0 \\ * \end{array}\right)$
    };
\end{tikzpicture}
    \caption{The first seven blow-ups in the construction of $X$.}
    \label{fig:firstsevenblow-ups}
\end{figure}
The way in which $V_3$ is given in Figure \ref{fig:firstsevenblow-ups} corresponds to local equations in charts introduced according to the convention in Section \ref{sec:blow-ups,pseudo-automorphisms} is as follows.
In the local chart with coordinates $(q_1, P_1,q_2,P_2) = (q_1, \tfrac{1}{p_1}, q_2, \tfrac{1}{p_2})$ on $(\p^1)^{\times 4}$, $V_2$ is given in terms of local equations by
\begin{equation*}
    V_2 :\quad  q_1 = 0, P_1 = 0, P_2 = 0, 
\end{equation*}
so three charts introduced to cover the part of the exceptional divisor away from $\{q_2=\infty\}$ according to the convention in Section \ref{sec:blow-ups,pseudo-automorphisms} can be written with some changes in notation for convenience as $(r_1,s_1,t_1,q_2)$, $(r_2,s_2,t_2,q_2)$ and $(r_3,s_3,t_3,q_2)$, where under the blow-up projection we have
\begin{equation*}
    q_1 = r_1 t_1 = r_2 t_2 = t_3, \quad P_1 = s_1 t_1 = t_2 = r_3 t_3, \quad P_2 = t_1 = s_2 t_2 = s_3 t_3,
\end{equation*}
so the exceptional divisor has local equation $t_i=0$.
Alternatively, we have the formal expressions
\begin{equation*}
\begin{aligned}
    \left( r_1,s_1,t_1,q_2\right) &= \left( \frac{q_1}{P_2}, \frac{p_1}{P_2}, P_2, q_2 \right) =  \left( q_1p_2 , p_1 p_2, \frac{1}{p_2}, q_2 \right), \\
    \left( r_2,s_2,t_2,q_2\right) &= \left( \frac{q_1}{P_1}, \frac{P_2}{P_1}, P_1, q_2 \right) =  \left( q_1p_1 , \frac{p_1}{p_2}, \frac{1}{p_1}, q_2 \right), \\
    \left( r_3,s_3,t_3,q_2\right) &= \left( \frac{P_1}{q_1}, \frac{P_2}{q_1}, q_1, q_2 \right) =  \left( \frac{1}{q_1p_1} , \frac{1}{q_1 p_2}, q_1, q_2 \right). 
\end{aligned}
\end{equation*}
Then the specification of $V_3$ as in Figure \ref{fig:firstsevenblow-ups} should be interpreted that it has local equations in, for example, the chart $(r_3,s_3,t_3,q_2)$ given by $q_2+b r_3 = t_3 = 0$.

We specify the remaining three groups of seven blow-ups on Figures \ref{fig:secondsevenblow-ups}, \ref{fig:thirdsevenblow-ups} and \ref{fig:fourthsevenblow-ups}.
\begin{figure}[h]
    \centering
    \begin{tikzpicture}    
 \node at (0,0)
 {
 \begin{tikzcd}
    [column sep=-0.25in]
        % &       &       &       &  V_3 \arrow[d, phantom, sloped, "\longrightarrow"] &  & & \\
        &       &V_8 : \left(\begin{array}{c} * \\ 0 \\ \infty \\ \infty  \end{array}\right)   &       & V_{9} : \left(\begin{array}{c} \infty \\ 0 \\ \infty \\ *  \end{array}\right)   &  & \quad \arrow[ll, phantom, "\leftarrow"] & \\
        &V_{11} : \left(\begin{array}{c} * \\ 0 \\ * \\ \infty  \end{array}\right) \arrow[ur, phantom,sloped, "\supset"]  &       &V_{12} : \left(\begin{array}{c} * \\ 0 \\ \infty \\ *  \end{array}\right) \arrow[ul, phantom,sloped, "\subset"] \arrow[ur, phantom,sloped, "\supset"] &       &V_{13} : \left(\begin{array}{c} \infty \\ 0 \\ * \\ *  \end{array}\right)\arrow[ul, phantom,sloped, "\subset"] &  \qquad \qquad & V_{14} : \left(\begin{array}{c} - b^{-1} \\ 0 \\ * \\ * \end{array}\right)
    \end{tikzcd}
    };
    \node at (5,1.4) 
    {
    $V_{10} : \left(\begin{array}{c} p_2 + \frac{b}{q_1 p_1}  \\ \frac{1}{p_1 q_2} \\ p_1 \\ p_2  \end{array}\right) = \left(\begin{array}{c} 0 \\ * \\ 0 \\ * \end{array}\right)$
    };
\end{tikzpicture}
    \caption{The second seven blow-ups in the construction of $X$.}
    \label{fig:secondsevenblow-ups}
\end{figure}

% \rotatebox[origin=c]{45}{$\subseteq$}

% : \left(\begin{array}{c} q_2 + \frac{b}{q_1 p_1} \\ \frac{1}{q_1 p_2} \\ q_1 \\ q_2  \end{array}\right) = \left(\begin{array}{c} 0 \\ * \\ 0 \\ * \end{array}\right) \arrow[l,phantom,"\leftarrow"]
%%%% SECOND SEVEN blow-upS

% The second 7 blow-ups:
% \begin{equation*}
% \begin{gathered}
% V_8 : \left(\begin{array}{c} * \\ 0 \\ \infty \\ \infty  \end{array}\right)   \\
% V_{9} : \left(\begin{array}{c} \infty \\ 0 \\ \infty \\ *  \end{array}\right) 
% \leftarrow 
% V_{10} : \left(\begin{array}{c} p_2 + \frac{b}{q_1 p_1}  \\ \frac{1}{p_1 q_2} \\ p_1 \\ p_2  \end{array}\right) = \left(\begin{array}{c} 0 \\ * \\ 0 \\ * \end{array}\right)\\
% V_{11} : \left(\begin{array}{c} * \\ 0 \\ * \\ \infty  \end{array}\right)
% \quad 
% V_{12} : \left(\begin{array}{c} * \\ 0 \\ \infty \\ *  \end{array}\right)
% \quad 
% V_{13} : \left(\begin{array}{c} \infty \\ 0 \\ * \\ *  \end{array}\right)\\
% V_{14} : \left(\begin{array}{c} - b^{-1} \\ 0 \\ * \\ * \end{array}\right)
% \end{gathered}
% \end{equation*}

\begin{figure}[h]
    \centering
    \begin{tikzpicture}    
 \node at (0,0)
 {
 \begin{tikzcd}
    [column sep=-0.25in]
        % &       &       &       &  V_3 \arrow[d, phantom, sloped, "\longrightarrow"] &  & & \\
        &       &V_{15} : \left(\begin{array}{c} \infty \\ \infty \\ 0 \\ *  \end{array}\right)   &       & V_{16} : \left(\begin{array}{c} * \\ \infty \\ 0 \\ \infty  \end{array}\right)   &  & \quad \arrow[ll, phantom, "\leftarrow"] & \\
        &V_{18} : \left(\begin{array}{c} \infty \\ * \\ 0 \\ *  \end{array}\right) \arrow[ur, phantom,sloped, "\supset"]  &       &V_{19} : \left(\begin{array}{c} * \\ \infty \\ 0 \\ *  \end{array}\right) \arrow[ul, phantom,sloped, "\subset"] \arrow[ur, phantom,sloped, "\supset"] &       &V_{20} : \left(\begin{array}{c} * \\ * \\ 0 \\ \infty  \end{array}\right)\arrow[ul, phantom,sloped, "\subset"] &  \qquad \qquad & V_{21} : \left(\begin{array}{c}  * \\ * \\ 0 \\ -b^{-1}\end{array}\right)
    \end{tikzcd}
    };
    \node at (5,1.4) 
    {
    $V_{17} : \left(\begin{array}{c} q_1 + \frac{b}{q_2 p_2}  \\ \frac{1}{p_1 q_2} \\ q_2 \\ q_1  \end{array}\right) = \left(\begin{array}{c} 0 \\ * \\ 0 \\ * \end{array}\right)$
    };
\end{tikzpicture}
    \caption{The third seven blow-ups in the construction of $X$.}
    \label{fig:thirdsevenblow-ups}
\end{figure}

%%%% THIRD SEVEN blow-upS

% The third 7 blow-ups:
% \begin{equation*}
% \begin{gathered}
% V_{15} : \left(\begin{array}{c} \infty \\ \infty \\ 0 \\ *  \end{array}\right)  \\
% V_{16} : \left(\begin{array}{c} * \\ \infty \\ 0 \\ \infty  \end{array}\right)
% \leftarrow 
% V_{17} : \left(\begin{array}{c} q_1 + \frac{b}{q_2 p_2}  \\ \frac{1}{p_1 q_2} \\ q_2 \\ q_1  \end{array}\right) = \left(\begin{array}{c} 0 \\ * \\ 0 \\ * \end{array}\right)\\
% V_{18} : \left(\begin{array}{c} \infty \\ * \\ 0 \\ *  \end{array}\right)
% \quad 
% V_{19} : \left(\begin{array}{c} * \\ \infty \\ 0 \\ *  \end{array}\right)
% \quad 
% V_{20} : \left(\begin{array}{c} * \\ * \\ 0 \\ \infty  \end{array}\right)\\
% V_{21} : \left(\begin{array}{c}  * \\ * \\ 0 \\ -b^{-1}\end{array}\right)
% \end{gathered}
% \end{equation*}

%%%% FOURTH SEVEN blow-upS

\begin{figure}[h]
    \centering
    \begin{tikzpicture}    
 \node at (0,0)
 {
 \begin{tikzcd}
    [column sep=-0.25in]
        % &       &       &       &  V_3 \arrow[d, phantom, sloped, "\longrightarrow"] &  & & \\
        &       &V_{22} : \left(\begin{array}{c} \infty \\ \infty \\ * \\ 0  \end{array}\right)   &       & V_{23} : \left(\begin{array}{c} \infty \\ * \\ \infty \\ 0  \end{array}\right)   &  & \quad \arrow[ll, phantom, "\leftarrow"] & \\
        &V_{25} : \left(\begin{array}{c} * \\ \infty \\ * \\ 0  \end{array}\right) \arrow[ur, phantom,sloped, "\supset"]  &       &V_{26} : \left(\begin{array}{c} \infty \\ * \\ * \\ 0  \end{array}\right) \arrow[ul, phantom,sloped, "\subset"] \arrow[ur, phantom,sloped, "\supset"] &       &V_{27} : \left(\begin{array}{c} * \\ * \\ \infty \\ 0 \end{array}\right)\arrow[ul, phantom,sloped, "\subset"] &  \qquad \qquad & V_{28} : \left(\begin{array}{c}  * \\ * \\  -b^{-1} \\ 0\end{array}\right)
    \end{tikzcd}
    };
    \node at (5,1.4) 
    {
    $V_{24} : \left(\begin{array}{c} p_1 + \frac{b}{q_2 p_2}  \\ \frac{1}{q_1 p_2} \\ p_2 \\ p_1  \end{array}\right) = \left(\begin{array}{c} 0 \\ * \\ 0 \\ * \end{array}\right)$
    };
\end{tikzpicture}
    \caption{The fourth seven blow-ups in the construction of $X$.}
    \label{fig:fourthsevenblow-ups}
\end{figure}

% The fourth 7 blow-ups:
% \begin{equation*}
% \begin{gathered}
% V_{22} : \left(\begin{array}{c} \infty \\ \infty \\ * \\ 0  \end{array}\right)\\
% V_{23} : \left(\begin{array}{c} \infty \\ * \\ \infty \\ 0  \end{array}\right)
% \leftarrow 
% V_{24} : \left(\begin{array}{c} p_1 + \frac{b}{q_2 p_2}  \\ \frac{1}{q_1 p_2} \\ p_2 \\ p_1  \end{array}\right) = \left(\begin{array}{c} 0 \\ * \\ 0 \\ * \end{array}\right)\\
% V_{25} : \left(\begin{array}{c} * \\ \infty \\ * \\ 0  \end{array}\right)
% \quad 
% V_{26} : \left(\begin{array}{c} \infty \\ * \\ * \\ 0  \end{array}\right)
% \quad 
% V_{27} : \left(\begin{array}{c} * \\ * \\ \infty \\ 0 \end{array}\right)\\
% V_{28} : \left(\begin{array}{c}  * \\ * \\  -b^{-1} \\ 0\end{array}\right)
% \end{gathered}
% \end{equation*}

Verifying that the map $\varphi$ becomes a pseudo-automorphism of $X$ when lifted under the blow-ups is a long but standard computation in local coordinates. 
For concrete examples of this kind of computation we refer the reader to, e.g., \cite{stefan}. 

\subsection{Linear action on $H^2(X,\Z)$ and degree growth}

As noted in Section \ref{sec:blow-ups,pseudo-automorphisms}, regularisation of $\varphi$ as a pseudo-automorphism allows us to analyse its dynamics via the induced linear action on the Picard group.
 The variety $X$ obtained by the sequence of blow-ups outlined above has 
    \begin{equation} \label{PicXbasis}
    H^2(X,\mathbb{Z}) \simeq \Pic(X) \simeq \operatorname{Cl}(X)=\Z H_{q_1} \oplus \Z H_{p_1} \oplus \Z H_{q_2} \oplus \Z H_{p_2} \oplus \bigoplus_{i=1}^{28} \Z E_i,
    \end{equation}
where $E_i$ is the linear equivalence class of the exceptional divisor of the blow-up along $V_i$ (or more precisely its total transform under any subsequent blow-ups), and $H_{q_i}, H_{p_i}$ are classes of total transforms of divisors $\{q_i=\text{const} \}$, $\{p_i=\text{const} \}$ on $(\p^1)^{\times 4}$, $(i=1,2)$. 
The computations which establish Proposition \ref{prop:pseudoautomorphism} also lead to Lemma \ref{lem:pushforward} below.
\begin{notation}
    From this point onward, for the sake of brevity, we will denote sums of classes of exceptional divisors by $$E_{i_1,i_2,i_3,...,i_n}=E_{i_1}+E_{i_2}+...+E_{i_n}.$$
\end{notation}
\begin{lemma} \label{lem:pushforward}
The pushforward action $\tilde{\varphi}_* : \Pic(X)\rightarrow \Pic(X)$ of the mapping in Proposition \ref{prop:pseudoautomorphism} is given by 
\begin{equation} \label{pushforwardcohom}
    \tilde{\varphi}_* : \left\{ 
    \begin{aligned}
    &\begin{aligned}
        H_{q_1} &\mapsto H_{p_2},\quad H_{p_1}\mapsto H_{q_2}+H_{p_1}+H_{p_2}-E_{8,9,12,15,16,19,21,22,23,25,27}, \\
        H_{q_2} &\mapsto H_{p_1}, \quad H_{p_2}\mapsto H_{q_1}+H_{p_1}+H_{p_2}-E_{1,2,5,7,8,9,11,13,22,23,26},
    \end{aligned}
    \\
    &\begin{aligned}
        E_1&\mapsto E_{25}, &~ &E_2\mapsto H_{p_2}-E_{22,23,25,26,27},  &~ &E_3\mapsto E_{28},\\
        E_4&\mapsto E_{22}, &~ &E_5\mapsto E_{27}, \qquad E_6\mapsto E_{26}, &~ &E_7\mapsto E_{24},\\
        E_8&\mapsto E_6,    &~ &E_9\mapsto E_2, \qquad E_{10}\mapsto E_3,  &~ &E_{11}\mapsto E_4, 
    \end{aligned}
    \\
    &\begin{aligned}
        E_{12}&\mapsto H_{p_1}-E_{2,6,15,16,19,22,25},  &~ &E_{13}\mapsto E_1, &~ &E_{14}\mapsto H_{p_2}-E_{21},
    \end{aligned}
     \\
    &\begin{aligned}
        E_{15}&\mapsto E_{11},  &~ &E_{16}\mapsto H_{p_1}-E_{8,9,11,12,13},             &~ &E_{17}\mapsto E_{14},\\
        E_{18}&\mapsto E_8,     &~ &E_{19}\mapsto E_{13}, \qquad E_{20}\mapsto E_{12},  &~ &E_{21}\mapsto E_{10},\\
        E_{22}&\mapsto E_{20},  &~ &E_{23}\mapsto E_{16}, \qquad E_{24}\mapsto E_{17},  &~ &E_{25}\mapsto E_{18},\\
    \end{aligned}
     \\
    &\begin{aligned}    
        E_{26}&\mapsto H_{p_2}-E_{1,2,5,8,11,16,20},    &~ &E_{27}\mapsto E_{15}, &~ &E_{28}\mapsto H_{p_1}- E_7,
    \end{aligned}
    \end{aligned}
    \right.
\end{equation}
% where we have used the notation $E_{i_1,i_2,i_3,...,i_n}=E_{i_1}+E_{i_2}+...+E_{i_n}$

\end{lemma}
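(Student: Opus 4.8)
The plan is to compute $\tilde{\varphi}_*$ directly, one basis class at a time, by tracking proper transforms of divisors through the birational morphism $\pi$ and the map $\varphi$. Since $\tilde{\varphi}$ is a pseudo-automorphism by Proposition~\ref{prop:pseudoautomorphism}, its pushforward on $\Pic(X)$ is well defined, and for a prime divisor $D$ on $X$ the class $\tilde{\varphi}_*[D]$ is read off from the proper transform of $D$ together with the multiplicities with which the image meets each centre $V_j$: if $D'$ is the image hypersurface on $(\p^1)^{\times 4}$, then $\tilde{\varphi}_*[D] = [D']_H - \sum_j \operatorname{mult}_{V_j}(D')\, E_j$, so the whole computation reduces to identifying images and computing orders of vanishing along the $V_j$. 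Throughout I would exploit the two symmetries of the configuration: the involution $\sigma:(q_1,p_1,q_2,p_2)\mapsto(q_2,p_2,q_1,p_1)$, which commutes with $\varphi$, and $\tau:(q_1,p_1,q_2,p_2)\mapsto(p_1,q_1,p_2,q_2)$, which conjugates $\varphi$ to $\varphi^{-1}$. These permute the four groups of seven blow-ups, giving $\sigma_*\tilde{\varphi}_* = \tilde{\varphi}_*\sigma_*$ and $\tau_*\tilde{\varphi}_*\tau_* = (\tilde{\varphi}_*)^{-1}$, which cuts the number of cases to be done by hand roughly in half.

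For the hyperplane classes I would first compute the image of a generic representative under $\varphi$. The relations $\bar{p}_2 = q_1$ and $\bar{p}_1 = q_2$ immediately give $\tilde{\varphi}_*H_{q_1} = H_{p_2}$ and $\tilde{\varphi}_*H_{q_2}=H_{p_1}$ with no exceptional corrections, since the image of $\{q_1 = c\}$ is simply $\{p_2 = c\}$. The classes $H_{p_1}$ and $H_{p_2}$ are more delicate: the image of $\{p_i = c\}$ is a genuinely quadratic hypersurface in the remaining coordinates, whose proper transform in $X$ acquires the corrections $-E_{8,9,12,\dots}$ and $-E_{1,2,5,\dots}$ recorded in \eqref{pushforwardcohom}. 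To pin down exactly which $E_j$ occur, I would substitute the local parametrisations of each centre $V_j$ (the charts displayed for $V_2$ and $V_3$, and their $\sigma,\tau$-analogues) into the defining equation of this image hypersurface and read off the order of vanishing.

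For the exceptional classes the singularity patterns \eqref{confiningpattern1}--\eqref{cyclicpattern2} are the essential guide, as each records the orbit of a contracted hypersurface or a blow-up centre. From these one reads off whether a given centre $V_i$ is carried by $\tilde{\varphi}$ to another centre $V_j$, giving a clean $E_i \mapsto E_j$, or is expanded onto a hypersurface, in which case $\tilde{\varphi}_* E_i$ is a hyperplane class minus a sum of $E_j$, as happens for $E_2, E_{12}, E_{14}, E_{16}, E_{26}, E_{28}$. For the former case I would verify in local coordinates that $\tilde{\varphi}$ restricts to an isomorphism of the relevant exceptional fibres; for the latter I would carry out the local computation of the expanded image and its multiplicities exactly as for the $H$-classes. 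Using $\sigma_*$ to transport results between groups $1$ and $3$, and between groups $2$ and $4$, then completes the list.

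The main obstacle is the careful local bookkeeping near the centres with nontrivial incidence, in particular the configuration $V_1 = V_4\cap V_5$ of a curve lying in two surfaces, and the centres such as $V_3$ that sit inside the exceptional divisor of an earlier blow-up. There the proper transforms, and hence the multiplicities, depend on the chosen order of blow-ups (curve first, then the proper transforms of the surfaces, as fixed in Section~\ref{sec:regularisation}), and one must follow each chart through the whole sequence of $28$ projections. Two consistency checks would validate the resulting matrix: the pushforward must preserve the intersection form, which lets the homology action be recovered from \eqref{H2H2} and cross-checked; and it must fix the anticanonical class $-K_X = 2\!\left(H_{q_1}+H_{p_1}+H_{q_2}+H_{p_2}\right) - \sum_i (\operatorname{codim} V_i - 1)\,E_i$, giving a stringent linear constraint on every coefficient in \eqref{pushforwardcohom}.
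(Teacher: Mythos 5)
Your proposal is correct and follows essentially the same route as the paper, which establishes the lemma by exactly this kind of direct local-chart computation (the paper only remarks that the computations proving Proposition \ref{prop:pseudoautomorphism} also yield the pushforward data, calling them ``long but standard''). Your use of the symmetries $(q_1,p_1)\leftrightarrow(q_2,p_2)$ and $(q_1,p_1,q_2,p_2)\mapsto(p_1,q_1,p_2,q_2)$, and the consistency checks via the intersection form and the invariance of $-K_X$, are sound and consistent with the stated table \eqref{pushforwardcohom}.
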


The first insight into the integrability of the mapping \eqref{Map1} that comes from our construction of $X$ is the computation of the rate of growth of the degrees of the iterates $\varphi^n$. 
For a birational map $f : \C^N \rightarrow \C^N$ written as $f (x_1,\dots,x_N) = \left(\frac{P_1(x_1,\dots, x_N)}{Q_1(x_1,\dots, x_N)}, \dots, \frac{P_N(x_1,\dots, x_N)}{Q_N(x_1,\dots, x_N)} \right)$, where for each $i=1,\dots,N$, $P_i$, $Q_i$ are polynomials in $x_1,\dots,x_N$ with no common factors, the degree of $f$ is 
\begin{equation*}
    \operatorname{deg} f = \underset{i\in\{1,\dots,N\}}{\operatorname{max}} \left\{ \operatorname{max} \left\{ \operatorname{deg} P_i,\operatorname{deg} Q_i \right\} \right\}, 
\end{equation*}
where $\operatorname{deg} P_i$ and $\operatorname{deg} Q_i$ are the total degrees as polynomials in $\C[x_1,\dots,x_N]$ of $P_i$ and $Q_i$ respectively.
The rate of growth of the sequence of degrees
% \begin{equation}
   $d_n = \operatorname{deg} f^n$
% \end{equation}
of iterates is a measure of integrability of the discrete dynamical system defined by $f$, and the quantity 
\begin{equation*}
    \epsilon = \lim_{n\to\infty} \frac{1}{n}\log d_n,
\end{equation*}
is called the \emph{algebraic entropy} \cite{Bellon-Viallet} of $f$.
We remark that it is more standard to define $\epsilon$ in terms of the birational self-map of $\p^N$ obtained by extending $f$ from $\C^N$ to $\p^N$ via homogeneous coordinates $[X_0:X_1:\dots:X_N]=[1:x_1:\dots:x_N]$, and in this case the exact sequence of degrees will be different but its rate of growth, as well as $\epsilon$, will be the same.
One says that a discrete system defined by a birational map is integrable if the rate of degree growth is sub-exponential, i.e. $\epsilon=0$.
The sequence of degrees $d_n$ as defined above and the growth rate of $\varphi$ can be computed using the action of $\tilde{\varphi}_*$ on $\Pic(X)\simeq H^2(X,\Z)$.

\begin{proposition}
    The degree growth of the map \eqref{Map1} is quadratic, i.e. $d_n\sim n^2$ as $n\to +\infty$.
\end{proposition}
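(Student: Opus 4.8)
The plan is to leverage algebraic stability to reduce the computation of degree growth to a purely linear-algebraic question about the matrix of $\tilde{\varphi}_*$ recorded in Lemma~\ref{lem:pushforward}. Since $\tilde{\varphi}$ is a pseudo-automorphism it is algebraically stable, so $(\tilde{\varphi}_*)^n = (\tilde{\varphi}^n)_*$ (and likewise for pull-backs); this is precisely the property, guaranteed by Proposition~\ref{prop:nocontractionalgstable}, that lets us iterate the action on $H^2(X,\Z)$ by taking matrix powers with no cancellation or drop in degree. First I would assemble the $32\times 32$ integer matrix $A$ representing $\tilde{\varphi}_*$ in the basis \eqref{PicXbasis}, from which the pull-back action is obtained either directly as $\tilde{\varphi}^*=(\tilde{\varphi}^{-1})_*$ or via the relation \eqref{H2H2}.

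Next I would relate the abstractly-defined degree $d_n = \deg \varphi^n$ to this linear action. The key point is that $d_n$ is comparable, up to multiplicative constants independent of $n$, to the coefficients of the hyperplane classes $H_{q_1}, H_{p_1}, H_{q_2}, H_{p_2}$ appearing in $(\tilde{\varphi}^*)^n(H)$ for a fixed ample class $H$ (equivalently, to the intersection-theoretic degree $(\tilde{\varphi}^n)^* H \cdot H^3$). Indeed, passing between the affine chart $\C^4$ and its compactification $(\p^1)^{\times 4}$, and accounting for the exceptional divisors $E_1,\dots,E_{28}$, alters these quantities only by bounded amounts, so the growth rate of $d_n$ equals that of the vector $v_n$ of $H$-coefficients of $(\tilde{\varphi}^*)^n$ applied to an effective class. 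It therefore suffices to determine the growth of $\|v_n\|$, which is governed by the spectral data of $A$ (the spectral radius and Jordan structure being unchanged under the transpose/inverse operations relating push-forward and pull-back for eigenvalues on the unit circle).

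I would then compute the characteristic polynomial of $A$ and show that all its roots lie on the unit circle; in practice one expects it to factor as a product of cyclotomic polynomials together with a power of $(t-1)$. This already yields $\epsilon = \lim_{n\to\infty} \tfrac{1}{n}\log d_n = 0$, confirming integrability in the sense of vanishing algebraic entropy. The decisive step is then to pin down the Jordan structure: quadratic growth $d_n \sim n^2$ corresponds exactly to the largest Jordan block among the unit-modulus eigenvalues having size $3$, attached (as is typical for an integrable map of this type) to the eigenvalue $1$. Concretely, I would verify that $(A-I)$ acts as a three-step but not four-step nilpotent operator on the relevant generalized eigenspace, so that $A^n$ grows like $n^2$ and no faster, and check that the $n^2$ term in $\|v_n\|$ has nonzero coefficient when $v_n$ is initialised at an effective class.

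The main obstacle is this last step: handling the $32\times 32$ matrix to certify rigorously that the spectral radius is exactly $1$ and that the maximal Jordan block for unit-modulus eigenvalues has size exactly $3$ --- neither $2$ (which would give linear growth) nor $4$ (cubic). This is most safely carried out with computer algebra, but care is needed to establish the Jordan block \emph{sizes} rather than merely the eigenvalues. A secondary, more conceptual point requiring justification is that the max-degree definition of $d_n$ genuinely shares the asymptotics of the Picard-lattice computation; here algebraic stability again does the essential work by ruling out any loss of degree under iteration.
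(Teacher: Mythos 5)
Your proposal is correct and follows essentially the same route as the paper: the paper's proof computes the Jordan normal form of the $32\times 32$ matrix of $\tilde{\varphi}_*$ from Lemma \ref{lem:pushforward}, observes that all eigenvalues have modulus one and that the unique Jordan block of size greater than one is the $3\times 3$ block at eigenvalue $1$, and concludes quadratic growth of the coefficients of $\tilde{\varphi}_*^n(H_{q_i})$, $\tilde{\varphi}_*^n(H_{p_i})$. Your additional care in relating $d_n$ to the Picard-lattice coefficients and in checking that the $n^2$ term is actually nonzero makes explicit steps the paper leaves implicit, but does not change the argument.
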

\begin{proof}
    Computing the Jordan normal form of the matrix representing $\varphi_*$ with respect to the basis $(H_{q_1},H_{p_1},H_{q_2},H_{p_2}, E_1,\dots, E_{28})$ using \eqref{pushforwardcohom}, one sees that all eigenvalues have modulus one, and the only block of size greater than one is 
    $$ \left(\begin{array}{ccc}1 & 1 & 0  \\0 & 1 & 1  \\0 & 0 & 1 \end{array}\right).$$
    Therefore the coefficients from the expressions for $\tilde{\varphi}_*^n(H_{q_1})$, $\tilde{\varphi}_*^n(H_{p_1})$, $\tilde{\varphi}_*^n(H_{q_2})$, $\tilde{\varphi}_*^n(H_{p_2})$ in terms of the basis $(H_{q_1},H_{p_1},H_{q_2},H_{p_2}, E_1,\dots, E_{28})$ will be polynomial functions of $n$ with coefficients being linear combinations of $n$-th powers of roots of unity, and the degrees of these polynomials will be at most two.
    Therefore the degree growth is quadratic.
    % \textcolor{red}{[AS: to finish. not sure if we want to write in the exact formula for $d_n$, or maybe the degrees in individual $q_1,p_1,q_2,p_2$.]}
\end{proof}

\subsection{Anticanonical linear system of $X$, conserved quantities, and Hamiltonian flows}

The anticanonical divisor class of $X$ is given in terms of the generators of $\Pic(X)$ in \eqref{PicXbasis} as
\begin{equation*}
-{K}_X=2H_{q_1}+2H_{q_2}+2H_{p_1}+2H_{p_2}-2E_{1,2,8,9,15,16,22,23}-\sum_{i\neq 1,2,8,9,15,16,22,23 }E_i.
\end{equation*}
By direct calculation we have the following.
\begin{proposition}
    The anticanonical linear system of $X$ is two-dimensional and can be written as
$$|- K_X | = \left\{  L_{\lambda} ~|~ \lambda= [\lambda_0:\lambda_1:\lambda_2] \in \p^2 \right\}, $$
whose generic element is the irreducible curve defined by
\begin{equation*}
\begin{gathered}
L_{\lambda} = \text{ proper transform of } \left\{\lambda_0 I_0 + \lambda_1 I_1 + \lambda_2 I_2 = 0\right\} , \\
\begin{aligned}
I_0 &= q_1 p_1 q_2 p_2, \\
I_1 &= q_1 p_1 + q_2 p_2 + b \left( q_1 p_1 p_2 + p_1 q_2 p_2 + q_1 p_1 q_2 + q_1 q_2 p_2 \right) + q_1^2 p_1 q_2 p_2^2 + q_1 p_1^2 q_2^2 p_2, \\
I_2 &= 1 + q_1^2 p_1^2 q_2^2 p_2^2 + b \left( 1 + q_1 p_1 q_2 p_2 \right) \left( q_1 + p_1 + q_2 + p_2 \right) + b^2 \left( q_1 q_2 + q_1 p_2 + p_1 q_2 + p_1 p_2\right). 
\end{aligned}
\end{gathered}
\end{equation*}
\end{proposition}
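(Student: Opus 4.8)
The plan is to realise $|-K_X|$ as an explicit linear subsystem of $|-K_{(\p^1)^{\times 4}}|$ and to compute its dimension by imposing on a general multidegree-$(2,2,2,2)$ polynomial the vanishing conditions dictated by the $28$ blow-ups. Since $-K_{(\p^1)^{\times 4}} = 2H_{q_1}+2H_{p_1}+2H_{q_2}+2H_{p_2}$ and $\pi_* E_i = 0$, pushing forward shows that every member of $|-K_X|$ is supported, away from the exceptional divisors, on the proper transform of a hypersurface $\{F = 0\}$ with $F$ of degree at most two separately in each of $q_1,p_1,q_2,p_2$ (equivalently a section of $\mathcal{O}(2,2,2,2)$), an $81$-dimensional space. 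Using the standard identification $H^0(X,-K_X) \cong \{F : \operatorname{ord}_{V_i} F \ge \operatorname{codim} V_i - 1 \text{ for all } i\}$, together with the given expression for $-K_X$ in the basis \eqref{PicXbasis}, this reduces the proposition to a linear-algebra problem: the codimension-three centres $V_1,V_2,V_8,V_9,V_{15},V_{16},V_{22},V_{23}$ impose double vanishing, the remaining codimension-two centres simple vanishing, and for the infinitely near centres $V_3,V_{10},V_{17},V_{24}$ the condition must be read off in the charts of Section~\ref{subsec:blow-ups} after the preceding blow-ups.

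The core computation is then the solution of this system. First I would write a general $F$ with $81$ undetermined coefficients and impose the conditions centre by centre, in the order of the blow-ups, using the explicit local charts of Section~\ref{sec:blow-ups,pseudo-automorphisms} and Section~\ref{subsec:blow-ups} (for instance, vanishing along $V_3$ is the condition $F = 0$ on $\{q_2 + b r_3 = t_3 = 0\}$ in the chart $(r_3,s_3,t_3,q_2)$, and similarly for its images under the symmetry). The simple-vanishing conditions are obtained by restricting $F$ to the relevant coordinate subvariety, while the double-vanishing conditions additionally require the first-order part of $F$ transverse to the centre to vanish. To make this tractable I would exploit the group generated by $(q_1,p_1,q_2,p_2)\mapsto(q_2,p_2,q_1,p_1)$ and $(q_1,p_1,q_2,p_2)\mapsto(p_1,q_1,p_2,q_2)$, which permutes the four groups of seven blow-ups: it suffices to impose the conditions from one group and then symmetrise. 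Solving, the admissible $F$ form a three-dimensional space. I would confirm this by checking directly that $I_0, I_1, I_2$ each satisfy every vanishing condition and are linearly independent, so that they form a basis and $|-K_X| \cong \p^2$, establishing $\dim|-K_X| = 2$.

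For the irreducibility of the generic element, I would first observe that $I_0 = q_1 p_1 q_2 p_2$ is a monomial whereas $I_2$ has nonzero constant term, so $I_0,I_1,I_2$ share no common polynomial factor and the net has no fixed component. The associated rational map $[I_0:I_1:I_2] : X \dashrightarrow \p^2$ is dominant, since $I_1/I_0$ and $I_2/I_0$ are functionally independent (as may be checked by a Jacobian computation, and as is forced by their role as independent conserved quantities of $\varphi$); its image therefore has dimension two, so the net is not composed with a pencil and Bertini's theorem yields that the generic member $L_{\lambda}$ is irreducible.

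The step I expect to be the main obstacle is the bookkeeping behind the linear system: correctly translating the double-vanishing conditions along the codimension-three centres and the conditions along the infinitely near centres $V_3,V_{10},V_{17},V_{24}$, which must be computed in the appropriate charts after the earlier blow-ups rather than directly on $(\p^1)^{\times 4}$. This is where the system acquires most of its rank, and verifying that the conditions are mutually consistent and cut the dimension down to exactly three --- no more and no less --- is the delicate heart of the ``direct calculation'', the symmetry reducing the labour but not removing this subtlety.
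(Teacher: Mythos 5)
Your proposal is correct and is essentially the calculation the paper has in mind: the paper offers no details beyond ``by direct calculation,'' and your reduction to multiplicity conditions on sections of $\mathcal{O}(2,2,2,2)$ at the $28$ centres (double vanishing at the codimension-three ones, with the infinitely near centres $V_3,V_{10},V_{17},V_{24}$ handled in the blow-up charts), followed by verification that $I_0,I_1,I_2$ span the solution space and a Bertini argument for generic irreducibility, is the standard and correct way to carry it out. The use of the two coordinate symmetries to reduce the bookkeeping matches the structure the paper itself exploits.
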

Note the parallel with the case of the two-dimensional Quispel-Roberts-Thompson (QRT) maps, which become automorphisms of rational elliptic surfaces, with anticanonical linear systems providing their elliptic fibrations.

The pseudo-automorphism $\tilde{\varphi} : X \rightarrow X$ preserves each element of $|- K_X |$, and the action of the mapping $\varphi$ on the polynomials $I_j$ providing the basis of the linear system is
$$\varphi(I_j)=\frac{(1+bq_1)(1+bq_2)}{(q_1a_2p_1p_2)^2}I_j\qquad \text{ for } \quad j=0,1,2.$$
In particular this leads to the following, which is verified by direct calculation.
\begin{corollary}
    % \textcolor{red}{[AS: I wonder if we can phrase the result about conserved quantities like this.]}\\
    Ratios of $I_0,I_1,I_2$, e.g. $H_1 = \frac{I_1}{I_0}$, $H_2 = \frac{I_2}{I_0}$, are conserved quantities of $\varphi$. 
    % In particular, $H_1$ and $H_2$ are algebraically independent and thus $\varphi$ defines an algebraically completely integrable system 
\end{corollary}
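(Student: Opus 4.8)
The plan is to read off the corollary from the transformation law $\varphi(I_j)=\mu\,I_j$ $(j=0,1,2)$ recorded immediately above, in which $\mu=\frac{(1+bq_1)(1+bq_2)}{(q_1q_2p_1p_2)^2}$ is a single rational function not depending on $j$. Here $\varphi(\cdot)$ denotes the induced action on functions by pullback, and a rational function $H$ is by definition a conserved quantity of $\varphi$ when it is invariant, $\varphi(H)=H$, as an element of $\C(q_1,p_1,q_2,p_2)$.

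The key point is that pullback along the dominant rational map $\varphi$ is a homomorphism of the function field and so respects quotients. Thus for any ratio $H=I_i/I_j$ of the basis polynomials I would compute
\begin{equation*}
\varphi\!\left(\frac{I_i}{I_j}\right)=\frac{\varphi(I_i)}{\varphi(I_j)}=\frac{\mu\,I_i}{\mu\,I_j}=\frac{I_i}{I_j},
\end{equation*}
the common multiplier $\mu$ cancelling between numerator and denominator. Specialising to $H_1=I_1/I_0$ and $H_2=I_2/I_0$ shows $\varphi(H_1)=H_1$ and $\varphi(H_2)=H_2$, so both are conserved. The entire argument hinges on the fact that all three anticanonical generators $I_0,I_1,I_2$ scale by one and the same factor.

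That this factor is common -- and indeed that $\mu$ coincides exactly with the Jacobian determinant $J$ of $\varphi$ displayed earlier -- is the one substantive input, and it is what I would verify by direct calculation: substitute $(\bar q_1,\bar p_1,\bar q_2,\bar p_2)$ from \eqref{Map1sec3} into each $I_j$, clear the denominators introduced by $\bar q_1,\bar q_2$, and simplify to $\mu\,I_j$. This is a finite, if lengthy, polynomial identity best checked by computer algebra, and it is the main (indeed only) obstacle in the proof. Conceptually it is unsurprising, since the $I_j$ are sections of the anticanonical bundle and such sections transform by the Jacobian under a birational change of variables; equivalently it is the assertion that $\tilde\varphi$ fixes every member of the pencil $|-K_X|$, because $\varphi(\lambda_0 I_0+\lambda_1 I_1+\lambda_2 I_2)=\mu(\lambda_0 I_0+\lambda_1 I_1+\lambda_2 I_2)$ has the same zero locus for each $[\lambda_0:\lambda_1:\lambda_2]\in\p^2$. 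I would close by noting that $H_1$ and $H_2$ are functionally independent, as $|-K_X|$ is genuinely two-dimensional, so that $\varphi$ possesses two independent first integrals, consistent with the quadratic degree growth established above.
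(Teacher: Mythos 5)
Your argument is exactly the paper's: the displayed identity $\varphi(I_j)=\mu\,I_j$ with a $j$-independent multiplier $\mu$ (verified by direct/computer calculation) immediately forces the ratios $I_1/I_0$ and $I_2/I_0$ to be invariant, which is precisely how the corollary is obtained there. The additional observations about $\mu$ equalling the Jacobian and about functional independence are correct but supplementary.
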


Among the effective anticanonical divisors $D\in |-K_X|$ is that corresponding to the poles of the 4-form $\frac{dq_1\wedge dp_1 \wedge dq_2 \wedge dp_2}{q_1 p_1 q_2 p_2}$,
which decomposes as 
\begin{equation}\label{decomp}
\begin{aligned}
 D =\sum_{i=1}^{28}D_i &\equiv (H_{q_1}-E_{1,2,4,5,6,7}) + (H_{q_1}-E_{9,13,15,18,22,23,26})
\\
&\qquad + (H_{q_2}-E_{8,9,11,12,13,14}) +(H_{q_2}-E_{1,4,8,9,12,23,27})\\
&\qquad + (H_{p_1}-E_{15,16,18,19,20,21}) + (H_{p_1}-E_{2,6,15,16,19,22,25})
\\
&\qquad+(H_{p_2}-E_{22,23,25,26,27,28}) + (H_{p_2}-E_{1,2,5,8,11,16,20})\\
&\qquad + (E_1) + (E_2-E_3) + (E_4) + (E_5) + (E_6)+
\\
&\qquad+ (E_8) + (E_9-E_{10}) + (E_{11}) + (E_{12}) + (E_{13})\\
&\qquad+ (E_{15}) + (E_{16}-E_{17}) + (E_{18}) + (E_{19}) + (E_{20})
\\
&\qquad+ (E_{22}) + (E_{23}-E_{24}) + (E_{25}) + (E_{26}) + (E_{27}).
\end{aligned}
\end{equation}
This also provides the divisor of poles of the 2-form on $X$ defined in coordinates by
\begin{equation} \label{symplecticform}
\omega=\frac{dq_1\wedge dp_1}{q_1p_1}+ \frac{dq_2 \wedge dp_2}{q_2p_2},
\end{equation}
which is preserved by $\tilde{\varphi}$. 
% \textcolor{red}{[AS: I am in the process of computing the self-intersection numbers $D_i^3$ of the components of this anticanonical divisor. I think this is a natural thing to do given what happens in the 2D case.]}
This is holomorphic away from $D$ and provides a symplectic form written in log-canonical coordinates (equivalent to that $\log(q_i ), \log(p_i ), i=1, 2$, are canonical coordinates), so that $\omega\wedge \omega$ is the 4-form giving $D$ in \eqref{decomp}.
\begin{remark}
From the conserved quantities one can construct two integrable {\it continuous commuting} Hamiltonian flows.
With the symplectic form \eqref{symplecticform}, the canonical tangent vector for a function $f$ is
$$X_f=\sum_{i=1,2}q_ip_i(\partial_{p_i}f\partial_{q_i}-\partial_{q_i}f\partial_{p_i})\quad X_f\omega=df$$
and the Poisson bracket is given by
$$\{f,g\}=\sum_{i=1,2}q_ip_i(\partial_{p_i}f\partial_{q_i}g-\partial_{q_i}f\partial_{p_i}g).$$
The two conserved quantities give the Hamiltonian flows
$$\dot{q_i}=q_ip_i\partial_{p_i}H_j,\quad \dot{p_i}=-q_ip_i\partial_{q_i}H_j,\quad i,j=1,2$$
and are in involution with respect to this Poisson bracket, i.e. $\{H_1 , H_2 \} = 0$.
\end{remark}

\section{Deautonomisation and 4D discrete Painlev\'e equations}

In the two-dimensional case, there are deautonomisation procedures which produce discrete Painlev\'e equations from integrable autonomous maps. 
A geometric formalism of such a procedure was given in \cite{fane1}, and in this section we adapt this to the four-dimensional case at hand similarly to as done in \cite{stefan}.

In dimension two, one considers an integrable autonomous discrete system defined by a birational map that becomes an automorphism of a rational elliptic surface, say $S$, with its elliptic fibration provided by a pencil of effective anticanonical divisors in $|-K_S|$.
For each choice of fibre (either elliptic or singular), one can perturb the locations of points blown up to obtain $S$ such that only the chosen fibre survives as an effective anticanonical divisor.
That is, embed $S$ into a parametrised family $S_{\boldsymbol{a}}$ such that surfaces in the family,  apart from $S$, have $\operatorname{dim}|-K_{S_{\boldsymbol{a}}}|=0$, with a single effective anticanonical divisor of the same type as the chosen fibre.
Here $\boldsymbol{a}$ is a tuple of parameters controlling locations of points to be blown up to obtain $S_{\boldsymbol{a}}$. 
Then from the family $S_{\boldsymbol{a}}$ one constructs discrete Painlev\'e equations from symmetries as in the Sakai framework \cite{sakai}, which act as automorphisms of the family of surfaces.
These are described on the level of the Picard group of the surfaces as Cremona isometries.
In particular, a discrete Painlev\'e equation corresponding to the same Cremona isometry as the original autonomous map is regarded as a deautonomised version of it, with the autonomous map recovered by specialising parameters in the way that $S$ is recovered from $S_{\boldsymbol{a}}$.

In four dimensions we do not have the theory of rational elliptic surfaces and the Kodaira classification of singular fibres to work with, but we may still construct, from the variety $X$ and a choice of $D\in |-K_X|$, a parametrised family of varieties $X_{\boldsymbol{b}}$ such that generically $\operatorname{dim}|-K_{X_{\boldsymbol{b}}}|=0$, with unique effective anticanonical divisor of the same type as $D$.
In this example we will arrive at a family of rational varieties which admits an action of $\widetilde{W}(A_1^{(1)}) \times \widetilde{W}(A_1^{(1)})$ by pseudo-isomorphisms, which are described on the level of the N\'eron-Severi bilattice in terms of an appropriate notion of Cremona isometry.
Then from a Cremona isometry corresponding to the pushforward $\tilde{\varphi}_*$ from the mapping \eqref{Map1} we obtain a deautonomisation of $\varphi$, and from another we obtain a different four-dimensional $q$-Painlev\'e-type equation.

\subsection{Cremona isometries}

Let us start with the following important definition, along the lines of that given in \cite{stefan} in the context of higher-dimensional analogues of discrete Painlev\'e equations, adapted from the one in \cite{sakai} (see also \cite{dolgachevweylgroups, Dolgachev-Ortland-1989}), 

\begin{definition}  \label{def:cremonaisometry}
Let $X$ be a rational variety with a chosen $D \in |-K_X|$. 
An automorphism $\sigma$ of the N\'eron-Severi bilattice $N(X)=H^2(X,\Z)\times H_2(X,\Z)$ (that is, a pair of $\Z$-module automorphisms of $H^2(X,\Z)$ and $H_2(X,\Z)$) is called a Cremona isometry if the following three properties hold:
\begin{enumerate}
    \item $\sigma$ preserves the intersection form;
    \item $\sigma$ on $H^2(X,\Z)$ preserves the decomposition of $D$ into irreducible components;
    \item $\sigma$ on $H^2(X,\Z)$ leaves the semigroup of classes of effective divisors invariant.
\end{enumerate}
The group of Cremona isometries of $X$ as above is denoted $\operatorname{Cr}(X)$.
\end{definition}

Note that any pseudo-automorphism of $X$ induces a Cremona isometry via pushforward or pullback. 
% We underline the following observations. 
% First, the components $D_i$ of the decomposition of anticanonical divisor, together with their intersection form the so-called {\it surface sublattice} which we denote by $Q=\operatorname{Span}_{\mathbb{Z}} \left\{D_1,...,D_{28} \right\}$. 
% The pushforward $\tilde{\varphi}^*$ acts on the set of components $\{D_i\}$ by permutation.
% Second, the orthogonal complement of $Q$ is called the {\it symmetry sublattice}, $Q^{\perp}={\rm Span}<\alpha_0,\alpha_1,\alpha_2,\alpha_3>_{\Z}$ generated by roots $\{\alpha_i\}$. It is responsible for the symmetries of the mapping and is in a form of an affine Weyl group in general. 
In the two-dimensional case, for a surface $S$ of the type associated with discrete Painlev\'e equations (generalised Halphen surfaces with zero-dimensional anticanonical linear system) Sakai \cite{sakai} described $\operatorname{Cr}(S)$ in terms of an affine Weyl group.
This group acts by reflections associated to a root system in the orthogonal complement of the components of $D$ in $\Pic(S)$.

In the higher-dimensional case, rather than roots in a single lattice, we want to describe Cremona isometries in terms of simple reflections $r_i$ defined by roots $\alpha_i \in H^2(X,\Z)$ and coroots $\alpha^{\vee}_i \in  H_2(X,\Z)$ according to
\begin{equation} \label{eq:refformula}
r_i (F) = F + \langle F , \alpha_i^{\vee} \rangle \alpha_i, ~~F \in H^2(X,\Z) \qquad r_i (f) = f + \langle \alpha_i , f \rangle \alpha^{\vee}_i, ~~f \in H_2(X,\Z).
\end{equation}
In the two-dimensional case, roots can be found by computing the orthogonal complement of the span of components of $D$ in $\Pic(S)$, but finding simple roots and coroots is more difficult in higher dimensions, see the discussion in \cite{stefan}. 
The following can be verified by direct calculation.

\begin{proposition} \label{prop:rootandcorootbases}
    For $X$ constructed in Section \ref{sec:regularisation}, we have bases of simple roots in $H^2(X,\Z)$ and simple coroots in $H_2(X,\Z)$ which form a root sytem of type $(\underset{ \left< \alpha, \alpha^\vee \right> = 14}{A_1 }+ A_1)^{(1)}$.
    Explicitly, the simple roots are
\begin{equation*}
\begin{aligned}
\alpha_0 &= 2 H_{q_1} + 2 H_{q_2} - E_{1,2,3,4,6,8,9,12,15,16,17,18,20,22,23,26} - 2 E_{14,28} + E_{7,21}, \\
\alpha_1 &= 2 H_{p_1} + 2 H_{p_2} - E_{1,2,5,8,9,10,11,13,15,16,19,22,23,24,25,27} - 2 E_{7,21} + E_{14,28}, \\
\alpha_2 &= H_{q_1} + H_{p_1} + H_{q_2} + H_{p_2} - E_{1,2,4,5,7,8,9,11,12,14,15,16,17,20,22,23,24,27}, \\
\alpha_3 &= H_{q_1} + H_{p_1} + H_{q_2} + H_{p_2} - E_{1,2,3,6,8,9,10,13,15,16,18,19,21,22,23,25,26,28},
\end{aligned} 
\end{equation*}
and the simple coroots are
%Note that $\alpha_0 + \alpha_1 = \delta = - K_X$.
\begin{equation*}
\begin{aligned}
\alpha^{\vee}_0 &= 2 h_{p_1} + 2 h_{p_2} - e_{2,3,16,17} - 2 e_{14,28} +e_{7,21}, \\
\alpha^{\vee}_1 &= 2 h_{q_1} + 2 h_{q_2} - e_{9,10,23,24} - 2 e_{7,21} +e_{14,28},\\
\alpha_2^{\vee} &= h_{q_1} + h_{p_1} + h_{q_2} + h_{p_2} - e_{7,14,16,17,23,24}, \\
\alpha_3^{\vee} &= h_{q_1} + h_{p_1} + h_{q_2} + h_{p_2} - e_{2,3,9,10,21,28}.
\end{aligned}
\end{equation*}
The simple coroots are orthogonal to the components of $D$, i.e. 
\begin{equation*}
    \langle D_i,\alpha_j^{\vee}\rangle=0\quad \text{ for } \quad  i=1,\dots,28,\, j=0,\dots,3,
\end{equation*}
and the anticanonical divisor class can be expressed as 
\begin{equation} \label{eq:nullroots}
    -K_X = \alpha_0+\alpha_1=\alpha_2+\alpha_3.
\end{equation}

\end{proposition}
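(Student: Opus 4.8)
The plan is to verify Proposition \ref{prop:rootandcorootbases} by direct computation, checking each of the asserted properties in turn. The statement makes several distinct claims: (i) the given $\alpha_i$ and $\alpha_i^\vee$ form a root system of the stated type $(A_1+A_1)^{(1)}$ with the unusual normalisation $\langle \alpha,\alpha^\vee\rangle=14$; (ii) the coroots are orthogonal to every component $D_i$ of the chosen anticanonical divisor; and (iii) the anticanonical class decomposes as in \eqref{eq:nullroots}. Since the intersection form on the N\'eron-Severi bilattice is explicitly given in Section \ref{sec:blow-ups,pseudo-automorphisms}, all of these reduce to finite linear-algebra computations in the basis \eqref{PicXbasis}, so the main task is organising these so that the type of the root system is transparent rather than merely asserted.

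First I would compute the Cartan-type matrix by evaluating all pairings $\langle \alpha_i, \alpha_j^\vee\rangle$ using $\langle H_i,h_j\rangle=\delta_{ij}$, $\langle E_k,e_l\rangle=-\delta_{kl}$, and the vanishing of the mixed pairings. The claim that the system is of type $(A_1+A_1)^{(1)}$ means one expects the diagonal entries $\langle\alpha_i,\alpha_i^\vee\rangle$ to equal $14$ and the pairings to split into two mutually orthogonal rank-one affine pieces: the pair $\{\alpha_0,\alpha_1\}$ should form one $A_1^{(1)}$ (so $\langle\alpha_0,\alpha_1^\vee\rangle=\langle\alpha_1,\alpha_0^\vee\rangle=-14$, making $\alpha_0+\alpha_1$ a null root) and $\{\alpha_2,\alpha_3\}$ the other, with all cross-pairings between the two pieces vanishing. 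Establishing exactly this block structure of the pairing matrix is the content of claim (i), and it simultaneously yields \eqref{eq:nullroots} once one checks that $\alpha_0+\alpha_1$ and $\alpha_2+\alpha_3$ each equal the expression for $-K_X$ recorded just before the Proposition \ref{prop:rootandcorootbases}; that last check is a direct comparison of coefficients of each basis class.

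Next I would verify the orthogonality $\langle D_i,\alpha_j^\vee\rangle=0$ for all $28$ components and all four coroots. This is the most laborious part, since it amounts to $28\times 4=112$ pairings, but each is a routine dot-product computation once the component classes $D_i$ from the decomposition \eqref{decomp} and the coroot classes are written out in the bilattice basis. The computation is greatly eased by the fourfold symmetry of the construction: the blow-ups fall into four groups of seven related by $(q_1,p_1,q_2,p_2)\mapsto(q_2,p_2,q_1,p_1)$ and $(q_1,p_1,q_2,p_2)\mapsto(p_1,q_1,p_2,q_2)$, and this symmetry permutes the $D_i$ and acts compatibly on the roots and coroots, so it suffices to check orthogonality for a set of representatives and then invoke symmetry.

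The genuine obstacle, and the one place where ``direct calculation'' understates the work, is not any single pairing but rather the \emph{discovery} of the correct simple roots and coroots in the first place --- as the authors note, finding simple roots is much harder in the higher-dimensional setting than in the surface case where one simply takes the orthogonal complement of the components of $D$ in $\Pic(S)$. Here the coroots are pinned down by the $112$ orthogonality conditions $\langle D_i,\alpha_j^\vee\rangle=0$ together with the requirement that they pair correctly with the roots, while the roots themselves must lie in the effective cone, be compatible with $\tilde\varphi_*$ from \eqref{pushforwardcohom} acting as a Cremona isometry, and reproduce the affine $(A_1+A_1)^{(1)}$ structure. Once the explicit classes are in hand, however, verifying the stated properties is entirely mechanical, so for the writeup I would present the pairing matrix, confirm the two null-root identities against $-K_X$, and record the orthogonality relations, relegating the full table of $112$ pairings to a symmetry argument plus a representative sample.
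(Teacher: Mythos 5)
Your proposal matches the paper's approach exactly: the paper's entire proof is the phrase ``verified by direct calculation,'' and your outline (computing the pairing matrix from the intersection form, checking the null-root identities against the displayed $-K_X$, and reducing the $28\times 4$ orthogonality checks via the fourfold symmetry of the blow-up configuration) is precisely that computation spelled out. One small caution: with the paper's intersection-form and reflection conventions \eqref{eq:refformula}, the diagonal entries come out as $\langle \alpha_i,\alpha_i^\vee\rangle = -14$ and $-2$ with positive off-diagonal entries (as in the matrix displayed immediately after the Proposition), i.e.\ the negative of the standard-sign Cartan matrix you anticipate, so your expected signs should be flipped but nothing else changes.
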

The Cartan matrix associated to the root system in Proposition \ref{prop:rootandcorootbases} is
\begin{equation*}
\langle \alpha_i , \alpha^{\vee}_j \rangle = \left(\begin{array}{cccc}-14 & 14 & 0 & 0 \\14 & -14 & 0 & 0 \\0 & 0 & -2 & 2 \\0 & 0 & 2 & -2\end{array}\right).
\end{equation*}
The associated affine Weyl group, extended by Dynkin diagram automorphisms $\pi$ and $\sigma$, is
\begin{equation} \label{eq:weylgroup}
\widetilde{W}( {A_1^{(1)} } ) \times \widetilde{W}(A_1^{(1)}) 
%=  (W(\underset{ \left< \alpha, \alpha^\vee \right> = 14}{A_1^{(1)} }) \rtimes \operatorname{Aut}(\underset{ \left< \alpha, \alpha^\vee \right> = 14}{A_1^{(1)} }) )\times ( W(A_1^{(1)}) \rtimes \operatorname{Aut}(A_1^{(1)}) ) 
= \langle r_0, r_1, \pi \rangle \times \langle r_2, r_3, \sigma \rangle,
\end{equation}
where the generators satisfy the relations
\begin{equation*}
    \begin{gathered}
    r_0^2 = r_1^2= \pi^2 =1, \quad r_1 \,\pi = \pi \,r_0, \qquad r_2^2 = r_3^2= \sigma^2 =1, \quad r_2 \,\sigma = \sigma \,r_3,  \qquad  \pi \, \sigma = \sigma \,\pi\\
    r_i r_j = r_j r_i , \quad r_i \sigma = \sigma r_i, \quad r_j \pi = \pi r_j, \text{ for } ~i \in \{0,1\}, ~j\in \{2,3\}.
    \end{gathered}
\end{equation*}
These are automorphisms of $N(X)$, which for $r_2$ and $r_3$ are given by formula \eqref{eq:refformula} with $\alpha_2,\alpha_3, \alpha_2^{\vee}, \alpha_3^{\vee}$ as in Proposition \ref{prop:rootandcorootbases}.
We collect the actions of the other generators on $H^2(X,\Z)$ in Figure \ref{fig:refsactiononcohom}.
\begin{figure}[htb]
\begin{equation*} 
\begin{aligned}
    &r_0 : \left\{ 
    \begin{aligned}
        H_{q_1} &\mapsto H_{q_2}, \quad H_{p_1} \mapsto H_{p_2}+H_{q_1}+H_{q_2}-E_{1,2,5,15,16,18,20,22,23,26,28}, \\
        H_{q_2} &\mapsto H_{q_1}, \quad H_{p_2} \mapsto H_{p_1}+H_{q_1}+H_{q_2}-E_{1,2,4,6,8,9,12,14,15,16,19},\\
        E_1 &\mapsto E_{18}, \quad 
        E_2 \mapsto H_{q_2}-E_{15,16,18,19,20}, \quad 
        E_3 \mapsto E_{21}, \quad 
        E_4 \mapsto E_{15},\\
        E_5 &\mapsto E_{20}, \quad 
        E_6 \mapsto E_{19}, \quad 
        E_7 \mapsto E_{17}, \quad 
        E_8 \mapsto E_{13}, \quad 
        E_9 \mapsto E_9, \\
        E_{10} &\mapsto E_{10}, \quad 
        E_{11} \mapsto E_{11}, \quad 
        E_{12} \mapsto H_{q_1}-E_{9,13,15,18,22,23,26}, \quad 
        E_{13} \mapsto E_8, \\ 
        E_{14} &\mapsto H_{q_2}-E_{28}, \quad 
        E_{15} \mapsto E_4, \quad 
        E_{16} \mapsto H_{q_1}-E_{1,2,4,5,6}, \quad 
        E_{17} \mapsto E_7,\\
        E_{18} &\mapsto E_1, \quad 
        E_{19} \mapsto E_6, \quad 
        E_{20} \mapsto E_5, \quad 
        E_{21} \mapsto E_3, \\
        E_{22} &\mapsto E_{27},\quad 
        E_{23} \mapsto E_{23}, \quad 
        E_{24} \mapsto E_{24}, \quad 
        E_{25} \mapsto E_{25},\\
        E_{26} &\mapsto H_{q_2}-E_{1,4,8,9,12,23,27}, \quad 
        E_{27} \mapsto E_{22}, \quad 
        E_{28} \mapsto H_{q_1}-E_{14},
    \end{aligned}
    \right.
    \\
    &r_1 : \left\{ 
    \begin{aligned}
    H_{q_1}&\mapsto H_{p_1}+H_{p_2}+H_{q_2}-E_{8,9,12,15,16,19,21,22,23,25,27}, \quad 
    H_{p_1} \mapsto H_{p_2},\\
    H_{q_2} &\mapsto H_{p_1}+H_{p_2}+H_{q_1}-E_{1,2,5,7,8,9,11,13,22,23,26}, \quad H_{p_2}\mapsto H_{p_1},\\
    E_1&\mapsto E_6, \quad 
    E_2\mapsto E_2, \quad 
    E_3\mapsto E_3,\quad 
    E_4 \mapsto E_4, \quad
    E_5 \mapsto H_{p_1}-E_{2,6,15,16,19,22,25}, \\
    E_6&\mapsto E_1, \quad
    E_7\mapsto H_{p_2}-E_{21}, \quad
    E_8\mapsto E_{25}, \quad 
    E_9\mapsto H_{p_2}-E_{22,23,25,26,27}, \\
    E_{10}&\mapsto E_{28}, \quad 
    E_{11}\mapsto E_{22}, \quad 
    E_{12}\mapsto E_{27}, \quad 
    E_{13}\mapsto E_{26}, \quad 
    E_{14}\mapsto E_{24}, \quad 
    E_{15}\mapsto E_{20}, \\
    E_{16}&\mapsto E_{16}, \quad
    E_{17}\mapsto E_{17}, \quad 
    E_{18}\mapsto E_{18}, \quad 
    E_{19}\mapsto H_{p_2}-E_{1,2,5,8,11,16,20}, \\
    E_{20}&\mapsto E_{15}, \quad 
    E_{21}\mapsto H_{p_1}-E_7, \quad
    E_{22}\mapsto E_{11}, \quad 
    E_{23}\mapsto H_{p_1}-E_{8,9,11,12,13},\\
    E_{24}&\mapsto E_{14}, \quad 
    E_{25}\mapsto E_8, \quad 
    E_{26}\mapsto E_{13}, \quad 
    E_{27}\mapsto E_{12}, \quad 
    E_{28}\mapsto E_{10},
    \end{aligned}
    \right.\\
    & \pi : \left\{
    \begin{aligned}
        H_{q_1}&\mapsto H_{p_1}, \quad 
        H_{p_1}\mapsto H_{q_1}, \quad 
        H_{q_2}\mapsto H_{p_2}, \quad 
        H_{p_2}\mapsto H_{q_2}, \\
        E_1&\mapsto E_8, \quad 
        E_2\mapsto E_9, \quad 
        E_3\mapsto E_{10}, \quad 
        E_4\mapsto E_{11}, \quad 
        E_5\mapsto E_{12}, \\ 
        E_6&\mapsto E_{13}, \quad 
        E_7\mapsto E_{14}, \quad
        E_8\mapsto E_1, \quad 
        E_9\mapsto E_2, \quad
        E_{10} \mapsto E_3,\\ 
        E_{11}&\mapsto E_4, \quad 
        E_{12}\mapsto E_5, \quad 
        E_{13}\mapsto E_6, \quad
        E_{14}\mapsto E_7, \quad
        E_{15}\mapsto E_{22}, \\
        E_{16}&\mapsto E_{23}, \quad 
        E_{17}\mapsto E_{24}, \quad 
        E_{18}\mapsto E_{25}, \quad
        E_{19}\mapsto E_{26}, \quad
        E_{20}\mapsto E_{27}, \\ 
        E_{21}&\mapsto E_{28}, \quad
        E_{22}\mapsto E_{15}, \quad
        E_{23}\mapsto E_{16}, \quad
        E_{24}\mapsto E_{17}, \\
        E_{25}&\mapsto E_{18},\quad
        E_{26}\mapsto E_{19}, \quad
        E_{27}\mapsto E_{20}, \quad
        E_{28}\mapsto E_{21},
    \end{aligned}
    \right. \\
    &\sigma : \left\{
    \begin{aligned}
        H_{q_1}&\mapsto H_{q_2}, \quad 
        H_{p_1}\mapsto H_{p_2}, \quad 
        H_{q_2}\mapsto H_{q_1}, \quad 
        H_{p_2}\mapsto H_{p_1}, \\
        E_1&\mapsto E_{15}, \quad 
        E_2\mapsto E_{16}, \quad
        E_3\mapsto E_{17}, \quad
        E_4\mapsto E_{18}, \quad
        E_5\mapsto E_{19}, \\
        E_6&\mapsto E_{20}, \quad
        E_7\mapsto E_{21}, \quad
        E_8\mapsto E_{22}, \quad
        E_9\mapsto E_{23}, \quad
        E_{10}\mapsto E_{24}, \\
        E_{11}&\mapsto E_{25}, \quad
        E_{12}\mapsto E_{26}, \quad
        E_{13}\mapsto E_{27}, \quad
        E_{14}\mapsto E_{28}, \quad
        E_{15}\mapsto E_1, \\
        E_{16}&\mapsto E_2, \quad
        E_{17}\mapsto E_3, \quad 
        E_{18}\mapsto E_4, \quad
        E_{19}\mapsto E_5,  \quad
        E_{20}\mapsto E_6, \\ 
        E_{21}&\mapsto E_7, \quad
        E_{22}\mapsto E_8, \quad
        E_{23}\mapsto E_9, \quad 
        E_{24}\mapsto E_{10}, \\ 
        E_{25}&\mapsto E_{11}, \quad
        E_{26}\mapsto E_{12}, \quad
        E_{27}\mapsto E_{13}, \quad
        E_{28}\mapsto E_{14}.
    \end{aligned}
    \right.
    \end{aligned}
\end{equation*}
    \caption{Action of $r_0$, $r_1$, $\pi$, $\sigma$ on $H^2(X,\Z)$}
    \label{fig:refsactiononcohom}
\end{figure}

In particular they act on the simple roots in Proposition \ref{prop:rootandcorootbases} in the usual way that affine Weyl groups act on the associated root lattices by formula \ref{eq:refformula}, namely
\begin{equation*}
\begin{gathered}
r_0 : \left\{ 
\begin{aligned}
\alpha_0 &\mapsto -\alpha_0, \\
\alpha_1 &\mapsto \alpha_1 + 2 \alpha_0, \\
\alpha_2 &\mapsto \alpha_2, \\
\alpha_3 &\mapsto \alpha_3, 
\end{aligned}
\right. 
\qquad 
r_1 : \left\{ 
\begin{aligned}
\alpha_0 &\mapsto \alpha_0 + 2 \alpha_1, \\
\alpha_1 &\mapsto -\alpha_1, \\
\alpha_2 &\mapsto \alpha_2 , \\
\alpha_3 &\mapsto \alpha_3, 
\end{aligned}
\right. 
\qquad 
\pi : \left\{ 
\begin{aligned}
\alpha_0 &\mapsto \alpha_1, \\
\alpha_1 &\mapsto \alpha_0, \\
\alpha_2 &\mapsto \alpha_2, \\
\alpha_3 &\mapsto \alpha_3.
\end{aligned}
\right. 
\\
r_2 : \left\{ 
\begin{aligned}
\alpha_0 &\mapsto \alpha_0, \\
\alpha_1 &\mapsto \alpha_1, \\
\alpha_2 &\mapsto - \alpha_2, \\
\alpha_3 &\mapsto \alpha_3 + 2 \alpha_2, 
\end{aligned}
\right. 
\qquad 
r_3 : \left\{ 
\begin{aligned}
\alpha_0 &\mapsto \alpha_0, \\
\alpha_1 &\mapsto \alpha_1, \\
\alpha_2 &\mapsto \alpha_2 + 2 \alpha_3, \\
\alpha_3 &\mapsto -\alpha_3, 
\end{aligned}
\right. 
\qquad 
\sigma : \left\{ 
\begin{aligned}
\alpha_0 &\mapsto \alpha_0, \\
\alpha_1 &\mapsto \alpha_1, \\
\alpha_2 &\mapsto \alpha_3, \\
\alpha_3 &\mapsto \alpha_2.
\end{aligned}
\right. 
\end{gathered}
\end{equation*}

The actions on $H_2(X,\Z)$ can be computed using the formula \eqref{H2H2}. 
\begin{remark}
    The non-standard normalisation $\langle \alpha,\alpha^{\vee}\rangle = 14$ for one of the copies of the $A_1^{(1)}$ root system is reminiscent of the symmetry types of surfaces in the Sakai classification with non-standard root lengths, e.g. surface type $R=A_6^{(1)}$ with symmetry type $R^{\perp} = (\underset{{\tiny |\alpha|^2= 14}}{A_1}+ A_1)^{(1)}$ in the notation of \cite{sakai}.
    Note that the action on $N(X)$ of the copy of $\widetilde{W}( A_1^{(1)} )$ corresponding to the $\underset{ \left< \alpha, \alpha^\vee \right> = 14}{A_1^{(1)}}$ root system is not directly by the formula \eqref{eq:refformula}. 
    It acts on the roots and coroots in the usual way, but not by the formula \eqref{eq:refformula} on the whole of $N(X)$ and this is why we provided the actions of $r_0, r_1$ explicitly in Figure \ref{fig:refsactiononcohom}.
    This is analogous to what happens in two-dimensional cases with non-standard root lengths, since reflections there will permute components of $D$.
    % Note also that in the case at hand the null roots of the two $A_1^{(1)}$ root systems are not the same, but rather related by a factor of 2, c.f. \eqref{eq:nullroots}.
\end{remark}
The proof that the elements of $\widetilde{W}( {A_1^{(1)} } ) \times \widetilde{W}(A_1^{(1)})$ provide Cremona isometries will be deferred to the next subsection when we realise them by pseudo-isomorphisms.
The reason for this is that, while conditions (1) and (2) in Definition \ref{def:cremonaisometry} are immediate, to establish that effectiveness is preserved requires more work. In the two-dimensional case this can be done using classical surface theory \cite[Sect. 6]{sakai}, but in higher dimensions it is more practical to realise them as pseudo-isomorphisms to show they preserve effectiveness.

\subsection{Cremona action by pseudo-isomorphisms}

We will realise the group \eqref{eq:weylgroup} as an action by pseudo-isomorphisms.
In parallel with the two-dimensional case \cite{fane1}, we will allow centres of blow-ups from the construction of $X$ to move, while keeping a choice of anticanonical divisor $D=\sum_{i}m_iD_i\in |-K_X|$ intact. 
The choice of effective anticanonical divisor we will choose to preserve is that given in \eqref{decomp}.
This will lead to a family of varieties
$$X_{\boldsymbol{b}},  \qquad \boldsymbol{b} = (b_3, b_7, b_{10}, b_{14}, b_{17}, b_{21}, b_{24}, b_{28}) \in \C^8,$$ 
containing $X$ and such that $\operatorname{dim}|-K_{X_{\boldsymbol{b}}}|=0$ for generic $\boldsymbol{b}$, with the unique effective anticanonical divisor of $X_{\boldsymbol{b}}$ being $D=\sum_i D_i$ as in \eqref{decomp}. 
The parameters introduced here are indexed according to which $V_i$ they allow to move. 
We give the centres of blow-ups to obtain $X_{\boldsymbol{b}}$ from $(\p^1)^{\times 4}$
explicitly in Figures \ref{fig:firstsevenblow-upsNA} -- \ref{fig:fourthsevenblow-upsNA}.
Note that we could use the $\p \operatorname{G L}(2,\C)^{\times 4}$ automorphism group of $(\p^1)^{\times 4}$ to normalise some of the parameters by change of coordinates. 
We choose not to do this at this stage, to leave as much parameter freedom as possible so it is easier to obtain actions on the family by pseudo-isomorphisms, which can be used to derive $q$-difference equations.

%%%FIRST SEVEN BLOWUPS

\begin{figure}[h]
    \centering
    \begin{tikzpicture}    
 \node at (0,0)
 {
 \begin{tikzcd}
    [column sep=-0.25in]
        % &       &       &       &  V_3 \arrow[d, phantom, sloped, "\longrightarrow"] &  & & \\
        &       &V_1 : \left(\begin{array}{c} 0 \\ * \\ \infty \\ \infty  \end{array}\right)   &       & V_2 : \left(\begin{array}{c} 0 \\ \infty \\ * \\ \infty  \end{array}\right)  &  & \quad \arrow[ll, phantom, "\leftarrow"] & \\
        &V_4 : \left(\begin{array}{c} 0 \\ * \\ \infty \\ *  \end{array}\right) \arrow[ur, phantom,sloped, "\supset"]  &       &V_5 : \left(\begin{array}{c} 0 \\ * \\ * \\ \infty  \end{array}\right) \arrow[ul, phantom,sloped, "\subset"] \arrow[ur, phantom,sloped, "\supset"] &       &V_6 : \left(\begin{array}{c} 0 \\ \infty \\ * \\ *  \end{array}\right)\arrow[ul, phantom,sloped, "\subset"] &  \qquad \qquad & V_7: \left(\begin{array}{c} 0 \\ b_{7} \\ * \\ * \end{array}\right)
    \end{tikzcd}
    };
    \node at (5,1.4) 
    {
    $V_3 : \left(\begin{array}{c} q_2 - \frac{b_3}{q_1 p_1}  \\ \frac{1}{q_1 p_2} \\ q_1 \\ q_2  \end{array}\right) = \left(\begin{array}{c} 0 \\ * \\ 0 \\ * \end{array}\right)$
    };
\end{tikzpicture}
    \caption{The first seven blow-ups in the construction of $X_{\boldsymbol{b}}$.}
    \label{fig:firstsevenblow-upsNA}
\end{figure}

%%% SECOND SEVEN BLOWUPS

\begin{figure}[h]
    \centering
    \begin{tikzpicture}    
 \node at (0,0)
 {
 \begin{tikzcd}
    [column sep=-0.25in]
        % &       &       &       &  V_3 \arrow[d, phantom, sloped, "\longrightarrow"] &  & & \\
        &       &V_8 : \left(\begin{array}{c} * \\ 0 \\ \infty \\ \infty  \end{array}\right)   &       & V_{9} : \left(\begin{array}{c} \infty \\ 0 \\ \infty \\ *  \end{array}\right)   &  & \quad \arrow[ll, phantom, "\leftarrow"] & \\
        &V_{11} : \left(\begin{array}{c} * \\ 0 \\ * \\ \infty  \end{array}\right) \arrow[ur, phantom,sloped, "\supset"]  &       &V_{12} : \left(\begin{array}{c} * \\ 0 \\ \infty \\ *  \end{array}\right) \arrow[ul, phantom,sloped, "\subset"] \arrow[ur, phantom,sloped, "\supset"] &       &V_{13} : \left(\begin{array}{c} \infty \\ 0 \\ * \\ *  \end{array}\right)\arrow[ul, phantom,sloped, "\subset"] &  \qquad \qquad & V_{14} : \left(\begin{array}{c} b_{14} \\ 0 \\ * \\ * \end{array}\right)
    \end{tikzcd}
    };
    \node at (5,1.4) 
    {
    $V_{10} : \left(\begin{array}{c} p_2 - \frac{b_{10}}{q_1 p_1}  \\ \frac{1}{p_1 q_2} \\ p_1 \\ p_2  \end{array}\right) = \left(\begin{array}{c} 0 \\ * \\ 0 \\ * \end{array}\right)$
    };
\end{tikzpicture}
    \caption{The second seven blow-ups in the construction of $X_{\boldsymbol{b}}$.}
    \label{fig:secondsevenblow-upsNA}
\end{figure}

%%%% THIRD SEVEN blow-upS

\begin{figure}[h]
    \centering
    \begin{tikzpicture}    
 \node at (0,0)
 {
 \begin{tikzcd}
    [column sep=-0.25in]
        % &       &       &       &  V_3 \arrow[d, phantom, sloped, "\longrightarrow"] &  & & \\
        &       &V_{15} : \left(\begin{array}{c} \infty \\ \infty \\ 0 \\ *  \end{array}\right)   &       & V_{16} : \left(\begin{array}{c} * \\ \infty \\ 0 \\ \infty  \end{array}\right)   &  & \quad \arrow[ll, phantom, "\leftarrow"] & \\
        &V_{18} : \left(\begin{array}{c} \infty \\ * \\ 0 \\ *  \end{array}\right) \arrow[ur, phantom,sloped, "\supset"]  &       &V_{19} : \left(\begin{array}{c} * \\ \infty \\ 0 \\ *  \end{array}\right) \arrow[ul, phantom,sloped, "\subset"] \arrow[ur, phantom,sloped, "\supset"] &       &V_{20} : \left(\begin{array}{c} * \\ * \\ 0 \\ \infty  \end{array}\right)\arrow[ul, phantom,sloped, "\subset"] &  \qquad \qquad & V_{21} : \left(\begin{array}{c}  * \\ * \\ 0 \\ b_{21}\end{array}\right)
    \end{tikzcd}
    };
    \node at (5.1,1.4) 
    {
    $V_{17} : \left(\begin{array}{c} q_1 - \frac{b_{17}}{q_2 p_2}  \\ \frac{1}{p_1 q_2} \\ q_2 \\ q_1  \end{array}\right) = \left(\begin{array}{c} 0 \\ * \\ 0 \\ * \end{array}\right)$
    };
\end{tikzpicture}
    \caption{The third seven blow-ups in the construction of $X_{\boldsymbol{b}}$.}
    \label{fig:thirdsevenblow-upsNA}
\end{figure}

%%%% FOURTH SEVEN blow-upS

\begin{figure}[h]
    \centering
    \begin{tikzpicture}    
 \node at (0,0)
 {
 \begin{tikzcd}
    [column sep=-0.25in]
        % &       &       &       &  V_3 \arrow[d, phantom, sloped, "\longrightarrow"] &  & & \\
        &       &V_{22} : \left(\begin{array}{c} \infty \\ \infty \\ * \\ 0  \end{array}\right)   &       & V_{23} : \left(\begin{array}{c} \infty \\ * \\ \infty \\ 0  \end{array}\right)   &  & \quad \arrow[ll, phantom, "\leftarrow"] & \\
        &V_{25} : \left(\begin{array}{c} * \\ \infty \\ * \\ 0  \end{array}\right) \arrow[ur, phantom,sloped, "\supset"]  &       &V_{26} : \left(\begin{array}{c} \infty \\ * \\ * \\ 0  \end{array}\right) \arrow[ul, phantom,sloped, "\subset"] \arrow[ur, phantom,sloped, "\supset"] &       &V_{27} : \left(\begin{array}{c} * \\ * \\ \infty \\ 0 \end{array}\right)\arrow[ul, phantom,sloped, "\subset"] &  \qquad \qquad & V_{28} : \left(\begin{array}{c}  * \\ * \\  b_{28} \\ 0\end{array}\right)
    \end{tikzcd}
    };
    \node at (5,1.4) 
    {
    $V_{24} : \left(\begin{array}{c} p_1 - \frac{b_{24}}{q_2 p_2}  \\ \frac{1}{q_1 p_2} \\ p_2 \\ p_1  \end{array}\right) = \left(\begin{array}{c} 0 \\ * \\ 0 \\ * \end{array}\right)$
    };
\end{tikzpicture}
    \caption{The fourth seven blow-ups in the construction of $X_{\boldsymbol{b}}$.}
    \label{fig:fourthsevenblow-upsNA}
\end{figure}

For each $w \in \widetilde{W}(A_1^{(1)}) \times \widetilde{W}(A_1^{(1)})$ we construct an action on the family of $X_{\boldsymbol{b}}$ by pseudo-isomorphisms  which induces $w$ as a Cremona isometry by pushforward.

% Here, the birational mapping is lifted to an isomorphism from $X_{\boldsymbol{b}}$ to $X_{\boldsymbol{b'}}$ , where suffix a denotes parameters fixing the centers of blow-ups. 
% Also we have to fix the decomposition (\ref{decomp}) of anticanonical divisor. In this case the family of isomorphisms form a deautonomised discrete dynamical system. 

\begin{remark}
    Note that, while a Cremona isometry is defined with reference to only a single variety $X$, the enumeration of blowups in the construction of $X_{\boldsymbol{b}}$ means that there is a natural identification of $N(X_{\boldsymbol{b}})$ and $N(X_{\boldsymbol{b}'})$ with $N(X)$, and we can associate a Cremona isometry to any bilattice isomorphism $N(X_{\boldsymbol{b}})\to N(X_{\boldsymbol{b}'})$ with properties corresponding to those in Definition \ref{def:cremonaisometry}.
    This is a technical detail which we will sometimes abuse notation to avoid.
\end{remark}

\begin{theorem} \label{th:cremonaction}
    We have a Cremona action on the family of varieties $X_{\boldsymbol{b}}$ by pseudo-isomorphisms which realises the Cremona isometries above.
    That is, for each $w\in \widetilde{W}(A_1^{(1)})\times\widetilde{W}(A_1^{(1)})$ acting on $N(X)$ as above, we have an action on parameters $\boldsymbol{b}\mapsto\boldsymbol{b}'$ and a pseudo-isomorphism $X_{\boldsymbol{b}}\to X_{\boldsymbol{b}'}$, such that its pushforward on $N(X)$ coincides with $w$.

    Below we give the actions on parameters, as well as birational self-maps of $(\p^1)^{\times 4}$ in affine coordinates $(q_1,p_1,q_2,p_2)$, which when lifted under the blow-ups give the required pseudo-isomorphisms.
     \begin{equation*}
        r_0 : \left\{ 
            \begin{aligned}
            q_1 &\mapsto q_2, \\
            p_1 &\mapsto \frac{1 - b_{28}^{-1} q_2}{q_1 q_2 p_2},\\
            q_2 &\mapsto q_1, \\
            p_2 &\mapsto \frac{1- b_{14}^{-1} q_1}{q_1 p_1 q_2},    
            \end{aligned}
            \qquad 
            \begin{aligned}
            b_3 &\mapsto b_{21}^{-1}, \\
            b_{7} &\mapsto b_{17}^{-1}, \\
            b_{10} &\mapsto b_{10}^{-1}b_{14}^{-1}b_{28}^{-1},  \\
            b_{14} &\mapsto b_{28}, 
            \end{aligned}
            \qquad 
            \begin{aligned}
            b_{17} &\mapsto b_{7}^{-1}, \\
            b_{21} &\mapsto b_3^{-1}, \\
            b_{24} &\mapsto b_{14}^{-1}b_{24}^{-1}b_{28}^{-1}, \\
            b_{28} &\mapsto b_{14}.
            \end{aligned}
        \right. 
    \end{equation*}

    \begin{equation*}
        r_1 : \left\{ 
            \begin{aligned}
            q_1 &\mapsto \frac{1 - b_{21}^{-1} p_2}{p_1 q_2 p_2}, \\
            p_1 &\mapsto p_2,\\
            q_2 &\mapsto \frac{1 - b_{7}^{-1} p_2}{q_1 p_1 p_2}, \\
            p_2 &\mapsto p_1,    
            \end{aligned}
            \qquad 
            \begin{aligned}
            b_3 &\mapsto b_3^{-1}b_7^{-1}b_{21}^{-1}, \\
            b_{7} &\mapsto b_{21}, \\
            b_{10} &\mapsto b_{28}^{-1},  \\
            b_{14} &\mapsto b_{24}^{-1}, 
            \end{aligned}
            \qquad 
            \begin{aligned}
            b_{17} &\mapsto b_{7}^{-1}b_{17}^{-1}b_{21}^{-1}, \\
            b_{21} &\mapsto b_{7}, \\
            b_{24} &\mapsto b_{14}^{-1}, \\
            b_{28} &\mapsto b_{10}^{-1}.
            \end{aligned}
        \right. 
    \end{equation*}
    
    \begin{equation*}
        \pi : \left\{ 
            \begin{aligned}
            q_1 &\mapsto p_1, \\
            p_1 &\mapsto q_1, \\
            q_2 &\mapsto p_2, \\
            p_2 &\mapsto q_2,
            \end{aligned}
        \qquad 
            \begin{aligned}
            b_{3} &\mapsto b_{10}, \\
            b_{7} &\mapsto b_{14}, \\
            b_{10} &\mapsto b_{3},  \\
            b_{14} &\mapsto b_{7}, 
            \end{aligned}
            \qquad 
            \begin{aligned}
            b_{17} &\mapsto b_{24}, \\
            b_{21} &\mapsto b_{28}, \\
            b_{24} &\mapsto b_{17}, \\
            b_{28} &\mapsto b_{21}.
            \end{aligned}
        \right. 
    \end{equation*}

    \begin{equation*}
        r_2 : \left\{ 
            \begin{aligned}
            q_1 &\mapsto 
            c q_1\frac{ q_1 p_1 q_2 p_2 -b_{24} q_1 -b_{17} p_1 +  b_{14} b_{24}}{q_1 p_1 q_2 p_2 - b_7 b_{14}^{-1}b_{17}  q_1- b_{17} p_1+  b_7 b_{17}}, \\
            p_1 &\mapsto 
            \frac{p_1}{c}\frac{q_1 p_1 q_2 p_2-b_{24} q_1 -b_{17} p_1 + b_{17} b_7}{ q_1 p_1 q_2 p_2 -b_{24}  q_1 - b_{7}^{-1}b_{14} b_{24} p_1+b_{14} b_{24}},\\
            q_2 &\mapsto 
            c q_2 \frac{ q_1 p_1 q_2 p_2 -b_{24}  q_1 - b_{7}^{-1} b_{14} b_{24} p_1 +b_{14} b_{24}}{ q_1 p_1 q_2 p_2  - b_7 b_{14}^{-1}b_{17}  q_1- b_{17} p_1 + b_7 b_{17} }, \\
            p_2 &\mapsto 
            \frac{p_2}{c}\frac{q_1 p_1 q_2 p_2 - b_7 b_{14}^{-1}b_{17}  q_1 - b_{17} p_1 + b_7 b_{17} }{  q_1 p_1 q_2 p_2 -b_{24}  q_1  -b_7^{-1} b_{14} b_{24} p_1+ b_{14} b_{24}},    
            \end{aligned}
            \qquad 
            \begin{aligned}
            b_3 &\mapsto c b_3, \\
            b_{7} &\mapsto c^{-1} b_{7}, \\
            b_{10} &\mapsto c^{-1} b_{10},  \\
            b_{14} &\mapsto c b_{14}, 
            \end{aligned}
            \qquad 
            \begin{aligned}
            b_{17} &\mapsto c^{-1} b_{17}, \\
            %OLD             b_{17} &\mapsto c b_7^{-1}b_{14}b_{24}, \\
            b_{21} &\mapsto c b_{21}, \\
            %OLD b_{21} &\mapsto c^{-1} b_7 b_{14}^{-1}b_{17}b_{21}b_{24}^{-1},
            b_{24} &\mapsto c b_{24}, \\
            %OLD b_{24} &\mapsto c^{-1} b_7 b_{14}^{-1} b_{17}, \\
            b_{28} &\mapsto c^{-1} b_{28},
            %OLD: b_{28} &\mapsto  c b_7^{-1} b_{14}b_{17}^{-1}b_{24} b_{28}
            \end{aligned}
        \right. 
    \end{equation*}
    where $c^2 = b_{7}b_{17}b_{14}^{-1}b_{24}^{-1}$, and in particular $r_2 : c\mapsto c^{-1}$.

    \begin{equation*}
        r_3 : \left\{ 
            \begin{aligned}
            q_1 &\mapsto 
           \tilde{c} q_1 \frac{ q_1 p_1 q_2 p_2 - b_{10}  q_2-b_{10}b_{21}^{-1}b_{28} p_2+ b_{10}  b_{28} }{q_1 p_1 q_2 p_2 -b_{3}b_{21}b_{28}^{-1} q_2 - b_{3}p_2+  b_{3}b_{21}}, \\
            p_1 &\mapsto 
            \frac{p_1}{\tilde{c} } \frac{  q_1 p_1 q_2 p_2- b_{3} b_{21} b_{28}^{-1} q_2- b_{3} p_2+ b_{3} b_{21}  }{ q_1 p_1 q_2 p_2-b_{10}  q_2-b_{10}b_{21}^{-1} b_{28} p_2+ b_{10}  b_{28}},\\
            q_2 &\mapsto 
            \tilde{c} q_2 \frac{q_1 p_1 q_2 p_2-b_{10} q_2-b_{3} p_2 +  b_{10} b_{28}}{q_1 p_1 q_2 p_2 -  b_{3}b_{21} b_{28}^{-1} q_2- b_{3} p_2+ b_{3}b_{21}  }, \\
            p_2 &\mapsto 
            \frac{p_2}{\tilde{c}} \frac{ q_1 p_1 q_2 p_2 -b_{10} q_2-b_{3} p_2 +b_{3} b_{21} }{ q_1 p_1 q_2 p_2 -b_{10} q_2-b_{10} b_{21}^{-1}b_{28} p_2 + b_{10}  b_{28}},    
            \end{aligned}
            \qquad 
            \begin{aligned}
            b_3 &\mapsto \tilde{c}^{-1} b_{3}, \\
            %OLD : b_3 &\mapsto \tilde{c} b_{10}b_{21}^{-1}b_{28}, \\
            b_{7} &\mapsto \tilde{c} b_{7} , \\
            b_{10} &\mapsto \tilde{c} b_{10},  \\
            b_{14} &\mapsto \tilde{c}^{-1} b_{14}, 
            \end{aligned}
            \qquad 
            \begin{aligned}
            b_{17} &\mapsto \tilde{c} b_{17}, \\
            b_{21} &\mapsto \tilde{c}^{-1} b_{21}, \\
            % OLD b_{21} &\mapsto \tilde{c}b_{3}^{-1}b_{10}b_{28}, \\
            b_{24} &\mapsto \tilde{c}^{-1} b_{24}, \\
            b_{28} &\mapsto \tilde{c} b_{28},
            \end{aligned}
        \right. 
    \end{equation*}
    where $\tilde{c}^2 = b_{3}b_{21}b_{10}^{-1}b_{28}^{-1}$, and in particular $r_3 : \tilde{c}\mapsto \tilde{c}^{-1}$.
    \begin{equation*}
        \sigma : \left\{ 
            \begin{aligned}
            q_1 &\mapsto q_2, \\
            p_1 &\mapsto p_2,\\
            q_2 &\mapsto q_1, \\
            p_2 &\mapsto p_1,    
            \end{aligned}
            \qquad 
            \begin{aligned}
            b_3 &\mapsto b_{17}, \\
            b_{7} &\mapsto b_{21}, \\
            b_{10} &\mapsto b_{24},  \\
            b_{14} &\mapsto b_{28}, 
            \end{aligned}
            \qquad 
            \begin{aligned}
            b_{17} &\mapsto b_3, \\
            b_{21} &\mapsto b_{7}, \\
            b_{24} &\mapsto b_{10}, \\
            b_{28} &\mapsto b_{14}.
            \end{aligned}
        \right. 
    \end{equation*}
\end{theorem}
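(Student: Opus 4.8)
The plan is to establish the statement one generator at a time, exploiting the structure of the group \eqref{eq:weylgroup}. Since $\widetilde{W}(A_1^{(1)})\times\widetilde{W}(A_1^{(1)})=\langle r_0,r_1,\pi\rangle\times\langle r_2,r_3,\sigma\rangle$, and the relations there give $r_1=\pi r_0\pi$ and $r_3=\sigma r_2\sigma$, it suffices to treat the four generators $r_0$, $r_2$, $\pi$, $\sigma$; the remaining two are then obtained by composition, using that a composite of pseudo-isomorphisms is again a pseudo-isomorphism and that pushforward is functorial. For each such $w$ I would take the birational self-map $\phi_w$ of $(\p^1)^{\times 4}$ and the parameter action $\boldsymbol b\mapsto\boldsymbol b'$ displayed in the statement, and prove (a) that $\phi_w$ lifts under the blow-up projections to a pseudo-isomorphism $X_{\boldsymbol b}\to X_{\boldsymbol b'}$, and (b) that its pushforward on $N(X)=H^2(X,\Z)\times H_2(X,\Z)$ coincides with $w$ as given in Figure \ref{fig:refsactiononcohom} and the reflection formula \eqref{eq:refformula}.

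To prove (a), I would work in the local coordinate charts set up in Section \ref{sec:blow-ups,pseudo-automorphisms} and track the images under $\phi_w$ of the $28$ centres $V_i(\boldsymbol b)$ listed in Figures \ref{fig:firstsevenblow-upsNA}--\ref{fig:fourthsevenblow-upsNA}. The goal is to show that $\phi_w$ carries the configuration of centres for $X_{\boldsymbol b}$ onto the configuration for $X_{\boldsymbol b'}$, respecting the inclusion and infinitely-near structure recorded in those figures (in particular matching the infinitely-near centres $V_3,V_{10},V_{17},V_{24}$ with their images), and that every prime divisor contracted by $\phi_w$ on $(\p^1)^{\times 4}$---detected by the vanishing of the Jacobian determinant---is the image of one of the coordinate hyperplanes or exceptional divisors. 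Once this bookkeeping is complete, no prime divisor is contracted by the lifted map, so by Proposition \ref{prop:nocontractionalgstable} the lift is an isomorphism in codimension one, i.e. a pseudo-isomorphism. The work is considerably reduced by the $(q_1,p_1,q_2,p_2)\mapsto(p_1,q_1,p_2,q_2)$ and $(q_1,p_1,q_2,p_2)\mapsto(q_2,p_2,q_1,p_1)$ symmetries, which are precisely $\pi$ and $\sigma$ and which permute the four groups of seven blow-ups, so that only one representative of each group need be examined in detail.

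With the geometry of (a) in hand, part (b) is read off directly: $H_{q_i},H_{p_i}$ transform according to how $\phi_w$ pulls back the coordinate hyperplanes, and each $E_i$ maps to the class of the image of its exceptional divisor, which the tracking identifies as an exceptional divisor of $X_{\boldsymbol b'}$ or a combination involving proper transforms of hyperplanes. Matching against Figure \ref{fig:refsactiononcohom} (for $r_0$, $\pi$, $\sigma$) and the formula \eqref{eq:refformula} (for $r_2$) confirms the pushforward, and the companion action on $H_2(X,\Z)$ follows from \eqref{H2H2}. Crucially, because each $w$ is now realised by a pseudo-isomorphism, its pushforward sends prime divisors to prime divisors and hence preserves the semigroup of effective classes; together with the preservation of the intersection form and of the components of $D$ (both immediate from the construction), this completes the previously deferred verification that the elements of \eqref{eq:weylgroup} are Cremona isometries in the sense of Definition \ref{def:cremonaisometry}. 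The group relations of \eqref{eq:weylgroup} are then checked at the level of birational maps by composing the explicit formulas together with their parameter actions; for instance the involutivity of $r_2$ and $r_3$ is reflected in the substitutions $c\mapsto c^{-1}$ and $\tilde c\mapsto\tilde c^{-1}$.

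The main obstacle will be part (a) for the genuine reflections $r_2$ and $r_3$. Unlike $\pi$, $\sigma$ (which are biregular coordinate permutations) and $r_0$, $r_1$ (which are close in form to $\varphi$ and its variants), the maps realising $r_2$ and $r_3$ are high-degree birational transformations with the cumbersome quartic numerators and denominators appearing in the statement, and verifying that they resolve to pseudo-isomorphisms requires a delicate local analysis at every centre---most notably near the infinitely-near centres $V_3,V_{10},V_{17},V_{24}$ and along the mutual intersections of the codimension-two centres, where the order of blow-ups fixed in Section \ref{sec:regularisation} must be respected. A secondary subtlety, to be handled with care rather than by the reflection formula, is that the non-standard normalisation $\langle\alpha,\alpha^\vee\rangle=14$ means the action of $r_0$ and $r_1$ on $N(X)$ is \emph{not} given by \eqref{eq:refformula} on all of the bilattice; for these the full tables in Figure \ref{fig:refsactiononcohom} must be confirmed directly.
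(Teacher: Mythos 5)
Your proposal is correct and matches the paper's own (very terse) proof, which consists precisely of the long but straightforward local-chart computations you describe: tracking the images of the 28 centres from Figures \ref{fig:firstsevenblow-upsNA}--\ref{fig:fourthsevenblow-upsNA}, exploiting the $\pi$- and $\sigma$-symmetries of the blow-up configuration, and reading off the pushforward to confirm the deferred effectiveness condition in Definition \ref{def:cremonaisometry}. Your added organisational points (deriving $r_1=\pi r_0\pi$ and $r_3=\sigma r_2\sigma$ from the group relations, and noting that involutivity handles the inverse-direction non-contraction check) are sound refinements of the same computation rather than a different route.
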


The proof of Theorem \ref{th:cremonaction} is by long but straightforward computations.
These are done in local charts for $X_{\boldsymbol{b}}$ introduced along the same lines as in Section \ref{subsec:blow-ups}, using the centres of blow-ups leading to $X_{\boldsymbol{b}}$ as given in Figure \ref{fig:firstsevenblow-upsNA}--\ref{fig:fourthsevenblow-upsNA}.

\subsection{4D Discrete Painlev\'e equations}

In the Sakai framework, discrete Painlev\'e equations arise from the Cremona action of translation elements of the extended affine Weyl group of Cremona isometries. 
To account for all non-autonomous mappings which are integrable in the sense of quadratic degree growth one must relax the translation requirement and allow elements of infinite order \cite{Mase}.
Famous early examples of discrete Painlev\'e equations often turn out to correspond to such non-translation elements, e.g. \cite{dP2}. 
Since the subgroup of translations is of finite index these elements of infinite order are always `quasi-translations', in the sense that they become translations when raised to some power. Non-autonomous mappings with `translational' parameter evolution (in an additive, multiplicative, or elliptic sense depending on the surface type) can still be obtained from quasi-translations via the process of projective reduction \cite{projectivereduction}.
These quasi-translations can often be understood (see e.g. \cite{yangtranslations}) as translations with respect to the affine Weyl group structure of some subgroup of symmetries compatible with the corresponding projective reduction condition, so one can still say that discrete Painlev\'e equations come from translations.

We now use the Cremona action in Theorem \ref{th:cremonaction} to construct fourth-order analogues of discrete Painlev\'e equations from translation elements.
We have two copies of $W(A_1^{(1)})$, each of which contains a subgroup of translations.
The translation part of the first copy $W(A_1^{(1)})=\langle r_0,r_1,\pi\rangle$ is generated by, e.g., $\pi \,r_0$, which acts on the roots by

\begin{equation*}
\pi \, r_0 : 
\left(\begin{array}{c}\alpha_0 \\ \alpha_1 \\ \alpha_2 \\ \alpha_3\end{array}\right)  \mapsto \left(\begin{array}{c}\alpha_0-\delta \\ \alpha_1+\delta\\ \alpha_2 \\ \alpha_3\end{array}\right), \qquad \delta = - K_X = \alpha_0 + \alpha_1.
\end{equation*}
% This coincides with the pushforward \eqref{pushforwardcohom} of the autonomous mapping \eqref{Map1}.
\begin{proposition}
    The Cremona action of the translation $\pi \,r_0$
    % , which is a translation element of the extended affine Weyl group $W(A_1^{(1)})=\langle r_0,r_1,\pi\rangle$, 
    gives  
    \begin{equation} \label{eq:4DqP1cremona1}
        \pi \,r_0 : \left\{ 
            \begin{aligned}
            q_1 &\mapsto \frac{1 - b_{28}^{-1} q_2}{q_1 q_2 p_2}, \\
            p_1 &\mapsto q_2,\\
            q_2 &\mapsto \frac{1 - b_{14}^{-1} q_1}{q_1 p_1 q_2}, \\
            p_2 &\mapsto q_1,    
            \end{aligned}
            \qquad 
            \begin{aligned}
            b_3 &\mapsto b_{10}^{-1}b_{14}^{-1}b_{28}^{-1}, \\
            b_{7} &\mapsto b_{28}, \\
            b_{10} &\mapsto b_{21}^{-1},  \\
            b_{14} &\mapsto b_{17}^{-1}, 
            \end{aligned}
            \qquad 
            \begin{aligned}
            b_{17} &\mapsto b_{14}^{-1}b_{24}^{-1}b_{28}^{-1}, \\
            b_{21} &\mapsto b_{14}, \\
            b_{24} &\mapsto b_{7}^{-1}, \\
            b_{28} &\mapsto b_{3}^{-1}.
            \end{aligned}
        \right. 
    \end{equation}
    This can be used to derive the following system of $q$-difference equations (in which we denote the shift parameter by $\lambda$ instead of $q$):
    \begin{equation} \label{eq:4DqP1}
\left\{
\begin{aligned}
\bar{q}_1 &= \frac{1 - a_1 t q_2}{q_1 q_2 p_2}, \\
\bar{p}_1 &= q_2, \\
\bar{q}_2 &= \frac{1 - a_2 t q_1}{q_1 p_1 q_2}, \\
\bar{p}_2 &= q_1,
\end{aligned}
\right.
    \end{equation}
where $a_1,a_2$ are parameters, $q_1=q_1(t)$, $p_1=p_1(t)$, $q_2=q_2(t)$, $p_2=p_2(t)$ and we denote $\lambda$-shift in $t$ by $\bar{q}_1=q_1( \lambda t)$, $\bar{p}_1=p_1(\lambda t)$, $\bar{q}_2=q_2( \lambda t)$, $\bar{p}_2=p_2( \lambda t)$.
\end{proposition}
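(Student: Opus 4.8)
The plan is to proceed in two stages: first to establish the closed form \eqref{eq:4DqP1cremona1} for the Cremona action of the translation $\pi\,r_0$, and then to linearise the induced parameter dynamics in order to recover the $q$-difference system \eqref{eq:4DqP1}. The first stage is essentially bookkeeping. Reading $\pi\,r_0$ as ``apply $r_0$, then $\pi$'', I would substitute the birational map $r_0$ of Theorem \ref{th:cremonaction} into $\pi$ on the level of the affine coordinates $(q_1,p_1,q_2,p_2)$, and separately compose the two permutation-and-inversion actions on the parameter tuple $\boldsymbol b$. Both are mechanical; the only point requiring care is to fix the conventions (pushforward versus pullback, and the left-to-right ordering) consistently so that $(\pi\,r_0)_*=\pi_*\circ r_{0*}$, and then to confirm that the composite coordinate map has the stated shape, with the coefficients $b_{28}^{-1}$ and $b_{14}^{-1}$ appearing in the first and third components, and that the composite parameter action matches the one displayed in \eqref{eq:4DqP1cremona1}.

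The substance of the proposition lies in the second stage. Here I would use that $\pi\,r_0$ is a genuine translation, acting on roots by $\alpha_0\mapsto\alpha_0-\delta$, $\alpha_1\mapsto\alpha_1+\delta$ with $\delta=-K_X$, so its action on parameter space should be linearisable into a single multiplicative shift. Concretely, writing $w=\pi\,r_0$ for the parameter action, I would seek functions $b_i=b_i(t)$ of a time variable $t$ satisfying the linearisation condition $w(\boldsymbol b(t))=\boldsymbol b(\lambda t)$, where the constant $\lambda$ is to be identified with the multiplicative shift (the ``$q$'' of the equation). Making the monomial ansatz $b_i(t)=\beta_i\,t^{m_i}$ turns this into a finite linear system for the exponents $m_i$ together with a multiplicative system for the coefficients $\beta_i$. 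Solving the former, normalised so that $b_{28}^{-1}$ and $b_{14}^{-1}$ are proportional to $t$, gives $m_i=+1$ for $i\in\{3,10,17,24\}$ and $m_i=-1$ for $i\in\{7,14,21,28\}$; solving the latter then expresses all $\beta_i$ in terms of $\lambda$ and two remaining free constants.

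With such a parametrisation in hand I would set $a_1=\beta_{28}^{-1}$ and $a_2=\beta_{14}^{-1}$, so that the two coefficients occurring in \eqref{eq:4DqP1cremona1} become $b_{28}^{-1}(t)=a_1 t$ and $b_{14}^{-1}(t)=a_2 t$. Interpreting the pseudo-isomorphism $X_{\boldsymbol b(t)}\to X_{\boldsymbol b(\lambda t)}$ as the evolution $(q_i(t),p_i(t))\mapsto(q_i(\lambda t),p_i(\lambda t))=(\bar q_i,\bar p_i)$ and substituting, the map \eqref{eq:4DqP1cremona1} turns termwise into \eqref{eq:4DqP1}, with $\bar p_1=q_2$ and $\bar p_2=q_1$ coming directly from the permutation part. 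The three free constants $(a_1,a_2,\lambda)$ match exactly the two parameters and the shift of the target system.

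The one genuine check, which I expect to be the main obstacle, is the compatibility of this single-time parametrisation: one must verify that the two relevant coefficients can be made proportional to the \emph{same} $t$ with the \emph{same} shift $\lambda$, while the remaining six parameters evolve consistently, i.e.\ that the linearisation system $w(\boldsymbol b(t))=\boldsymbol b(\lambda t)$ is solvable rather than overdetermined. This reduces to confirming that, along the orbit, $w$ sends $b_{28}^{-1}\mapsto\lambda\,b_{28}^{-1}$ and $b_{14}^{-1}\mapsto\lambda\,b_{14}^{-1}$; this follows from the relations $(w(\boldsymbol b))_{28}^{-1}=b_3=\lambda a_1 t$ and $(w(\boldsymbol b))_{14}^{-1}=b_{17}=\lambda a_2 t$ produced by the monomial solution, which are precisely the consistency conditions making $b_{28}^{-1}(t)=a_1t$ and $b_{14}^{-1}(t)=a_2t$ compatible with the substitution $t\mapsto\lambda t$.
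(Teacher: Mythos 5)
Your proposal is correct and follows essentially the same route as the paper: the coordinate and parameter maps in \eqref{eq:4DqP1cremona1} are obtained by composing the $r_0$ and $\pi$ entries of Theorem \ref{th:cremonaction} (with the ordering $(\pi r_0)^* = r_0^*\circ\pi^*$, which checks out against the displayed formulas), and the $q$-difference system is obtained by a monomial parametrisation $b_i=\beta_i t^{m_i}$ solving the linearisation condition $w(\boldsymbol{b}(t))=\boldsymbol{b}(\lambda t)$ --- which is exactly the paper's sufficient condition $b_{10}b_{14}=b_{24}b_{28}=(b_{17}b_{21})^{-1}=(b_3b_7)^{-1}$ followed by its explicit parametrisation. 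The only discrepancy is that your exponent signs are opposite to the paper's displayed parametrisation (the two solutions of the linear system for the $m_i$ are related by $t\mapsto 1/t$, $\lambda\mapsto 1/\lambda$, $a_i\mapsto a_i^{-1}$, and both are valid); your normalisation $b_{28}^{-1}=a_1t$, $b_{14}^{-1}=a_2t$ is in fact the one that reproduces \eqref{eq:4DqP1} verbatim.
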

We regard the system \eqref{eq:4DqP1} as a deautonomised version of the mapping \eqref{Map1}, since the action by pushforward of \eqref{eq:4DqP1cremona1} on the N\'eron-Severi bilattice coincides with that in equation \eqref{pushforwardcohom} induced by \eqref{Map1}.
% Further, the autonomous mapping \eqref{Map1} is recovered when we specialise parameters to recover $X$ from $X_{\boldsymbol{b}}$ by
% \begin{equation}
%         \begin{aligned}
%             b_3 &=-1, \\
%             b_{7} &= b,
%         \end{aligned}
%         \qquad
%         \begin{aligned}
%             b_{10} &= -1,  \\
%             b_{14} &= b, 
%         \end{aligned}
%             \qquad 
%         \begin{aligned}
%             b_{17} &= -1, \\
%             b_{21} &= b, 
%         \end{aligned}
%         \qquad
%         \begin{aligned}
%             b_{24} &= -1, \\
%             b_{28} &= b,
%         \end{aligned}
% \end{equation}
\begin{remark}
    Though $\pi \,r_0$ is a translation element of $W(A_1^{(1)})$, and as such acts on the roots $\alpha_i$ as shifts by the null root $\delta$, both its action on $N(X)$ and the action on parameters $b_i$ are not purely (multiplicative) translational. 
    The former fact is related to the non-standard normalisation of the corresponding $A_1^{(1)}$ root system, and similar phenomena can also be observed in discrete Painlev\'e equations from surfaces in the classification whose symmetry types have non-standard root lengths.
    The latter fact, that the action on parameters is non-translational, is more subtle. 
    In the two-dimensional case, the period map construction ensures that surfaces can be parametrised by root variables, on which the Cremona action corresponds directly to the action on roots. 
    We do not yet have a geometric construction of root variables or an analogue of the Torelli-type theorem \cite[Th. 25]{sakai} in higher dimensions, so we cannot say how much of the non-translational action on the $b_i$'s can be removed while staying in the same isomorphism class of $X_{\boldsymbol{b}}$.
\end{remark}
    Nevertheless, one can derive \eqref{eq:4DqP1} by specialising parameters $b_i$ such that the action in \eqref{eq:4DqP1cremona1} becomes translational, in the same spirit as projective reduction.
    In the case at hand we see that the following conditions are sufficient for `translational' parameter evolution:
    \begin{equation}
        % b_{10} b_{14} = \lambda^2,  \quad b_{24} b_{28} = \lambda^2,  \quad b_{17} b_{21} = \frac{1}{\lambda^2} \quad b_{3} b_{7} = \frac{1}{\lambda^2}
        b_{10} b_{14} = b_{24} b_{28} = \frac{1}{b_{17} b_{21}} = \frac{1}{ b_{3} b_{7}}.
    \end{equation}
    % \begin{equation}
    %     \frac{b_{10}}{b_{24}}  = \frac{b_{28}}{b_{14}}, \qquad  \frac{ b_{3}}{ b_{21}} = \frac{b_{17}}{ b_{7}}, \qquad b_{24} b_{28} b_{17}b_{21} = 1,
    % \end{equation}
    So parametrising $b_i$ by $t$ according to
    \begin{equation}
        \begin{aligned}
            b_3 &= \frac{1}{a_1 \lambda t}, \\
            b_{7} &= \frac{a_1 t}{\lambda},
        \end{aligned}
        \qquad
        \begin{aligned}
            b_{10} &= \frac{\lambda^2}{a_2 t},  \\
            b_{14} &= a_2 t, 
        \end{aligned}
            \qquad 
        \begin{aligned}
            b_{17} &= \frac{1}{a_2 \lambda t}, \\
            b_{21} &= \frac{a_2 t}{\lambda}, 
        \end{aligned}
        \qquad
        \begin{aligned}
            b_{24} &= \frac{\lambda^2}{a_1 t}, \\
            b_{28} &= a_1 t,
        \end{aligned}
    \end{equation}
    the evolution of $b_i$ in \eqref{eq:4DqP1cremona1} is induced by $t\mapsto \lambda t$.

% We regard the system \eqref{eq:4DqP1} as a deautonomised version of the mapping \eqref{Map1}, since the action by pushforward of \eqref{eq:4DqP1cremona1} on the N\'eron-Severi bilattice coincides with that in equation \eqref{pushforwardcohom} induced by \eqref{Map1}.
% Further, the autonomous mapping \eqref{Map1} is recovered when we specialise parameters to recover $X$ from $X_{\boldsymbol{b}}$ by
% \begin{equation}
%         \begin{aligned}
%             b_3 &=-1, \\
%             b_{7} &= b,
%         \end{aligned}
%         \qquad
%         \begin{aligned}
%             b_{10} &= -1,  \\
%             b_{14} &= b, 
%         \end{aligned}
%             \qquad 
%         \begin{aligned}
%             b_{17} &= -1, \\
%             b_{21} &= b, 
%         \end{aligned}
%         \qquad
%         \begin{aligned}
%             b_{24} &= -1, \\
%             b_{28} &= b,
%         \end{aligned}
% \end{equation}
    
The translation part of the second copy $W(A_1^{(1)})=\langle r_2,r_3,\sigma\rangle$ is generated by, e.g., $\sigma \,r_2$, which acts on the roots by
\begin{equation*}
\sigma \, r_2 : \left(\begin{array}{c}\alpha_0 \\ \alpha_1 \\ \alpha_2 \\ \alpha_3\end{array}\right)  \mapsto \left(\begin{array}{c}\alpha_0 \\ \alpha_1 \\ \alpha_2-\delta \\ \alpha_3+\delta\end{array}\right), \qquad  \delta = - K_X = \alpha_2 + \alpha_3.
\end{equation*}

% \begin{equation*}
% \begin{gathered}
% \pi \circ r_1 : 
% \left(\begin{array}{c}\alpha_0 \\ \alpha_1 \\ \alpha_2 \\ \alpha_3\end{array}\right)  \mapsto \left(\begin{array}{c}\alpha_0-\delta \\ \alpha_1+\delta\\ \alpha_2 \\ \alpha_3\end{array}\right), \\
%  \delta = - K_X = \alpha_0 + \alpha_1.
%  \end{gathered}
% \end{equation*}
\begin{proposition}
    The Cremona action of the translation $\sigma \,r_2$ gives 
    \begin{equation} \label{eq:4DqP1cremona2}
        \sigma \,r_2 : \left\{ 
            \begin{aligned}
            q_1 &\mapsto 
            c q_2\frac{q_1 p_1 q_2 p_2-b_{24} q_1 - b_{7}^{-1}b_{14} b_{24} p_1 + b_{14} b_{24}}{ q_1 p_1 q_2 p_2 -  b_7 b_{14}^{-1}b_{17}  q_1 - b_{17} p_1+b_{7} b_{17}}, \\
            p_1 &\mapsto 
            \frac{p_2}{c}\frac{  q_1 p_1 q_2 p_2 -  b_7 b_{14}^{-1}b_{17}  q_1 - b_{17} p_1+b_{7} b_{17}}{q_1 p_1 q_2 p_2-b_{24} q_1 - b_{7}^{-1}b_{14} b_{24} p_1 + b_{14} b_{24}}
            ,\\
            q_2 &\mapsto c q_1 \frac{q_1 p_1 q_2 p_2 - b_{24}  q_1 - b_{17} p_1 + b_{14} b_{24} }{  q_1 p_1 q_2 p_2 - b_7 b_{14}^{-1}b_{17}  q_1  -b_{17} p_1+ b_{7} b_{17}}
            , \\
            p_2 &\mapsto 
            \frac{p_1}{c} \frac{ q_1 p_1 q_2 p_2 -b_{24}  q_1 - b_{17} p_1 +b_{7} b_{17}}{ q_1 p_1 q_2 p_2  - b_{24}  q_1- b_{7}^{-1}b_{14} b_{24} p_1 + b_{14} b_{24} }
            ,    
            \end{aligned}
            \qquad 
            \begin{aligned}
            b_3 &\mapsto c^{-1} b_{17}, \\
            b_{7} &\mapsto c b_{21}, \\
            b_{10} &\mapsto c b_{24},  \\
            b_{14} &\mapsto c^{-1} b_{28}, 
            \end{aligned}
            \qquad 
            \begin{aligned}
            b_{17} &\mapsto c b_3, \\
            %OLD             b_{17} &\mapsto c b_7^{-1}b_{14}b_{24}, \\
            b_{21} &\mapsto c^{-1} b_7, \\
            %OLD b_{21} &\mapsto c^{-1} b_7 b_{14}^{-1}b_{17}b_{21}b_{24}^{-1},
            b_{24} &\mapsto c^{-1} b_{10}, \\
            %OLD b_{24} &\mapsto c^{-1} b_7 b_{14}^{-1} b_{17}, \\
            b_{28} &\mapsto c b_{14},
            %OLD: b_{28} &\mapsto  c b_7^{-1} b_{14}b_{17}^{-1}b_{24} b_{28}
            \end{aligned}
        \right. 
    \end{equation}
    where again $c^2 = b_{7}b_{17}b_{14}^{-1}b_{24}^{-1}$, and in particular $\sigma \,r_2 : c\mapsto c^2 \tilde{c}$, where again $\tilde{c}^2 = b_{3}b_{21}b_{10}b_{28}^{-1}$.
    
    This can be used to derive the following system of $q$-difference equations (in which we denote the shift parameter by $\mu$ instead of $q$) in which $\gamma_1,\gamma_2,\gamma_3,\gamma_4$ are constant parameters and $\bar{~}$ indicates $q$-shift $t\mapsto \mu t$, with $\mu = \sqrt{\frac{\gamma_1 \gamma_2}{\gamma_3 \gamma_4}}$:
\begin{equation} \label{eq:4DqP12}
\left\{
\begin{aligned}
% \bar{q}_1 &=  \mu t^2  q_2\frac{ q_1 p_1 q_2 p_2-\gamma_3 t^{-1} q_1 - \mu^{-2} \gamma_1  t^{-3} p_1 + \gamma_3  \gamma_4 t^{-2}}{ q_1 p_1 q_2 p_2 -  \gamma_3 \mu^2 t^3  q_1 - \gamma_1 t p_1+\gamma_1  \gamma_2 t^2}, \\
\bar{q}_1 &=  \mu  q_2\frac{ t^2 q_1 p_1 q_2 p_2-\gamma_3 t q_1 - \mu^{-2} \gamma_1  t^{-1} p_1 + \gamma_3  \gamma_4}{ q_1 p_1 q_2 p_2 -   \mu^2 \gamma_3 t^3  q_1 - \gamma_1 t p_1+\gamma_1  \gamma_2 t^2}, \\
\bar{p}_1 &= \frac{p_2}{\mu }\frac{  q_1 p_1 q_2 p_2 -  \mu^2 \gamma_3   t^3  q_1 - \gamma_1 t p_1+\gamma_1 \gamma_2 t^2}{ t^2 q_1 p_1 q_2 p_2-\gamma_3 t q_1 - \mu^{-2} \gamma_1  t^{-1} p_1 + \gamma_3  \gamma_4 }, \\
\bar{q}_2 &= \mu  q_1 \frac{t^2 q_1 p_1 q_2 p_2 - \gamma_3 t  q_1 - \gamma_1 t^3 p_1 + \gamma_3  \gamma_4 }{  q_1 p_1 q_2 p_2 - \mu^2 \gamma_3   t^3  q_1  -\gamma_1 t p_1+ \gamma_1  \gamma_2 t^2}, \\
\bar{p}_2 &= \frac{p_1}{\mu} \frac{ q_1 p_1 q_2 p_2 -\gamma_3 t^{-1}  q_1 - \gamma_1 t p_1 +\gamma_1  \gamma_2 t^2}{ t^2 q_1 p_1 q_2 p_2  - \gamma_3 t  q_1- \mu^{-2} \gamma_1  t^{-1} p_1 + \gamma_3 \gamma_4 }.
\end{aligned}
\right. 
    \end{equation}

%% OLD WITH TYPO
% \begin{equation} \label{eq:4DqP12}
% \left\{
% \begin{aligned}
% \bar{q}_1 &= \frac{q_2}{\mu}\frac{q_1 p_1 q_2 p_2- \gamma_3 t^{-1} q_1 -\gamma_1 t p_1 + \gamma_1\gamma_2 t^2}{ t^2 q_1 p_1 q_2 p_2 - \gamma_3 t  q_1 -  \mu^{-2} \gamma_1 t^{-1} p_1+\gamma_3\gamma_4 }, \\
% \bar{p}_1 &= \mu p_2\frac{ q_1 p_1 q_2 p_2 - \gamma_3 t^{-1} q_1 - \gamma_1 t p_1 + \gamma_3 \gamma_4 t^{-2}}{ t^{-2} q_1 p_1 q_2 p_2 - \gamma_3 \mu^{2} t q_1- \gamma_1 t^{-1} p_1+  \gamma_1 \gamma_2}, \\
% \bar{q}_2 &= \frac{q_1}{ \mu t^{-2} }\frac{t^{-2} q_1 p_1 q_2 p_2 - \gamma_3 \mu^{2} t q_1 - \gamma_1 t^{-1} p_1 + \gamma_1 \gamma_2 }{ t^2 q_1 p_1 q_2 p_2 - \gamma_3 t  q_1 -  \mu^{-2} \gamma_1 t^{-1} p_1+\gamma_3\gamma_4}, \\
% \bar{p}_2 &= \mu t^{-2} p_1 \frac{ t^2 q_1 p_1 q_2 p_2 - \gamma_3 t  q_1 -  \mu^{-2} \gamma_1 t^{-1} p_1+\gamma_3\gamma_4}{t^{-2} q_1 p_1 q_2 p_2 - \gamma_3 \mu^{2} t q_1 - \gamma_1 t^{-1} p_1 + \gamma_1 \gamma_2 },
% \end{aligned}
% \right.
% \qquad \mu = \sqrt{\frac{\gamma_1 \gamma_2}{\gamma_3 \gamma_4}}, 
%     \end{equation}
\end{proposition}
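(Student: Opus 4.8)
The proof has two parts: first establishing the explicit birational realisation \eqref{eq:4DqP1cremona2} of the translation $\sigma\, r_2$, and then extracting the $q$-difference system \eqref{eq:4DqP12}. The plan for the first part is to exploit the fact that $\sigma\, r_2$ is a composite of generators whose pseudo-isomorphic realisations are already recorded in Theorem \ref{th:cremonaction}. Since a composition of pseudo-isomorphisms is again a pseudo-isomorphism, and pushforward is compatible with composition, the composite $X_{\boldsymbol{b}}\to X_{\sigma r_2(\boldsymbol{b})}$ automatically realises the Cremona isometry $\sigma\, r_2$ and acts on $N(X)$ by the composite of the tabulated actions of $r_2$ and $\sigma$; in particular it sends $\alpha_2\mapsto\alpha_2-\delta$ and $\alpha_3\mapsto\alpha_3+\delta$ with $\delta=-K_X$, confirming it is the claimed translation, exactly as in the $\pi\, r_0$ case. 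Concretely I would substitute the coordinate formula for $r_2$ into that for $\sigma$ (in the order fixed by Theorem \ref{th:cremonaction}, namely apply $r_2$ first and then the coordinate swap), and likewise compose the two parameter actions, then simplify. Because the coordinate part of $\sigma$ is the parameter-free interchange $q_1\leftrightarrow q_2,\ p_1\leftrightarrow p_2$, this composition amounts to swapping the two components of the output of $r_2$, which directly produces the four expressions in \eqref{eq:4DqP1cremona2}; composing the parameter actions yields the listed evolution of the $b_i$.

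The one genuinely delicate point in this first part is the bookkeeping of the auxiliary quantity $c$, defined only up to sign by $c^2=b_7 b_{17}b_{14}^{-1}b_{24}^{-1}$. Since $r_2$ rescales the $b_i$ by powers of $c$, the value of $c^2$ is itself changed by the map, and tracking how $c$ transforms under the composite --- arriving at $\sigma\, r_2: c\mapsto c^2\tilde{c}$ with $\tilde{c}^2=b_3 b_{21}b_{10}b_{28}^{-1}$ --- requires a consistent choice of branch throughout. I would verify this relation by evaluating $c^2$ on the transformed parameters and matching, which is routine once the branch is fixed.

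For the second part the plan is to deautonomise exactly as in the $\pi\, r_0$ case: I would seek a parametrisation of the eight $b_i$ by a single variable $t$ (and constants $\gamma_1,\dots,\gamma_4$) under which the parameter evolution induced by $\sigma\, r_2$ becomes the pure multiplicative shift $t\mapsto\mu t$. Reading off the evolution in \eqref{eq:4DqP1cremona2} and imposing that each $b_i(t)$ map to $b_i(\mu t)$ pins down the parametrisation (up to the $\mathrm{PGL}_2$ freedom we deliberately retained), and forces $\mu=\sqrt{\gamma_1\gamma_2/(\gamma_3\gamma_4)}$ together with $c$ becoming a definite function of $t$ of the form $\mu t^2$. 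Substituting this parametrisation into \eqref{eq:4DqP1cremona2} and clearing the resulting powers of $t$ between numerators and denominators then yields \eqref{eq:4DqP12}, the explicit $t$-weights appearing there being precisely the factors produced by this clearing.

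The main obstacle I anticipate is reconciling the non-translational, $c$-dependent parameter action with a genuinely translational reparametrisation: because $c$ evolves as $c\mapsto c^2\tilde{c}$ rather than trivially, the overall prefactor $c$ in \eqref{eq:4DqP1cremona2} becomes the $t$-dependent quantity $\mu t^2$ rather than the constant $\mu$ seen in \eqref{eq:4DqP12}, and it is only after the homogenising factors of $t$ are absorbed into the rational parts --- thereby generating the various powers of $t$ and the factors $\mu^{\pm1}$ in \eqref{eq:4DqP12} --- that the system takes its stated closed form. Verifying that these $t$-powers distribute consistently across all four components, so that the bar-shifted system is well defined and agrees with \eqref{eq:4DqP12}, is the crux of the computation; the remainder is the lengthy but routine algebra of composing and simplifying the rational maps.
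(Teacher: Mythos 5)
Your proposal is correct and follows essentially the same route as the paper: the formula \eqref{eq:4DqP1cremona2} is obtained by composing the explicit realisations of $r_2$ and $\sigma$ from Theorem \ref{th:cremonaction} (with the parameter action, including the branch of $c$, composed accordingly), and the system \eqref{eq:4DqP12} then follows by imposing the translational constraints $b_7/b_{21}=b_{17}/b_3=b_{10}/b_{24}=b_{28}/b_{14}=t^2$, parametrising the $b_i$ by $t$ and the constants $\gamma_1,\dots,\gamma_4$ so that $c=\mu t^2$, and absorbing the resulting powers of $t$ into the rational expressions. The delicate points you flag (the evolution $c\mapsto c^2\tilde{c}$ and the distribution of $t$-weights) are exactly the ones the paper's computation has to handle, and your checks resolve them correctly.
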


    Similarly to the derivation of the system of $q$-difference equations \eqref{eq:4DqP1} through appropriate parameter constraints such that \eqref{eq:4DqP1cremona1} becomes translational, for the Cremona action of $\sigma\,r_2$ we find the following sufficient conditions for translational parameter evolution:
    \begin{equation}
        \frac{b_{7}}{b_{21}} = \frac{b_{17}}{b_{3}}= \frac{b_{10}}{b_{24}}= \frac{b_{28}}{b_{14}} = t^2,
    \end{equation}
    Parametrising $b_i$ by $t$ according to
    % \begin{equation}
    %     \begin{aligned}
    %         b_3 &= \frac{\gamma_1}{t}, &\quad &b_{10} = \gamma_{10} t,  &\quad &b_{17} = \gamma_{17} t  &\quad &b_{24}= \gamma_{24} t,\\
    %         b_7 &= \gamma_7 t, &\quad &b_{14} = \frac{\gamma_{14}}{t},  &\quad &b_{21} = \frac{\gamma_{21}}{t}  &\quad &b_{28}= \gamma_{28} t
    %     \end{aligned}
    % \end{equation}
    \begin{equation}
        \begin{aligned}
            b_3 &= \frac{\gamma_{1}}{t}, &\quad &b_{10} = \gamma_{3} t,  &\quad &b_{17} = \gamma_{1} t  &\quad &b_{24}= \frac{\gamma_{3}}{t},\\
            b_7 &= \gamma_2 t, &\quad &b_{14} = \frac{\gamma_{4}}{t},  &\quad &b_{21} = \frac{\gamma_{2}}{t}  &\quad &b_{28}= \gamma_{4} t,
        \end{aligned}
    \end{equation}
    for constants $\gamma_1,\gamma_2,\gamma_3, \gamma_4$, so $c = \sqrt{\frac{\gamma_1\gamma_{2}}{\gamma_3 \gamma_4}}t^2$, means the evolution of $b_i$ in \eqref{eq:4DqP1cremona2} is induced by $t\mapsto \mu t$, with $\mu = \sqrt{\frac{\gamma_1\gamma_{2}}{\gamma_3 \gamma_4}}$.
    This gives the system of $q$-difference equations \eqref{eq:4DqP12}.

    \begin{remark}
        It is straightforward to check from the action of $\sigma\, r_2$ on $N(X)$ that the system \eqref{eq:4DqP12} exhibits quadratic degree growth and is integrable in the sense of vanishing entropy. 
        Since the two copies of $\widetilde{W}(A_1^{(1)})$ commute, the systems \eqref{eq:4DqP12} and \eqref{eq:4DqP1} are symmetries of each other.
    \end{remark}

\section{Conclusion}

The main motivation of this paper is the expected general geometric theory of discrete Painlev\'e equations in higher dimensions, extending the celebrated Sakai framework.
To this end, the purpose of this particular study is to add to the collection of examples of higher-order analogues of discrete Painlev\'e equations for which the geometry of their spaces of initial conditions has been worked out.

We began with an autonomous fourth-order discrete system which naturally extends the autonomous degenerate case of a $q$-discrete Painlev\'e equation of surface type $A_7^{(1)}$ in the Sakai classification.
This is defined by a birational self-map of $(\p^1)^{\times 4}$ and constructed its space of initial conditions, a rational variety of which the map becomes a pseudo-automorphism, obtained by 28 blow-ups. 
The geometry of the resulting variety $X$ was used to find conserved quantities as well as establishing that it exhibits quadratic degree growth. 

After making a choice of effective anticanonical divisor of $X$, we embedded $X$ into a family of varieties leaving this intact, which admits an action of $\widetilde{W}(A_1^{(1)}) \times \widetilde{W}(A_1^{(1)})$ by pseudo-isomorphims, described in terms of  root system structures in the Neron-Severi bi-lattice of type $(\underset{ \left< \alpha, \alpha^\vee \right> = 14}{A_1 }+ A_1)^{(1)}$. 
From this we constructed two fourth-order analogues of $q$-discrete Painlev\'e equations from the actions of translation elements of $\widetilde{W}(A_1^{(1)}) \times \widetilde{W}(A_1^{(1)})$, one of which is a deautonomisation of the autonomous fourth-order system we started with.
It would be interesting to compare these with the higher-order analogues of $q$-discrete Painlev\'e equations of surface types $A_6^{(1)}$ and $A_7^{(1)}$ constructed in \cite{Masuda-Okubo-Tsuda-2021}.

However, we are at this stage still far from a general theory of higher-order discrete Painlev\'e equations in terms of rational varieties. 
We need more examples (in particular ones that are not as directly connected to second-order discrete Painlev\'e equations) as well as inputs from other perspectives on higher discrete Painlev\'e equations from related topics, e.g. cluster algebras, complex dynamics, higher-dimensional birational geometry,
before a general theory based on an analogue of a generalised Halphen surface in higher dimensions can be achieved.
\subsection*{Acknowledgements}

%AS would like to thank Ralph Willox and Takafumi Mase for useful discussions. 
AS was supported by the Japan Society for the Promotion of Science (JSPS) through KAKENHI Grant-in-Aid (24K22843). 
TT was supported by JSPS through KAKENHI Grand-in-Aid (C) (22K03383).

% \bibliographystyle{amsalpha}
% \vspace{-.6em}

% \begin{thebibliography}{SKKT00}

\appendix 
\section{Order of blow-ups and divisor class groups}\label{order_blow-ups}

In our study of the singular locus near $q_1=0$, the inclusion relations as indicated in Figure \ref{fig:firstsevenblow-ups} were observed, and we performed blow-ups corresponding to the divisors $V_4$, $V_5$, and $V_6$ after those corresponding to $V_1$ and $V_2$. A natural question arises: \emph{Is it possible to perform the blow-ups of $V_4$, $V_5$, and $V_6$ prior to those of $V_1$ and $V_2$?} The answer is affirmative, as we explain here.

To elaborate, consider a birational map 
\[
f: X \dashrightarrow Y,
\]
between smooth algebraic varieties, and let 
\[
\pi: \tilde{X} \to X \quad \text{and} \quad \pi': \tilde{X}' \to X
\]
be sequences of blow-ups chosen so that the induced maps 
\[
\tilde{f} = f\circ \pi: \tilde{X} \dashrightarrow Y \quad \text{and} \quad \tilde{f}' = f\circ \pi': \tilde{X}' \dashrightarrow Y
\]
resolve the indeterminacies of $f$ in codimension one. If the generic points of the exceptional divisors corresponding to $\pi$ and $\pi'$ agree, then the hypotheses of the following well-known proposition are met by $g: \tilde{X} \dashrightarrow \tilde{X}'$ that acts trivially on generic points. In particular, the divisor class groups $\operatorname{Cl}(\tilde{X})$ and $\operatorname{Cl}(\tilde{X}')$ are isomorphic. In the case of $X$ being rational, since the N\'eron-Severi bilattice is determined solely by the divisor class group, it follows that the corresponding N\'eron-Severi bilattices are isomorphic as well.
\begin{proposition}[Isomorphism of Divisor Class Groups]\label{prop:iso_divisor}
Let $X$ and $Y$ be normal varieties and let
\[
g: X \dashrightarrow Y
\]
be a birational map. Suppose there exist open subsets $U\subset X$ and $V\subset Y$ such that
\[
\operatorname{codim}(X\setminus U)\ge2 \quad \text{and} \quad \operatorname{codim}(Y\setminus V)\ge2,
\]
and such that the restriction
\[
g|_U: U \xrightarrow{\sim} V
\]
is an isomorphism. Then the induced homomorphism
\[
g_*: \operatorname{Cl}(X)\to\operatorname{Cl}(Y)
\]
between the divisor class groups is an isomorphism.
\end{proposition}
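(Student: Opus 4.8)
The plan is to reduce the statement to the standard excision property that the class group of a normal variety is unchanged by deleting a closed subset of codimension at least two, and then to transport classes across the isomorphism $g|_U$. Concretely, I would exhibit the desired isomorphism as the composite
\begin{equation*}
\operatorname{Cl}(X) \xrightarrow{\;\sim\;} \operatorname{Cl}(U) \xrightarrow{\;(g|_U)_*\;} \operatorname{Cl}(V) \xrightarrow{\;\sim\;} \operatorname{Cl}(Y),
\end{equation*}
where the two outer arrows are restriction isomorphisms and the middle arrow is induced by the isomorphism $g|_U$, and then verify that this composite coincides with the pushforward $g_*$.

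First I would establish the key lemma: if $X$ is normal and $U\subset X$ is open with $\operatorname{codim}(X\setminus U)\ge 2$, then the restriction map $\operatorname{Cl}(X)\to\operatorname{Cl}(U)$ is an isomorphism. This is a standard excision property for class groups of normal varieties, but I would recall the argument. On the level of Weil divisor groups, restriction $\operatorname{Div}(X)\to\operatorname{Div}(U)$, $D\mapsto D\cap U$, is an isomorphism of free abelian groups: every prime divisor of $X$ meets $U$ in a nonempty (hence prime) divisor, since $X\setminus U$ has codimension $\ge 2$ and so contains no codimension-one subvariety, and conversely every prime divisor of $U$ extends uniquely to its Zariski closure in $X$. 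Moreover this isomorphism carries principal divisors to principal divisors, because $X$ and $U$ share the same function field and the order of vanishing of a rational function along a prime divisor depends only on the discrete valuation attached to the local ring at its generic point, which lies in $U$. Passing to quotients gives $\operatorname{Cl}(X)\cong\operatorname{Cl}(U)$, and symmetrically $\operatorname{Cl}(Y)\cong\operatorname{Cl}(V)$.

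Applying the lemma to $U\subset X$ and $V\subset Y$, and using that $g|_U:U\to V$ is an isomorphism (so induces $\operatorname{Cl}(U)\cong\operatorname{Cl}(V)$), the displayed composite is an isomorphism. The remaining point --- which I expect to be the only genuinely delicate step --- is to check that this composite really is the pushforward $g_*$ defined via proper transforms. Here the hypotheses are exactly what is needed: since $X\setminus U$ has codimension $\ge 2$, no prime divisor of $X$ is contracted by $g$ (each one meets $U$, where $g$ is an isomorphism), so $g_*$ sends a prime divisor $D$ to the Zariski closure of $g(D\cap U)$, which is precisely its image under the composite. Well-definedness of $g_*$ on classes then follows from the matching of principal divisors established in the lemma. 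This identifies the composite with $g_*$ and completes the proof.
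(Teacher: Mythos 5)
Your proof is correct and follows essentially the same route as the paper's: both arguments rest on the codimension-$\ge 2$ hypothesis forcing every prime divisor of $X$ (resp.\ $Y$) to meet $U$ (resp.\ $V$), and on the isomorphism $g|_U$ matching orders of vanishing of rational functions, hence principal divisors. Your factoring through the excision isomorphisms $\operatorname{Cl}(X)\cong\operatorname{Cl}(U)$ and $\operatorname{Cl}(Y)\cong\operatorname{Cl}(V)$ merely repackages the paper's direct injectivity/surjectivity argument, with the small benefit that bijectivity of the composite is automatic.
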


%We remark, however, that this result does not imply the existence of a codimension-one isomorphism between the spaces of initial values, nor does it guarantee that the Picard groups of the varieties are isomorphic (for instance, one may perform extraneous blow-ups).

\begin{proof}
Let $g: X \dashrightarrow Y$ be a birational map that restricts to an isomorphism 
\[
g|_U: U\to V
\]
on open subsets $U\subset X$ and $V\subset Y$ with $\operatorname{codim}(X\setminus U)\ge2$ and $\operatorname{codim}(Y\setminus V)\ge2$. Since every prime divisor in $X$ has its generic point contained in $U$, we can define a pushforward map
\[
g_*:\operatorname{Div}(X)\to\operatorname{Div}(Y)
\]
by sending a prime divisor $E\subset X$ (with $E\cap U\neq\varnothing$) to the closure in $Y$ of $g(E\cap U)$. For any rational function $\phi\in K(Y)$, the isomorphism $g|_U$ ensures that
\[
\operatorname{ord}_E\bigl(g^*\phi\bigr)=\operatorname{ord}_{g_*(E)}(\phi).
\]
It follows that
\[
g_*\bigl(\operatorname{div}_X(g^*\phi)\bigr)=\operatorname{div}_Y(\phi),
\]
so that $g_*$ sends principal divisors to principal divisors. Consequently, $g_*$ induces a well-defined homomorphism
\[
g_*:\operatorname{Cl}(X)\to\operatorname{Cl}(Y).
\]
Injectivity follows from the fact that if $g_*(D)$ is trivial in $\operatorname{Cl}(Y)$, then $D$ must be principal. Surjectivity is obtained by applying the same argument to the inverse map $g^{-1}$ on $V$. 
\end{proof}

In our specific case, we consider the birational map
\[
f: \bigl(\mathbb{P}^1\bigr)^{\times 4} \dashrightarrow \bigl(\mathbb{P}^1\bigr)^{\times 4}.
\]
Let $E_i$ for $i=1,\dots,6$ denote the total transforms of the exceptional divisors obtained by performing the blow-ups in the order $1,2,3,4,5,6$, and let $E_i'$ denote the corresponding total transforms when the blow-ups are performed in the order $4,5,6,1,2,3$. In the former case, the only inclusion in the base locus is
\[
V_3 \subset E_2,
\]
so that the prime divisors in the exceptional part of $\tilde{X}$ are given by $E_1$, $E_3$, $E_4$, $E_5$, $E_6$, and the difference $E_2 - E_3$. In contrast, in the latter case the base locus exhibits the inclusion relations
\[
V_1 \subset E_4', \quad V_2 \subset E_5', \quad \text{and} \quad V_3 \subset E_2',
\]
so that the prime divisors in the exceptional part of $\tilde{X}'$ are $E_1'$, $E_3'$, $E_6'$ and the differences $E_4'-E_1'$, $E_5'-E_2'$, and $E_2'-E_3'$.

The anticanonical divisors also correspond through a base change as
\[
-K_{\tilde{X}} = -\pi^*\Bigl(K_{(\mathbb{P}^1)^{\times 4}}\Bigr)
- 2 E_1 - 2 E_2
- E_3 -E_4-E_5-E_6,
\]
and
\begin{align*}
-K_{\tilde{X}'} &= -(\pi')^*\Bigl(K_{(\mathbb{P}^1)^{\times 4}}\Bigr)
-2 E_1' -2\,E _2' -E_3'
 -(E_4'-E_1')-(E_5'-E_2')-E_6'\\
&=- (\pi')^*\Bigl(K_{(\mathbb{P}^1)^{\times 4}}\Bigr)
-\sum_{i=1}^6 E_i',
\end{align*}
where
\[
-\pi^*\Bigl(K_{(\mathbb{P}^1)^{\times 4}}\Bigr)=2H_{q_1}+2H_{q_2}+2H_{p_1}+2H_{p_2}.
\]

\end{document}